\documentclass[12pt]{article}

\newcommand{\QR}{\mathrm{QR}}

\usepackage[most]{tcolorbox}
\newtcolorbox{yellowbox}{colback=yellow!20, colframe=yellow!80!black}

\newtcolorbox{redbox}{
  colback=red!5!white,
  colframe=red!50!black
}

\usepackage{soul}
\usepackage{changepage}

\usepackage[margin=1.125in]{geometry}
\usepackage{xcolor}
\definecolor{DarkGreen}{rgb}{0.1,0.5,0.1}
\definecolor{DarkRed}{rgb}{0.5,0.1,0.1}
\definecolor{DarkBlue}{rgb}{0.1,0.1,0.5}

\usepackage[small]{caption}
\usepackage[pdftex]{hyperref}
\hypersetup{
    unicode=false,          
    pdftoolbar=true,        
    pdfmenubar=true,        
    pdffitwindow=false,      
    pdfnewwindow=true,      
    colorlinks=true,       
    linkcolor=DarkBlue,          
    citecolor=DarkGreen,        
    filecolor=DarkGreen,      
    urlcolor=DarkBlue,          
    %
    %
    pdftitle={},
    pdfauthor={},    
}
\usepackage{amssymb,amsmath,amsthm,amsfonts,enumitem,mathtools}
\usepackage{blkarray}
\usepackage{fullpage,nicefrac,comment}
\usepackage{tikz}
\usetikzlibrary{arrows,decorations.pathmorphing,decorations.shapes,patterns,positioning}
\usepackage[ruled,linesnumbered]{algorithm2e}









\newcommand{\argmax}{\mathrm{argmax}}


\newcommand{\defeq}{\vcentcolon=}

\newcommand{\eps}{\varepsilon}
\renewcommand{\epsilon}{\varepsilon}






\newtheorem{theorem}{Theorem}  
\newtheorem*{theorem*}{Theorem}
\newtheorem{lemma}[theorem]{Lemma} 
\newtheorem{assumption}[theorem]{Assumption} 
\newtheorem*{lemma*}{Lemma}
\newtheorem{definition}[theorem]{Definition}

\newtheorem{observation}[theorem]{Observation}

\newtheorem{corollary}[theorem]{Corollary} 

\newtheorem{remark}{Remark}
\newtheorem{claim}[theorem]{Claim}

\usepackage{thmtools}
\usepackage{thm-restate}
\usepackage[capitalise]{cleveref}

\title{Limitations to Computing Quadratic Functions on Reed-Solomon Encoded Data}
\author{Keller Blackwell\thanks{Stanford University. Email: kellerb@stanford.edu.} \ and Mary Wootters\thanks{Stanford University. Email: marykw@stanford.edu.}}

\begin{document}
\maketitle

\begin{abstract}
    We study the problem of low-bandwidth non-linear computation on Reed-Solomon encoded data. Given an $[n,k]$ Reed-Solomon encoding of a message vector $\mathbf{f} \in \mathbb{F}_q^k$, and a polynomial $g \in \mathbb{F}_q[X_1, X_2, \ldots, X_k]$, a user wishing to evaluate $g(\mathbf{f})$ is given local query access to each codeword symbol. The query response is allowed to be the output of an arbitrary function evaluated locally on the codeword symbol, and the user's aim is to minimize the total information downloaded in order to compute $g(\mathbf{f})$.  This problem has been studied before for \emph{linear} functions $g$; in this work we initiate the study of non-linear functions by starting with quadratic monomials.

    For $q = p^e$ and distinct $i,j \in [k]$, we show that any scheme evaluating the quadratic monomial $g_{i,j} \defeq X_i X_j$ must download at least $2 \log_2(q-1) - 3$ bits of information when $p$ is an odd prime, and at least $2\log_2(q-2) -4$ bits when $p=2$. When $k=2$, our result shows that one cannot do significantly better than the naive bound of $k \log_2(q)$ bits, which is enough to recover all of $\mathbf{f}$. This contrasts sharply with prior work for low-bandwidth evaluation of \emph{linear} functions $g(\mathbf{f})$ over Reed-Solomon encoded data, for which it is possible to substantially improve upon this bound~\cite{GW16,TYB18,SW21,Kiah24,CT22}.

\end{abstract}

\section{Introduction}
Suppose that data $\mathbf{f} \in \mathbb{F}_q^k$ is encoded with an error correcting code to produce a vector $\mathbf{c} \in \mathbb{F}_q^n$, for some $n > k$.  The problem of \emph{low-bandwidth computation} on top of the error correction is to compute some function $g(\mathbf{f})$, given access to $\mathbf{c}$, with limited \emph{bandwidth}.  That is, for some parameter $s \in \mathbb{Z}^+$, given an arbitrary $I \subseteq [n]$ of size $|I| = s$, we are allowed to query an arbitrary function $\lambda_i(c_i)$ of the symbol of $c_i$ for each $i \in I$.  The goal is to compute $g(\mathbf{f})$, while minimizing the number of bits queried (that is, the sum of $ \log_2(|\lambda_i(c_i)|)$ over all $i \in I$).

Variants of this problem arise organically in many domains,  including \emph{regenerating codes} in distributed storage~(e.g., \cite{DGW10,dimakis_survey}); \emph{homomorphic secret sharing} (e.g, \cite{C:Benaloh86a, BGI16a, BGILT18, FIKW22}) and 
\emph{low-bandwidth secret sharing} (e.g., \cite{Wang2008OnSR, Huang16, HB17, ZHANG2012106}) in secret sharing;  and \emph{coded computation} in distributed computing (e.g., \cite{LLPPR18,DCG19,YLRKSA19}); these connections---and implications of our work in these domains---are elaborated in Section~\ref{sec: related works}.

We study the case when the error correcting code is a Reed-Solomon (RS) code. In RS codes, the codeword symbols are indexed by $n$ evaluation points $\alpha \in \mathbb{F}_q$; the data $\mathbf{f} \in \mathbb{F}_q^k$ is interpreted as a polynomial $f \in \mathbb{F}_q[x]$ of degree at most $k-1$, given by $f(x) = \sum_{i=0}^{k-1} {f}_i x^i$. The corresponding encoding $\mathbf{c} \in \mathbb{F}_q^n$, called a \emph{codeword}, is indexed by $n$ distinct evaluation points $\alpha \in \mathbb{F}_q$, and is given by ${c}_\alpha = f(\alpha)$.  Reed-Solomon codes are a classical error correcting code, ubiquitous in both theory and practice. Relevant to the domains mentioned above, RS codes are used in distributed storage (e.g.,~\cite{HDFS,Ceph}); in secret sharing as \emph{Shamir's scheme}~\cite{Shamir79}; and for coded computation (e.g., \cite{YMA17,YLRKSA19}). When the error correcting code is any MDS\footnote{MDS, or \emph{Maximum Distance Separable} codes, are codes of dimension $k$ and length $n$ with the best possible distance $n-k+1$.  In particular, in an MDS code, any $k$ symbols of the codeword uniquely determine the message.} code (including an RS code), the naive approach to compute $g(\mathbf{f})$ is to first recover $\mathbf{f}$ entirely by querying any $k$ symbols of the codeword in full; one can then compute $g(\mathbf{f})$ for any function $g$. This approach has bandwidth $k  \log_2(q) $ bits. 

The natural question is whether one can do better.  For RS codes, prior work has shown that the answer is \emph{yes} when $g:\mathbb{F}_q^k \to \mathbb{F}_q$ is a linear function. \emph{Regenerating codes} capture the special case of this problem where $g(\mathbf{f})$ is the function $g_\alpha(\mathbf{f}) = f(\alpha)$ for some $\alpha \in \mathbb{F}_q$.  A long line of work~(e.g., \cite{Shanmugam14,GW17,TYB18,CT22}) has established that one can compute $g_\alpha(\mathbf{f})$ for any $\alpha$ that appears as an evaluation point in the RS code, using substantially fewer than $k \log_2(q)$ bits in bandwidth.  In certain parameter regimes~\cite{TYB18,CT22}, the bandwidth can even get close to $\log_2(q)$ bits, the minimum number of bits required to represent $g_\alpha(\mathbf{f}) \in \mathbb{F}_q$.

For arbitrary linear functions $g:\mathbb{F}_q^k \to \mathbb{F}_q$, it is again possible to use asymptotically less than $k \log_2(q)$ bits, at least in some parameter regimes~\cite{SW21, Kiah24}.  \cite{SW21} showed that, given a full-length RS code of dimension $k = (1-\eps)n$ over an extension field and query access to $(1 - \gamma)n$ nodes for any $\gamma < \eps$, one can evaluation \emph{any} linear $g(\mathbf{f})$ with bandwidth $O(n/(\eps - \gamma))$.  When $\eps, \gamma$ are constants, this is a factor of $\Omega(\log n)$ less than the naive bound. The work \cite{Kiah24} extends the techniques of \cite{SW21} to consider linear $g(\mathbf{f})$ which are some linear combination of $\ell \leq k$ codeword symbols; that is, when $g(\mathbf{f}) = \sum_{j=1}^\ell \kappa_j f(\alpha_{i_j})$ for some choice of $i_1, \ldots, i_\ell \in [n]$. For RS codes over $\mathbb{F}_q=\mathbb{F}_{p^e}$, \cite{Kiah24} downloads $O(\ell p^{e-1} \log_2(p))$ bits, outperforming the naive $k \log_2(q) = k e \log_2(p)$ when $\ell \ll k$ and $p,e$ are small.

These results demonstrate that it is possible to improve on the naive scheme when evaluating  \emph{linear} functions $g(\mathbf{f})$. The next natural question is whether this is possible for \emph{non-linear functions}. That is, our question is:

\begin{center}
\begin{adjustwidth}{2cm}{2cm}
\itshape 
\centering
What is the minimal necessary bandwidth to compute \textbf{non-linear} functions $g : \mathbb{F}_q^k \to \mathbb{F}_q$ of Reed-Solomon encoded data?
\end{adjustwidth}
\end{center}

\paragraph{Main Result in a Nutshell.} We study the simplest instance of non-linear functions: computing quadratic monomials $g$ on top of dimension $k$ Reed-Solomon codes.  We work over arbitrary finite fields $\mathbb{F}_q = \mathbb{F}_{p^e}$. As discussed above, the data $\mathbf{f} = (f_0, f_1, \ldots, f_{k-1})$ represents a degree $\deg(f) \leq k-1$ polynomial over $\mathbb{F}_q$, and we consider the task of computing a quadratic monomial $g_{i,j}(\mathbf{f})\defeq f_i f_j$, for  $i,j \in [0,k-1]$. 

In this setting, one might hope to be able to do better than $2 \log_2 (q)$, the number of bits needed to represent both $f_i$ and $f_j$ separately; the goal would be to get closer to $\log_2(q)$, the number of bits needed to represent $f_i \cdot f_j$.  However, we show that this is not possible!

\vspace{.5cm}
Our main result,  Theorem~\ref{thm: full-case bound} below, implies that for all $i,j \in [0,k-1]$, $i \neq j$, any scheme computing $g_{i,j}(\mathbf{f}) = f_if_j$ must download at least $2  \log_2(q-1)  - 3$ bits when $p$ is an odd prime; or $2 \log_2(q-2) - 4$ bits when $p=2$.\footnote{Note that such a scheme need not compute \emph{every} quadratic monomial; just being able to compute one such monomial suffices for the lower bound to hold.} 
    This is nearly the full $2 \log_2(q)$ bits needed to represent {both} $f_i$ and $f_j$ separately.
    While the lower bound holds for all $k$, when $k=2$ this bound implies the impossibility of computing quadratic functions with download bandwidth even a few bits less than the naive bound of $k \log_2(q)$.

\vspace{0.5cm}

We view our results---which we state in more detail in Section~\ref{sec:results} below---as an important first step towards addressing the question above about general nonlinear computation.  When $k=2$, our results have the surprising implication that one cannot meaningfully improve on the naive bound of $k\log_2(q)$ for quadratic monomials, in contrast with the case for linear functions.  As discussed more in Sections~\ref{sec:results} and \ref{sec: related works}, our results also shed interesting light in related domains, including regenerating codes, low-bandwidth secret sharing, leakage resilience, and homomorphic secret sharing. 

\subsection{Quadratic Monomial (QM) Recovery}

We now formalize our problem. We consider the simplest setting for non-linear evaluation, which is computing quadratic monomials on top of Reed-Solomon codes. 

\begin{definition}[Reed-Solomon (RS) codes of dimension $k$ \cite{RS60}]
    Let $\mathbb{F}_q$ denote the finite field of order $q$ and let $n=q$. The corresponding (full length) Reed-Solomon code of dimension $k$ is the vector space
    \begin{equation*}
        \mathrm{RS}_q[n,k] \defeq \left\lbrace \langle f(\alpha) \rangle_{\alpha \in \mathbb{F}_q} : f \in \mathbb{F}_q[x], \deg(f) < k \right\rbrace. 
    \end{equation*}
\end{definition}

A dimension-$k$ RS code encodes a message vector $\mathbf{f} = (f_0, f_1, \ldots, f_{k-1})$ as evaluations of the polynomial $f(x) = \sum_{i=0}^{k-1} f_i x^i$. For some $i, j \in [0,k-1]$, we wish to compute $g_{i,j}(\mathbf{f}) \defeq f_i f_j$. To formalize the model described informally above, we first define a \emph{leakage function}, which outputs a single bit.  

\begin{definition}[Leakage Function]\label{def: leakage function}
    For $A \subseteq \mathbb{F}_q$, the (bit-valued) leakage function $\lambda = \lambda(A): \mathbb{F}_q \to \lbrace 0, 1 \rbrace$ is given by 
    \begin{equation*}
        \lambda(x) = \begin{cases}
            0 & x \in A\\
            1 & \text{else}
        \end{cases}
    \end{equation*}
\end{definition}

Each server may evaluate any (non-negative) number of leakage functions as part of a scheme to compute $g_{i,j}(\mathbf{f})$.  We formalize this as follows.

\begin{definition}[Quadratic monomial recovery]\label{def: QM}
    Let $t,s,k\in \mathbb{Z}^+$ and $i,j \in [0,k-1]$. We say that there exists a $t$-bit, $s$-server \textit{\textbf{Q}uadratic \textbf{M}onomial recovery scheme} (\textbf{QM}) for $g_{i,j}$ and RS codes of dimension $k$ if for every choice of $S  \subseteq \mathbb{F}_q$ with $|S| = s$, there exists
    
    \begin{itemize}
        \item a sequence of $\alpha_1, \ldots, \alpha_t \in S$, not necessarily distinct;
        \item leakage functions $\lambda_z: \mathbb{F}_q \to \lbrace 0, 1 \rbrace$, $z \in [t]$;
        \item and a reconstructing function $\mathrm{Rec}: \lbrace 0, 1 \rbrace^t \to \mathbb{F}_q$
    \end{itemize}
    such that 
    \begin{equation}
        \mathrm{Rec}\left( \lambda_{z} \left( f(\alpha_z) \right)  :  z \in [t] \right) = g_{i,j}(\mathbf{f})
    \end{equation}
    for all $f \in \mathbb{F}_q[x]$, $\deg(f) \leq k-1$. Given a $t$-bit, $s$-server QM scheme, we call the parameter $t$ the download bandwidth of the scheme.
\end{definition}

\subsection{Our Results}\label{sec:results}

Our main result is the following.

\begin{restatable}{theorem}{fullcasebound}\emph{(Main Theorem) }
\label{thm: full-case bound}Let $k \geq 2$ and fix $i,j \in [0,k-1]$, $i\neq j$. Fix $\mathbb{F}_q = \mathbb{F}_{p^e}$; let $s \geq 3$, and suppose there exists a $t$-bit, $s$-server QM scheme (Definition \ref{def: QM}) for $g_{i,j}$ and RS codes of dimension $k$ over $\mathbb{F}_q$. Then the download bandwidth satisfies 
    \begin{equation}\label{eqn: p case bound}
        t \geq \begin{cases}
            2\log_2(q-1) - 3 & p > 2\\
            2\log_2(q-2) - 4 & p = 2.
        \end{cases}
    \end{equation}
\end{restatable}

As noted above, this result shows that computing the quadratic monomial $g_{i,j} = f_i f_j$ requires download bandwidth nearly equal to the cost of representing both $f_i, f_j \in \mathbb{F}_q$ separately.  When  $k=2$, 
Theorem \ref{thm: full-case bound} shows that naive polynomial interpolation is essentially optimal for the problem of computing quadratic monomials. 

Additionally, we establish an even stronger lower bound for leakage functions that are linear over the base field $\mathbb{F}_p$.  This is notable because prior work on computing \emph{linear} functions $g$ often crucially relies on $\mathbb{F}_p$-linear leakage functions~\cite{GW17,TYB18,SW21,Kiah24}.\footnote{We note that the method used to aggregate the leakage functions may be non-linear, and so it is not trivial that the leakage functions must be non-linear to compute a non-linear function.}

\begin{theorem}[Informal; see Theorem \ref{thm: linear eval impossible}]\label{thm: linear eval impossible (informal)}
    Let $k \geq 2$ and $i,j \in \left[ 0, k-1 \right]$ $\mathbb{F}_q$ denote an arbitrary extension field of order $p^e$ where $e \geq 2$. Suppose each server is restricted to evaluating $\mathbb{F}_p$-linear functions $g_i:\mathbb{F}_q \to \mathbb{F}_p$ on their codeword symbols. Then there does not exist any $t$-bit, $s$-server QM scheme for $g_{i,j}$ and RS codes of dimension $k$ with download bandwidth $t$ satisfying $t < 2 \log_2(q)$ bits.
\end{theorem}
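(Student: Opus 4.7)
The plan is to bundle all of the servers' $\mathbb{F}_p$-linear queries into a single $\mathbb{F}_p$-linear ``view'' of the message and argue that this view must be lossless. Concretely, suppose a scheme uses queries $g_1,\ldots,g_T$ with each $g_i:\mathbb{F}_q\to\mathbb{F}_p$ being $\mathbb{F}_p$-linear and evaluated at $f(\alpha_i)=m\alpha_i+b$ for some $\alpha_i\in\mathbb{F}_q^*$. Since each $g_i$ is $\mathbb{F}_p$-linear and $(m,b)\mapsto m\alpha_i+b$ is $\mathbb{F}_q$-linear (hence $\mathbb{F}_p$-linear), the aggregated map
\[
\Phi:\mathbb{F}_q^2\to\mathbb{F}_p^T,\qquad \Phi(m,b)\defeq\bigl(g_1(m\alpha_1+b),\ldots,g_T(m\alpha_T+b)\bigr)
\]
is $\mathbb{F}_p$-linear. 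Each $\mathbb{F}_p$-symbol costs $\lceil\log_2 p\rceil$ bits to transmit, so the total bandwidth is $t=T\lceil\log_2 p\rceil$; it suffices to lower bound $T$.

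The central step is to show $\ker\Phi=0$. Given $(\Delta m,\Delta b)\in\ker\Phi$, the $\mathbb{F}_p$-linearity of $\Phi$ gives $\Phi(m+\Delta m,b+\Delta b)=\Phi(m,b)$ for every $(m,b)\in\mathbb{F}_q^2$, so the reconstruction function returns the same value on both inputs and therefore $(m+\Delta m)(b+\Delta b)=mb$ for all $m,b$. Expanding yields the identity
\[
m\,\Delta b + b\,\Delta m + \Delta m\,\Delta b \;=\; 0 \qquad \text{for all } m,b\in\mathbb{F}_q.
\]
Setting $m=b=0$ gives $\Delta m\,\Delta b=0$, so the identity collapses to $m\,\Delta b+b\,\Delta m=0$ for all $m,b\in\mathbb{F}_q$; varying $m$ with $b=0$ forces $\Delta b=0$, and varying $b$ with $m=0$ forces $\Delta m=0$. (Alternatively, the displayed identity is a polynomial of degree one in each of $m,b$, so its vanishing on $\mathbb{F}_q^2$ directly forces all coefficients to zero.) Hence $\Phi$ is injective as an $\mathbb{F}_p$-linear map.

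Since $\dim_{\mathbb{F}_p}\mathbb{F}_q^2=2e$, injectivity yields $T\geq 2e$, so $t\geq 2e\lceil\log_2 p\rceil\geq 2\lceil e\log_2 p\rceil = 2\lceil\log_2 q\rceil$, where the middle step uses $e\lceil\log_2 p\rceil\geq\lceil e\log_2 p\rceil$. I do not anticipate a serious obstacle: once the queries are aggregated, the argument is essentially a dimension count combined with the nondegeneracy of the bilinear form $(m,b)\mapsto mb$ over $\mathbb{F}_q$. The only places requiring care are bookkeeping — handling possibly multiple queries per server and repeated $\alpha_i$'s cleanly in the definition of $\Phi$, and matching the bit-accounting convention in the formal statement of Theorem~\ref{thm: linear eval impossible} so that each $\mathbb{F}_p$-symbol is charged $\lceil\log_2 p\rceil$ bits.
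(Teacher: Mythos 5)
Your proof is correct and is, up to a contrapositive repackaging, the same argument as the paper's. Where you prove that the aggregated $\mathbb{F}_p$-linear map $\Phi:\mathbb{F}_q^2\to\mathbb{F}_p^T$ is injective and conclude $T\ge 2e$ by rank--nullity, the paper fixes $t=2e-1$ and argues in the other direction: it first exhibits a nonzero $(u,v)$ annihilated by all $t$ trace constraints (the dimension count, Lemma~\ref{lem: linear eval impossibility - zero trace}), then shows $(u,v)$ can be decomposed to produce two lines with identical transcripts but distinct coefficient products (Lemma~\ref{lem: linear eval impossibility - decomp}, assembled in Lemma~\ref{lem: linear eval impossibility - transcript collision}). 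Your injectivity step---expanding $(m+\Delta m)(b+\Delta b)=mb$ over all $m,b$ and reading off $\Delta m=\Delta b=0$---is exactly the content of the paper's decomposition lemma, just stated more directly and without the sign-juggling needed to assemble the colliding pair $f=mx+b$, $g=-m'x-b'$. Your bit accounting $t=T\lceil\log_2 p\rceil\ge 2e\lceil\log_2 p\rceil\ge 2\lceil e\log_2 p\rceil = 2\lceil\log_2 q\rceil$ is the natural conversion the informal statement relies on (the paper's formal Theorem~\ref{thm: linear eval impossible} works entirely in $\mathbb{F}_p$-symbols, taking $t=2e-1$ symbols, and leaves this step implicit). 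What the injectivity framing buys you is a slightly cleaner exposition and a proof that $T\ge 2e$ is the exact threshold, rather than merely that $T=2e-1$ fails; what the paper's explicit-collision framing buys is a concrete pair of indistinguishable inputs, which is perhaps more vivid as an impossibility witness.
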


When $k=2$, Theorem~\ref{thm: linear eval impossible (informal)} implies that no scheme with linear leakage functions can perform \emph{even one bit} better than the naive strategy of recovering all of $\mathbf{f}$. 

\vspace{.5cm}
As mentioned above, we view our results as an important step towards understanding the bandwidth cost of general non-linear computation.  But the quadratic monomial case is already interesting in the context of prior work across many domains.  We discuss these connections more in  Section~\ref{sec: related works}.  Highlights include:
\begin{itemize}
    \item Combined with existing work on regenerating codes~\cite{TYB18,CT22}, our work implies that for $k=2$, there are many linear functions $f_1 \alpha + f_0$ that can be computed with substantially less bandwidth than the quadratic monomial $f_0 \cdot f_1$.
    \item Our work contrasts with work on \emph{low-bandwidth secret sharing}.  It is known to be possible to recover a secret shared with standard Shamir sharing with non-trivial bandwidth~\cite{Huang16,HB17}; our work implies that, for a natural multiplicative variant of Shamir sharing, no deterministic low-bandwidth recovery is possible.
  
   Unfortunately, this does not imply that ``multiplicative Shamir sharing'' is \emph{leakage resilient} in the information-theoretic sense, as we demonstrate concretely in Appendix~\ref{apx: LBR example}.
    \item Our work provides lower bounds on the download bandwidth single-client \emph{homomorphic secret sharing}~\cite{BGI16,BGILT18}, for multiplication of two secrets with a natural multi-secret version of Shamir sharing. 
\end{itemize}

\subsection{Technical Overview}\label{sec: technical overview}

We now overview the proof Theorem \ref{thm: full-case bound}, which we prove in Section~\ref{sec:mainpf}. First, we observe that it suffices to prove Theorem~\ref{thm: full-case bound} in the case where $k=2$ and $i=0$, $j = 1$. At a high level, this follows from the fact that for arbitrary $k$ and distinct $i,j \in [0,k-1]$, any $t$-bit, $s$-server QM scheme $\Phi$ computing $g_{i,j}(\mathbf{f})$ is a $t$-bit, $s$-server QM scheme over a two-dimensional subspace $\mathcal{C}$ containing all $\mathbf{f} = c_i e_i + c_j e_j$, where $c_i, c_j \in \mathbb{F}_q$, and $e_i, e_j \in \mathbb{F}_q^k$ are the standard basis vectors. If $\Phi$ can compute $g_{i,j}(\mathbf{f})$ with strictly fewer bits downloaded than required by Theorem \ref{thm: full-case bound}, then we may reduce it to a $k=2$, $i=0$, $j=1$ instance of the problem, noting that $\mathcal{C} \simeq \mathrm{RS}_q[n,2]$. For more detail, see Section~\ref{sec:mainpf}.

We thus assume $k=2$, $i=0$, and $j=1$ for the rest of this discussion. We begin with a sketch, before elaborating on each step. We first bound the bandwidth of any QM solution by the round complexity of an iterative algorithm, which partitions all the lines $f(x) = mx + b$ into \emph{buckets} determined by their coefficient product $g(\mathbf{f})\defeq g_{0,1}(\mathbf{f}) = mb$.  More precisely, for each $\gamma \in \mathbb{F}_q$, we partition the lines into buckets 

\begin{equation}\label{eqn: intro gamma bucket}
    B_\gamma \defeq \left\lbrace f(x) \in \mathbb{F}_q[x] \,:\, \deg(f) \leq 1, g(\mathbf{f}) = \gamma \right\rbrace.
\end{equation}

 We then imagine receiving bits from the leakage functions sequentially, one per round; each round, we ``prune'' away all of the lines $f(x)$ disagreeing with the leakage bit. The algorithm terminates when only one non-empty bucket $B_\gamma$ remains, corresponding to the correct coefficient product $\gamma$.  If an instance of QM has bandwidth $t$, then the corresponding instance of this algorithm must halt within $t$ steps.  Thus, the round complexity of the iterative algorithm yields a lower bound on the bandwidth $t$.

Analyzing how these buckets of lines evolve as we prune them seems challenging, so our second step restricts both the sets of possible servers and of possible coefficient products in order to introduce symmetry aiding our analysis. In more detail, we demand that the coefficient product $g(\mathbf{f})$ belongs to a specially structured set $\Omega_q \subseteq \mathbb{F}_q$ of size about $q/2$.  (When $q$ is odd $\Omega_q$ is the set of \emph{quadratic residues}, and when $q$ is even it is the even powers of a specially chosen primitive element.)  We similarly restrict queries to servers indexed by $\Omega_q$.  This assumption is without loss of generality: restricting the value of $g(\mathbf{f})$ makes the QM problem easier, and thus makes impossibility results stronger; furthermore, restricting the set of servers is allowed because in Definition~\ref{def: QM}, the client must be able to query \emph{any} set of $s$ servers.

The set $\Omega_q$ is designed so that restricting to this special case introduces useful symmetry in the buckets $B_\gamma$ described above.  In our third step, we use this symmetry to carefully ``project'' each bucket $B_\gamma \subseteq \mathbb{F}_q[x]$ of lines onto a subset of $\mathbb{F}_q$, in a way that the higher dimensional characteristics of lines are sufficiently represented in the lower dimensional projection. 
In particular, we show that when one runs an analogous algorithm on the projected buckets (iteratively pruning out \emph{projections} of lines that are inconsistent with the leaked bits), then it remains the case that the round complexity of the projected algorithm is a lower bound on the bandwidth of QM.

The fourth and final step is to bound the round complexity of the projected algorithm.  This problem turns out to be more tractable than the original problem of differentiating buckets of lines, and via the logic above, it implies Theorem \ref{thm: full-case bound}.

Next, we expand slightly upon each of these steps.

\paragraph{Step 1: Algorithmic View of QM.} 
Let $B_\gamma$ be given by Equation~\eqref{eqn: intro gamma bucket}, and suppose there exists a $t$-bit, $s$-server QM scheme for $g(\mathbf{f}) := g_{i,j}(\mathbf{f})$ as in Definition~\ref{def: QM}. For a fixed set $S$ of $s$ servers, denote the leakage functions of the QM by $\{\lambda_{i}\}$. Consider the following algorithm for recovering $g(\mathbf{f})$ given the leakage bits $\lambda_{i}(f(\alpha_i))$, $i \in [t]$.  

\vspace{.5cm}
\noindent \emph{Algorithmic view of QM:}
\begin{enumerate}
    \item Initialize a set $B_\gamma^0 \defeq B_\gamma$ for each $\gamma \in \mathbb{F}_q$.

    \item For each round $i = 1, 2, \ldots, t$:
    \begin{enumerate}

    \item Learn $\lambda_i(f(\alpha_i))$ by querying the server indexed by $\alpha_i$.

    \item For each $\gamma$, remove from $B_\gamma^{i-1}$ all $\tilde{f}(x) \in B_\gamma^{i-1}$ such that $\lambda_i(\tilde{f}(\alpha_i)) \neq \lambda_i(f(\alpha_i))$; call this pruned set $B_\gamma^i$.

    \item If there is only one $\gamma$ so that $B_\gamma^i$ that is non-empty, return.
    \end{enumerate}
\end{enumerate}
After $t$ rounds, the correctness of the QM implies that $B_\gamma^t = \varnothing$ for all but one value $\gamma = g(\mathbf{f})$.  In particular, the bandwidth $t$ of the QM is bounded below by the number of iterations that the algorithm above runs for before returning.

\paragraph{Step 2: Restricting to a special set.} We show that restricting both the coefficient products and the evaluation points to a subset $\Omega_q \subseteq \mathbb{F}_q^\ast$ of the multiplicative subgroup introduces exploitable symmetry in $B_\gamma$.
\begin{itemize}
    \item When field characteristic is $p > 2$, we let $\Omega_q$ be the quadratic residue subgroup.
    \item When field characteristic is $p=2$, we let $\Omega_q$ be the set $\lbrace \omega^{i} : i \text{ even} \rbrace$ for some primitive element $\omega \in \mathbb{F}_q^\ast$.
\end{itemize}
See Remark~\ref{rem:separate} below for more about why separate definitions are needed for $p$ even and $p$ odd.  Note that in both cases, $\Omega_q$ consists of quadratic residues;\footnote{Indeed, every element of a binary extension field is a quadratic residue.} in particular, for any $\alpha \in \Omega_q$, we may define an element $\sqrt{\alpha} \in \mathbb{F}_q$ such that $(\sqrt{\alpha})^2 = \alpha$. Observe that
\begin{equation*}
    B_1 = \left\lbrace m^{-1} x + m : m \in \mathbb{F}_q^\ast \right\rbrace.
\end{equation*}
Fix $\gamma \in \Omega_q$; one may then rewrite $B_\gamma$ as
\begin{equation*}
    B_\gamma = \left\lbrace \sqrt{\gamma} m^{-1} x + \sqrt{\gamma} m : m \in \mathbb{F}_q^\ast \right\rbrace \subseteq \mathbb{F}_q[x]
\end{equation*}
and observe that in fact $B_\gamma = \sqrt{\gamma}\cdot B_1$. This correspondence between sets of lines extends to a correspondence between their images under evaluation at a given point $\alpha \in \mathbb{F}_q^\ast$:
\begin{equation*}
    B_\gamma(\alpha) \defeq \left\lbrace \sqrt{\gamma} m^{-1}\alpha + \sqrt{\gamma} m \right\rbrace = \sqrt{\gamma} \cdot B_1(\alpha).
\end{equation*}
When we further restrict $\alpha$ to also be in $\Omega_q$, we show that the following symmetry holds, allowing QM to be simplified considerably.

\begin{lemma}[Informal; see Lemma \ref{lemma: bucket rescaling}]\label{lem: informal bucket rescaling}
    Let $\sqrt{\gamma}, \sqrt{\alpha} \in \mathbb{F}_q^\ast$ be field elements whose squares are $\gamma, \alpha$, respectively. Then $B_\gamma(\alpha) = (\sqrt{\gamma})(\sqrt{\alpha})\cdot B_1(1)$.
\end{lemma}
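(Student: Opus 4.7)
The plan is to reduce the identity to the already-derived relation $B_\gamma(\alpha) = \sqrt{\gamma} \cdot B_1(\alpha)$ and then perform a single change of variables in the parameter $m$ that indexes $B_1$. Since $\gamma, \alpha$ are quadratic residues in $\mathbb{F}_q^\ast$, square roots $\sqrt{\gamma}, \sqrt{\alpha} \in \mathbb{F}_q^\ast$ are available; this is the only place where the QR restriction plays any role, and it is what makes the rescaling below well-defined.

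First I would invoke the identity $B_\gamma(\alpha) = \sqrt{\gamma} \cdot B_1(\alpha)$ established in the discussion just preceding the lemma. This reduces the claim to showing that
\begin{equation*}
    B_1(\alpha) \;=\; \sqrt{\alpha} \cdot B_1(1).
\end{equation*}
Expanding $B_1(\alpha) = \{\, m^{-1}\alpha + m : m \in \mathbb{F}_q^\ast\,\}$, I would then apply the change of variable $m = \sqrt{\alpha}\, m'$. Because multiplication by $\sqrt{\alpha} \in \mathbb{F}_q^\ast$ is a bijection on $\mathbb{F}_q^\ast$, the parameter $m'$ ranges over $\mathbb{F}_q^\ast$ exactly as $m$ does, so the set described is unchanged. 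Substituting,
\begin{equation*}
    (\sqrt{\alpha}\, m')^{-1}\alpha + \sqrt{\alpha}\, m' \;=\; \sqrt{\alpha}\,(m')^{-1} + \sqrt{\alpha}\, m' \;=\; \sqrt{\alpha}\bigl( (m')^{-1} + m' \bigr),
\end{equation*}
so $B_1(\alpha) = \sqrt{\alpha} \cdot \{\, (m')^{-1} + m' : m' \in \mathbb{F}_q^\ast\,\} = \sqrt{\alpha} \cdot B_1(1)$. Combining with the first step gives $B_\gamma(\alpha) = \sqrt{\gamma}\sqrt{\alpha} \cdot B_1(1)$, as desired.

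There is no real obstacle here: the entire content of the lemma is the one-line reparametrization $m \mapsto \sqrt{\alpha}\, m$, and the QR hypothesis is used solely to guarantee that the relevant square roots exist in $\mathbb{F}_q^\ast$. The reason the lemma is useful downstream is that it collapses all of the sets $\{B_\gamma(\alpha)\}_{\gamma,\alpha \in \QR}$ down to scalar multiples of the single ``canonical'' set $B_1(1) = \{m^{-1} + m : m \in \mathbb{F}_q^\ast\}$, and this is the symmetry that will be exploited in the subsequent projection step of the proof.
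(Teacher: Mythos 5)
Your proof is correct and matches the paper's approach at the core: both arguments turn on the substitution $m \mapsto \sqrt{\alpha}\,m$ to replace $\alpha m^{-1} + m$ with $\sqrt{\alpha}(m^{-1} + m)$. The one difference is in execution. The paper packages this substitution inside Lemma~\ref{lemma: distribution of square roots}, which it proves by first computing $|L| = |R| = (q+1)/2$ via a counting argument and then showing $R \subseteq L$ via the very reparametrization you use; equality then follows from the cardinality match. Your version observes directly that $m \mapsto \sqrt{\alpha}\,m$ is a bijection of $\mathbb{F}_q^\ast$, so the reindexed set is literally the same set, and the counting step is unnecessary. That is a genuine (if small) streamlining, and it makes more transparent that the only role of the QR hypothesis here is to supply $\sqrt{\alpha}, \sqrt{\gamma} \in \mathbb{F}_q^\ast$. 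The paper's counting computation does carry incidental information (the exact size $(q+1)/2$ of $B_1(1)$) that is used later in the round-complexity bound, but it is not needed for the rescaling lemma itself.
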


Given this symmetry, we consider the sub-case of QM wherein we are guaranteed that $g(\mathbf{f}) \in \mathbb{F}_q^\ast$ is in $\Omega_q$, and the user is restricted to querying servers holding $f(\alpha)$ where $\alpha \in \mathbb{F}_q^\ast$ is also  in $\Omega_q$. As noted previously, this restriction is without loss of generality for the purpose of proving a lower bound on the bandwidth of QMs.

\paragraph{Step 3: Reduction to distinguishing subsets of $\mathbb{F}_q$.} When $\gamma \in \Omega_q$, we show that the correspondence in Lemma \ref{lem: informal bucket rescaling} can be leveraged to ``project'' each $B_\gamma \subseteq \mathbb{F}_q[x]$ to a set of points $ S_\gamma \defeq \varepsilon_\gamma B_1(1) \subseteq \mathbb{F}_q$, where $\varepsilon_\gamma \in \mathbb{F}_q^\ast$. This reduces the problem of finding which $B_\gamma$ contains $f(x)$ to that of finding which $S_\gamma$ contains some distinguished point $\zeta \in \mathbb{F}_q$. 

In Definition~\ref{def: converted eliminators} and Theorem~\ref{thm: eliminator conversion}, we show explicitly how to map a leakage function $\lambda$ to a set $V \subseteq \mathbb{F}_q$ so that if the line $f(X)$ has $\lambda(f(\alpha_i)) = 1$, then a distinguished point $\zeta$ lies in $V$.  
Thus, given our QM with leakage functions $T_1, \ldots, T_t$, we obtain a list of corresponding sets $V_1, \ldots, V_t$ which can be used to iteratively prune the sets $S_\gamma$ until only one (say, $S_\delta$) remains nonempty. Assembling these insights yields the following algorithm, analogous to the one above, except that we now iteratively prune the projected sets $S_\gamma \subseteq \mathbb{F}_q$.

\vspace{.5cm}
\noindent \emph{Algorithmic view of the projected QM:}
\begin{enumerate}
    \item Initialize a set $S_\gamma^0 \defeq S_\gamma$ for each $\gamma \in \Omega_q$. 

    \item For each such round $i=1,2,\ldots, t$:
    \begin{enumerate}
        \item For each $\gamma$, prune $S_{\gamma}^{i-1}$ by the $V_i$ and call the pruned set $S_\gamma^i$; that is, $S_\gamma^i \defeq S_\gamma^{i-1} \setminus V_i$.
        
        \item If there is at most one $\gamma$ so that $S_\gamma^i$ is non-empty, return.
    \end{enumerate}
\end{enumerate}
We call this algorithm ``projection QM'', or pQM for short. The version here is simplified to convey the main gist; see Algorithm \ref{alg:pQM} and Section \ref{sec: bounding subcase} for a formal description.

We show that, as with our algorithm on buckets of lines, if the projected algorithm is allowed to run for all $t$ rounds, then the correctness of the original QM implies that there will be at most one $\delta$ so that $S_\delta^t$ is non-empty. This implies that the round complexity of this projected algorithm is again a lower bound on the bandwidth of the original QM. The round complexity of this projected algorithm is much easier to analyze, leading to our final step.

\paragraph{Step 4: Analyzing the projected algorithm.}
The key to our analysis is to show that, when the projected algorithm terminates, most elements of $S_\delta$ have been removed from consideration.  

\begin{theorem}[Informal; see Theorem \ref{thm: output list size 2}]\label{thm: informal output list size 2}
    Suppose that the projected QM algorithm above terminates after $\ell$ rounds. Then $|S^\ell_\delta| \leq 2$ if the field characteristic satisfies $p > 2$, and $|S^\ell_\delta| \leq 3$ if the field characteristic satisfies $p=2$.
\end{theorem}

The intuition is to observe that each set $S_\gamma^0$ has size $|S_\gamma^0| \approx q/2$ that is half the field, and there are $\approx q/2$ such sets, each indexed by some $\gamma \in \Omega_q$. As a result, the sets $S_\gamma^0$ must overlap with each other significantly. Any field element $\alpha \in S_\gamma^0$ is held by many other $S_{\gamma'}^0$, where $\gamma' \neq \gamma$ are distinct elements of $\Omega_q$. Hence, distinguishing a unique $S_\delta$ among these will require most elements of the entire field to have been removed from consideration.

With Theorem \ref{thm: informal output list size 2} established, the bound of Theorem \ref{thm: full-case bound} follows in Section \ref{subsubsec: adversarial pQM} by considering ``adversarial but honest'' servers who always reply to queries with the bit that prunes the fewest elements; that is, at most half of the set. Since the projected algorithm cannot terminate unless there are two or fewer elements remaining among all sets $S_\gamma^i$, we see that in the worst case, it takes about $\log_2(q^2)$ rounds for the algorithm to terminate. The exact expression seen in Theorem \ref{thm: full-case bound} follows from a more precise accounting.

\subsection{Related Work}\label{sec: related works}

\subsubsection{Computing linear functions: Regenerating codes and beyond} 

\paragraph{Regenerating Codes.}
Existing work has considered our model in the case when $g(\mathbf{f})$ is linear.  \emph{Regenerating codes} focus on a particular subset of linear functions.
Regenerating codes (e.g., \cite{DGW10,dimakis_survey}) are error correcting codes equipped with algorithms to efficiently repair a single erased codeword symbol in a distributed storage system.  In more detail, some data $\mathbf{f}$ is encoded as a codeword $\mathbf{c}$, and each symbol $c_i$ is sent to a different server.  If one server $i^*$ becomes unavailable, the goal is to compute $c_{i^*}$ using as little information as possible from a subset of $s$ surviving nodes.  Repair is a special case of our model where the function to compute is $g(\mathbf{f}) = c_{i^*}$, whose linearity follows from the linearity of the code.  A long line of work has established constructions of optimal regenerating codes in many parameter regimes; most relevant to our work is the study of Reed-Solomon codes as regenerating codes, initiated by \cite{Shanmugam14}.  By now, it is known that RS codes can be optimal or near-optimal regenerating codes in many parameter regimes.  For example, \cite{GW17,Dau21} show that full-length RS codes (that is, with $q=n$) of rate $1-\epsilon$ can achieve bandwidth bandwidth $(n-1)\log(1/\eps)$, and that this is nearly optimal for linear repair schemes.  For constant $\eps$, this is an $\Omega(\log n)$ improvement over the naive bound of $k \log q$.  When $q$ is much larger than $n$, it is possible to do better: Work by \cite{TYB18} provides repair schemes for RS codes achieving the \emph{cut-set bound}~\cite{DGW10}.  In our language, this gives bandwidth $s \log_2(q) / (s-k+1)$ bits, where $s$ is the number of surviving servers the scheme contacts.  As $s$ gets large relative to $k$, this can approach $\log_2 q$, the number of bits needed to write down $g(\mathbf{f})$.  In particular, this is also significantly less than the naive bound of $k \log_2 q$. 

All the constructions discussed above use linear repair schemes over extension fields $\mathbb{F}_{p^e}$, meaning that the local computation functions $g_i$, and the function used to aggregate them, are linear over $\mathbb{F}_p$.  The work \cite{CT22} studies the problem over prime order fields, where there are no non-trivial subfields and hence no linear repair schemes. 
For $k=2$, \cite{CT22} construct non-linear repair schemes which asymptotically converge to the cut-set bound over prime order fields $\mathbb{F}_p$ as $p \to \infty$.

\paragraph{Computing general linear functions.}
The work \cite{SW21} generalized the regenerating code model to consider the case when $g: \mathbb{F}_q^k \to \mathbb{F}_q$ is an \emph{arbitrary} linear function.  Given a length $n = q$ Reed-Solomon code of dimension $k = (1-\varepsilon)n$ over an extension field of constant characteristic, \cite{SW21} constructs a scheme which recovers $g(\mathbf{f})$ for any linear function given access to any $s = n(1-\gamma)$ servers, with download bandwidth $O(n/(\varepsilon-\gamma))$.  When $\eps$ is constant, this is a factor of $\Omega(\log n)$ improvement over the naive bound of $k \log q$ bits.

In follow-up work, \cite{Kiah24} considered reconstructing $\ell$-sparse linear combinations of codeword symbols, for $\ell \leq k$.  When $\mathbb{F}_q = \mathbb{F}_{p^e}$ has non-trivial extension degree $e\geq 2$ over $\mathbb{F}_p$, they give a low-bandwidth scheme for evaluating $g$, which downloads $d \log_2(p)$ bits, where $d \geq \ell p^{e-1} - \ell + k$. When $p,e$ are carefully chosen and $\ell \ll k$, their construction can outperform the naive $k e \log_2(p)$ download bandwidth.  

It is interesting to compare the results for computing linear functions to our results on computing quadratic monomials in the $k=2$ setting.  
We first note that we cannot directly compare our results to those of \cite{SW21,Kiah24}: 
While those works show that the naive bound of $k \log_2(q)$ can be significantly beaten for linear functions in some parameter regimes, the results are not meaningful for $k=2$.\footnote{In more detail, the focus of \cite{SW21} is on high-rate codes, while the results of \cite{Kiah24} require $k$ to be large enough that the sparsity $\ell$ can be much smaller than $k$.}   However, we can compare our results to the results of \cite{TYB18} and \cite{CT22}. Given a message polynomial $f(x) = mx + b$, these papers together show that it is possible to compute certain\footnote{%
For $k=2$, it may look like the problem of computing $g(\mathbf{f}) = f(\alpha)$ is the same as computing arbitrary linear functions, as all linear functions (up to normalization) of $(b,m)$ look like $b + \alpha m$ for some $\alpha$.  However, these are not the same problem when the set of evaluation points for the RS code is not the entire field.  The regenerating code constructions in \cite{TYB18,CT22} have $n \ll q$, so those works do not immediately give a scheme for computing general linear functions, even when $k=2$.}  
linear functions of the form $m \alpha + b$ with bandwidth approaching $\log_2 q$ as the number of contacted servers $s \to \infty$ is sufficiently large and $|\mathbb{F}| \to \infty$, over both prime fields $\mathbb{F}_p$ \cite{CT22}; and over extension fields $\mathbb{F}_{p^e}$ for suitably large $e$ \cite{TYB18}. Our result works for both prime order fields and extension fields, and when $s$ is arbitrarily large (noting that without loss of generality we have $ s\leq k \log_2 q$). Thus, in parameter regimes where these results overlap, we see that recovering the linear function $m\alpha + b$, for any evaluation point $\alpha$ in the RS codes considered by \cite{CT22} or \cite{TYB18}, requires substantially less bandwidth than recovering the quadratic function $m \cdot b$.

\subsubsection{Secret Sharing}
In a secret-sharing scheme, a secret $s \in \mathbb{F}_q$ is shared among $n$ parties.  One goal is that any $k$ of the parties should be able to combine their shares to recover the secret, while any $k-1$ parties together learn nothing about the secret.  The secret sharing scheme most relevant to our work is \emph{Shamir's scheme}~\cite{Shamir79}.  In Shamir's scheme, secret $s \in \mathbb{F}_q$ is shared by sampling $f_i \gets \mathbb{F}_q$, $i \in \{1,\ldots, k-1\}$, uniformly at random and considering $f(x) = s + \sum_{i=1}^{k-1} f_i x^i$. 
The parties are indexed by elements $\alpha \in \mathbb{F}_q^*$, and their shares are given by $f(\alpha)$.  That is, each party holds a symbol of a RS codeword.  It is not hard to see that this scheme has the desired access and security requirements.

There are several questions in secret sharing relevant to our work.  We discuss them below.

\paragraph{Low-Bandwidth Secret Sharing.} 
In \emph{low-bandwidth secret sharing}~\cite{Wang2008OnSR, ZHANG2012106,Huang16, HB17, Rawat18}, the goal is for any subset of enough parties to be able to reconstruct the secret in a communication-efficient way.  When the secret sharing scheme is Shamir, this is again a special case of our problem, where $g(\mathbf{f})$ is the function $g_0(\mathbf{f}) = f(0)$.  (The difference between this setting and regenerating codes is that we only need to be able recover $f(0)$, rather than $f(\alpha)$ for any $\alpha$).  In this setting, \cite{Huang16} shows that the cut-set bound is still the limit on the bandwidth (for any secret sharing scheme); but it is possible to attain this with a smaller alphabet size than needed for regenerating codes~\cite{HB17}.

Our work on the problem of QM may be viewed in the context of low-bandwidth secret sharing by considering a ``multiplicative'' variant of Shamir secret sharing.    In more detail, in the standard Shamir sharing with $k=2$, a slope $m$ is drawn at random and the shares correspond to the function $f(x) = mx + s$, where $s$ is the secret.  In the multiplicative variant, $m$ and $b$ are chosen at random so that the secret is $s = m \cdot b$.  This is a special case of QM, when $g_{0,1}(\mathbf{f})$ is restricted to be nonzero.  {In particular, our results imply that there do \emph{not} exist non-trivially low-bandwidth secret sharing schemes for ``multiplicative Shamir sharing'' with $k=2$, even though such schemes do exist for standard Shamir sharing.}

\paragraph{Leakage-Resilient Secret Sharing.}

The work \cite{BDIR18} considers a threat model in which an adversary has ``local leakage access'' to more than $k$ shares.  In this model, an adversary can apply an arbitrary function of bounded output length to each share locally. Concretely, we may think of this model as an adversary extracting a few bits of \textit{local} information from each shareholder. Under this view, for Shamir sharing, the adversary may be considered as a repair scheme wishing to recover the codeword symbol $f(0)$. A code is \textit{local leakage resilient} if the adversary has negligible advantage learning the secret $s$.

For sufficiently large code length $n$ and prime field order $p$, \cite{BDIR18} show that Reed-Solomon codes over $\mathbb{F}_p$ are local leakage resilient when dimension $k \geq \beta n$ for some constant $\beta$. As a concrete example, they show that if the adversary is allowed to leak 1 bit from each share, then for $n$ sufficiently large, it suffices to take $\beta \geq 0.92$. Extensive follow-up work (e.g., \cite{MPSW21, MNPW22, KK23, Kasser25}) has progressively lowered the threshold for 1-bit leakage resilience, with \cite{Kasser25} most recently improving the bound to $\beta \geq 0.668$. On the other hand, \cite{CT22, CSTW24} show that for low-degree RS codes (with $k =o(n)$), non-trivial leakage is possible over prime fields.

Given the discussion above about the ``multiplicative'' version of Shamir sharing, one may hope that our results imply that multiplicative Shamir sharing is leakage resilient.  Unfortunately, this seems not to be the case for dimension $k=2$ Reed-Solomon codes.  
 First, at least over some fields, it's possible to learn the entire secret from strictly less than $k \lceil\log_2(q-1)\rceil$ leaked bits, so there is some amount of non-trivial leakage that completely reveals the secret.\footnote{In this case, since the secret must be non-zero, the naive bound is $k \lceil\log_2(q-1)\rceil$ rather than $\lceil k \log_2(q) \rceil$.} As an example, we show in Appendix \ref{apx: LBR example} that ``multiplicative Shamir'' instantiated over $\mathbb{F}_7^\ast$ admits a reconstruction algorithm that downloads only $5$ bits, as compared to the $2\lceil \log_2(6) \rceil = 6$ bits one might expect in the naive case. Second, with this scheme, \textit{leaking even one bit} from a server allows an adversary to learn information about the multiplicative Shamir secret; we give an example of this as well in Appendix~\ref{apx: LBR example}.

\paragraph{Homomorphic Secret Sharing Schemes.} In (single-client) \emph{homomorphic secret sharing} (HSS)~\cite{C:Benaloh86a,BGI16a,BGILT18}, a secret $s$ is shared as above, and a referee subsequently wishes to compute a function $g(s)$ of the secret.  Each party is allowed to do some local computation and send a message to the referee.  In some applications, it is desirable to reduce the download bandwidth of the scheme. For example, in applications of HSS to Private Information Retrieval (PIR), the download bandwidth corresponds to the download cost of the PIR scheme (see~\cite{BCGIO17, FIKW22}).  Our problem of low-bandwidth function evaluation is related to HSS where we want to minimize the download bandwidth, and where we want \emph{information-theoretic} security.\footnote{We note that for some applications of HSS, the reconstruction should be additive; and/or the messages sent by the parties don't leak any information about the secret beyond $g(s)$.  Neither of these are necessarily the case in low-bandwidth function evaluation.}  

The work~\cite{FIKW22} gives \emph{multi}-client HSS schemes for Shamir sharing, where the referee's function $g$ is a low-degree polynomial.  At first glance this seems at odds with our result (which says that computing low-degree polynomials on top of Shamir sharing with non-trivial bandwidth is impossible).  However, the model for multi-client HSS is different than the one we consider, as the function $g$ is evaluated on multiple secrets, which are independently secret-shared.  For example, \cite{FIKW22} applies to a setting where a user wishing to compute the monomial $g(x,y) = xy$ on inputs $x = s_1$ and $y=s_2$ assumes that $s_1, s_2$ are shared separately with lines $f_1(x) = s_1 + m_1 x$ and $f_2(x) = s_2 + m_2 x$; this task is distinct from computing $s_i \cdot m_i$.

Our results for quadratic monomials \emph{do} have implications for low-bandwidth, \emph{single}-client secret sharing. For example, consider the secret sharing scheme that shares a secret $(a,b) \in \mathbb{F}_q^2$ via evaluations $f(\alpha)$ of a polynomial $f(x) = a + bx + \sum_{j=2}^{k-1} c_j x^j$, where the $c_i$ are chosen randomly.  This is a natural extension of Shamir's scheme to multiple secrets.\footnote{It is related to, but not the same as, the scheme in \cite{FY92}; the \cite{FY92} scheme can be seen as a systematic version of the scheme described above.} Our work implies that computing $a \cdot b$ requires a download bandwidth of nearly $2 \log q$ bits, twice as much as we might hope for.

\subsubsection{Coded Computation}
In \emph{coded computation} (e.g.,~\cite{LLPPR18, DCG19, YLRKSA19} or see \cite{CCsurvey} for a survey), the goal is to compute a function $g$ of some data $\mathbf{f}$, distributed among $n$ worker nodes.  The concern is that some worker nodes may unpredictable be \emph{stragglers} (slow or non-responsive), and we would like to carry on the computation without them.  The 
idea is to encode $\mathbf{f}$ as a codeword $\vec{c}$, so that $g(\mathbf{f})$ can be computed even if a the computation on a few symbols of $\vec{c}$ are unavailable.

This is similar in spirit to our model, but there are a few differences.  First, coded computation is often studied over $\mathbb{R}$, rather than finite fields---it is an interesting question whether a version of our result holds over $\mathbb{R}$.  Second, in our model the leakage functions are allowed to depend on the set $S$ of queried servers, which is not generally the case in coded computation.  However, we note that our lower bound would apply to coded computation leakage function model as well, as the problem of QM is harder if the leakage functions cannot depend on $S$.\footnote{A third difference is that, to the best of our knowledge, coded computation schemes that are not repetition-based do not attempt to minimize the download bandwidth, but rather the number of workers that need to finish before the function $g(\mathbf{f})$ can be computed, and/or an analysis of how much this speeds up the computation in the presence of stragglers.  However, minimizing bandwidth---or proving that it cannot be done---seems well-motivated.} Thus, while coded computation is similar in spirit to our model, extensions to our work (to the reals and to computations larger than the product of two field elements) would be needed to give meaningful bounds in this setting.

\subsection{Open Questions}
We mention a few open questions raised by our results. 

\begin{itemize}

\item It would be interesting to extend our results to higher degree monomials. Our results show that about $2\log_2(q)$ bits are necessary for computing quadratic monomials $g$ on RS-encoded data.  For monomials of degree $d > 2$, one conjecture is that about $d \log_2(q)$ bits are needed.  Is this conjecture true?

\item 
It would also be interesting to know if our lower bound is achievable.  That is, is it possible to evaluate quadratic monomials on top of Reed-Solomon codes of dimension $k > 2$, with bandwidth approaching $2 \log_2 q$?  For $k=2$, this can be done by naive polynomial evaluation. 
For $k > 2$, there are RS codes over $\mathbb{F}_q$ that admit repair of individual \emph{codeword} symbols with download bandwidth converging to $\log_2(q)$ bits~\cite{TYB17}.  If a similar result holds for \emph{message} symbols, this would provide an algorithm for evaluating quadratic monomials with bandwidth approaching our lower bound.  We are not aware of such a result (for message symbols) in the regenerating codes literature; does such a result hold, or are there other algorithms for computing quadratic monomials with about $2 \log_2 q$ bits?

\item Finally, it would be interesting to extend our results to arbitrary quadratic \emph{functions}, not just quadratic monomials.  It is not hard to see that at least $k$ bits are required to recover an arbitrary quadratic function (even an arbitrary \emph{linear} function~\cite{SW21}).  When $k$ is large, this implies that a stronger bound should hold; what is the correct bound?

\end{itemize}

\section{Preliminaries}

We now set notation and establish an algorithmic view of QM. Denote by $\mathbb{F}_q$ the finite field of $q = p^e$ elements, where $p$ is prime. For any positive integer $n \in \mathbb{Z}^+$, we denote by $[n]$ the sequence of integers $(1, 2, 3, \ldots, n)$. Given integers $i < j$, we denote by $[i,j] \subseteq \mathbb{Z}$ the sequence of integers $(i, i + 1, \ldots, j)$. For any $\mathcal{F} \subseteq \mathbb{F}_q[x]_{\deg \leq k-1}$, we may interpret any $f \in \mathcal{F}$ as a vector of coefficients  $\mathbf{f} = (f_i)_{i \in [0,k-1]} \in \mathbb{F}_q^{k}$: $f(x) = \sum_{i=0}^{k-1} f_i x^i$. For any $i, j \in [0,k-1]$, we write $g_{i,j}(\mathbf{f}) \defeq f_i f_j$.

 Our analysis throughout primarily focuses on the problem of finding the products of coefficients of linear polynomials; explicitly, this is the setting where $k=2$ and $i=0$, $j = 1$. 
We then generalize this to the full statement of Theorem~\ref{thm: full-case bound} in Section~\ref{sec:mainpf}.  Thus, until then we make the following assumption:
\begin{assumption}\label{assm:k2}
    Until Section~\ref{sec:mainpf}, we assume that $k=2$, so $f(x) = mx + b$, and that $g(\mathbf{f}) = g_{0,1}(\mathbf{f}) = m \cdot b$.
\end{assumption}

\subsection{Algorithmic View}\label{sec: algorithmic view}

We now give an algorithmic characterization of QMs (Definition \ref{def: QM}), in light of Assumption~\ref{assm:k2}.
Recall that the goal of QM is not to completely recover $f(x) = mx + b$, rather to learn the product of its coefficients $\gamma \defeq g(\mathbf{f}) = mb$. At least $q-1$ lines $f \in \mathbb{F}_q[x]$ have the same coefficient product $\gamma$.  As we do not need to distinguish between these lines, we put them in the same \emph{bucket}, defined below.

\begin{definition}[$\gamma$-Bucket]\label{def: gamma bucket}
    Given some $\gamma \in \mathbb{F}_q$, we define a $\gamma$-bucket as
    \begin{equation}
    B_\gamma = \left\lbrace f \in \mathbb{F}_q[x] \; : \; \deg(f) \leq 1 \text{ and } g (\mathbf{f}) = \gamma \right\rbrace \subseteq \mathbb{F}_q[x]
\end{equation}
\end{definition}

Thus, a $\gamma$-bucket is the set of all lines $f(x)$ whose coefficient product is $\gamma$. 

\begin{definition}[$\gamma$-Bucket Evaluation]\label{def: gamma bucket eval}
    Given $\alpha, \gamma \in \mathbb{F}_q$, we denote by
    \begin{equation}
    B_\gamma(\alpha) \defeq \left\lbrace f(\alpha) : f \in B_\gamma \right\rbrace \subseteq \mathbb{F}_q
\end{equation}
the set of all evaluations $f(\alpha)$ for $f \in B_\gamma$. 
\end{definition}

In other words, $B_\gamma(\alpha)$ is the image of the evaluation map $\mathrm{ev}_\alpha: \mathbb{F}_q[x] \to \mathbb{F}_q$, $f \mapsto f(\alpha)$ on a set of lines $B_\gamma$. We now show that any $t$-bit, $s$-server QM is equivalent to an instance of Algorithm \ref{alg: QM Full}. More precisely, an QM is valid if and only if the corresponding instance of Algorithm \ref{alg: QM Full} successfully recovers $g(\mathbf{f})$ for any line $f(x)$, given a transcript of leakage bits $\mathbf{b} \in \lbrace 0 ,1 \rbrace^t$ corresponding to $f(x)$. We state this formally in Observation \ref{obs: QM algorithmic equiv}.

\begin{algorithm}
\caption{$\mathrm{QM}(\pmb{\alpha}, \mathbf{T}, \mathbf{b})$}\label{alg: QM Full}
\DontPrintSemicolon

\SetKwInOut{Input}{Input}\SetKwInOut{Output}{Output}
\Input{server indices $S \subseteq \mathbb{F}_q$, $|S| = s$}
\Input{server schedule $\pmb{\alpha} = (\alpha_1, \alpha_2, \ldots, \alpha_{t}) \in S^t$} 
\Input{leakage functions $\mathbf{T} = (T_1, T_2, \ldots, T_t)$, $T_i \subseteq \mathbb{F}_q$}
\Input{leakage transcript $\mathbf{b} \in \lbrace 0, 1 \rbrace^{t}$}

\For{$\gamma \in \mathbb{F}_q$}
{
$\mathbf{B}(\gamma) \gets B_\gamma$ \\
\For{$i \in [t]$}
{
$\tilde{T} \gets \begin{cases} T_i & b_i = 0 \\ \mathbb{F}_q \setminus T_i & b_i = 1\end{cases}$ \\
    $\mathbf{B}(\gamma) \gets \mathbf{B}(\gamma) \setminus \{ f \in \mathbf{B}(\gamma) \,:\, f(\alpha_i) \in \tilde{T} \}$
}
}
\If{ $\exists! \; \gamma \in \mathbb{F}_q \;\ni\; \mathbf{B}(\gamma) \neq \emptyset$}
{
\Return{$\gamma$ (\texttt{Success!})}
}
\If{$\mathbf{B}(\gamma) =\emptyset \; \forall \; \gamma \in \mathbb{F}_q$}
{
\Return{\texttt{Invalid transcript!}}
}
\Return{\texttt{Fail!}}\\

\end{algorithm}

\begin{observation}
\label{obs: QM algorithmic equiv}
    There exists a $t$-bit, $s$-server QM for $g_{0,1}$ and for $k=2$ (Definition \ref{def: QM}) if and only if, for every choice of $S \subseteq \mathbb{F}_q$, $|S| = s$, there exists some $\pmb{\alpha} = (\alpha_1,\ldots, \alpha_t)  \in S^t$ and a collection of subsets $T_i \subseteq \mathbb{F}_q$, $i \in [t]$ defining
    \begin{equation}
        \lambda_i: \mathbb{F}_q \to \lbrace 0, 1 \rbrace, \; x \mapsto \begin{cases}
            0 & x \in T_i\\
            1 & \text{else}
        \end{cases}
    \end{equation}
such that Algorithm \ref{alg: QM Full} succeeds and outputs $g(\mathbf{f})$, given any input
    \begin{equation}
        \mathbf{b} = \left( \lambda_i(f(u_i)) : i \in [t] \right) \in \lbrace 0, 1 \rbrace^t
    \end{equation}
    for all $f \in \mathbb{F}_q[x]$, $\deg(f) \leq 1$.
\end{observation}

\begin{proof}
    The reverse direction is true by definition, so it suffices to consider the forward direction.  Suppose there exists a $t$-bit, $s$-server QM for $g_{0,1}$ and $k=2$. Fix a set $S \subseteq \mathbb{F}_q$ of $s$ servers.  Let $\lambda_1, \ldots, \lambda_t$ be the leakage functions guaranteed by Definition~\ref{def: QM}. Define
    \[ T_i = \lambda_i^{-1}(0) \subseteq \mathbb{F}_q\]
    for all $i \in [t]$, and let $\mathbf{T} = (T_1, \ldots, T_t).$ For an arbitrary $f(x) \in \mathbb{F}_q[x]$ of degree at most $1$, set
    \[ \mathbf{b}(f) = (\lambda_i(f(\alpha_i)) : i \in [t] ).\]
    Let $\pmb{\alpha}, \mathbf{T}, \mathbf{b}(f)$ be the inputs to Algorithm~\ref{alg: QM Full}. 
    The algorithm initializes a dictionary, denoted $\mathbf{B}$, whose value at key $\gamma \in \mathbb{F}_q$ is the $\gamma$-bucket $B_\gamma$ (Definition \ref{def: gamma bucket}). In each iteration $i \in [t]$ of the main loop, the algorithm considers the $i$th bit of $\mathbf{b}(f)$, which denotes whether $f(\alpha_i) \in T_i$ or $f(\alpha_i) \in \mathbb{F}_q\setminus T_i$; the algorithm removes all lines $g$ that are inconsistent with this leakage bit. 
    
    Since we assumed a QM exists, by Definition \ref{def: QM}, all $h\in \mathbb{F}_q[x]_{\deg\leq 1}$ consistent with $\mathbf{b}(f)$ satisfy $g(\mathbf{h}) = g(\mathbf{f}) = \gamma$, which implies that all lines $h$ that remain in some bucket must in fact all reside in the unique bucket $\mathbf{B}(\gamma)$. This guarantees that Algorithm 1 will terminate and correctly output $\gamma = g(\mathbf{f})$, as desired.
\end{proof}

Given the equivalence established by Observation \ref{obs: QM algorithmic equiv}, the minimal download bandwidth incurred by any QM is equivalent to the minimal round complexity of Algorithm \ref{alg: QM Full}, subject to the constraint that Algorithm \ref{alg: QM Full} outputs the correct coefficient product given any leakage transcript.

\subsection{Restricted Parameter Regime for QM}

In Section \ref{sec: algorithmic view}, we established that determining the optimal download bandwidth incurred by an instance of QM is equivalent to bounding the round complexity of the corresponding instance of Algorithm \ref{alg: QM Full}. Rather than directly analyzing Algorithm \ref{alg: QM Full}, we pursue an alternate strategy: within the scope of all parameter regimes for which QM must succeed, we find a subset whose corresponding algorithmic view can be greatly simplified.

Algorithm \ref{alg: QM Full} terminates when only one $\mathbf{B}(\gamma) \subseteq B_\gamma$ is non-empty. Assuming a valid QM, the algorithm succeeds because all lines consistent with a given leakage transcript have the same coefficient product. Instead of tracking the number of lines in each set $\mathbf{B}(\gamma)$, consider an alternate implementation of Algorithm~\ref{alg: QM Full} that tracks the size of $\mathrm{ev}_\alpha(\mathbf{B}(\gamma)) = \{ f(\alpha) \,:\, f \in \mathbf{B}(\gamma) \} \subseteq \mathbb{F}_q$ for any $\alpha \in \mathbb{F}_q^\ast$.  This view is equivalent, since for any such $\alpha$,
\[ \mathrm{ev}_\alpha (\mathbf{B}(\gamma)) = \emptyset\   \iff \mathbf{B}(\gamma) = \emptyset.\]
It turns out that we can construct such a correspondence between linear functions and their evaluations which, within a restricted parameter regime, yields a simpler but equivalent algorithm. To state this restricted parameter regime, we first give the following definition.

\begin{definition}[Quadratic Residues]
    Given a field $\mathbb{F}_q$, let
    \begin{equation}
        \mathrm{QR}_q \defeq \left\lbrace \alpha^2 : \alpha \in \mathbb{F}_q^\ast \right\rbrace
    \end{equation}
    denote the quadratic residue subgroup of $\mathbb{F}_q^\ast$.
\end{definition}

When $q=p^e$ where $p>2$, exactly half of $\mathbb{F}_q^\ast$ lies in $\mathrm{QR}_q$, while the other half lies in $\mathbb{F}_q \setminus \mathrm{QR}_q$.  We state this formally below.

\begin{claim}[e.g., \cite{ethz_quadratic_residues}]\label{claim: QRq is half of Fq star}
    If $\mathbb{F}_q$ is a field with characteristic $p > 2$, then $|\QR_q| = (q-1)/2$. 
\end{claim}

Note that $\QR_q$ is the image of $\mathbb{F}_q^\ast$ under the map $x \mapsto x^2$; when $q = 2^e$, the map $x \mapsto x^2$ is the Frobenius automorphism, implying $\QR_{2^e} = \mathbb{F}_{2^e}^\ast$. Since the quadratic residues do not yield an interesting subset of the multiplicative subgroup in binary extension fields, we consider an alternative construction. First, we recall the definition of the field trace.

\begin{definition}
    For prime $p$ and $q=p^e$, the field trace $\mathrm{Tr}:\mathbb{F}_q \to \mathbb{F}_p$ is the $\mathbb{F}_p$-linear function given by
    \begin{equation*}
        \mathrm{Tr}(x) = x + x^{p} + x^{p^2} + \cdots + x^{p^{e-1}} = \sum_{i=0}^{e-1} x^{p^i}.
    \end{equation*}
\end{definition}

The following result\footnote{This result holds in considerably more generality: given any finite extension field $E$ and a field trace from $E$ to a subfield $F$ - not necessarily the base field - there exists a primitive element $\omega \in E^\ast$ whose trace $\mathrm{Tr}(\omega) = \mathrm{Tr}(\omega^{-1}) = \alpha \in F$ for arbitrary $\alpha \in F$. Note that $\omega$ is primitive if and only if $\omega^{-1}$ is primitive. See \cite{LN97}, Theorem 3.75.} (\cite{Cohen1990, LN97}) shows that, in all binary extension fields of order at least $2^e \geq 8$, there always exists a generator of the multiplicative subgroup whose inverse lies in the kernel of the field trace.

\begin{theorem}[\cite{Cohen1990, LN97}]\label{thm: prescribed trace}
    For any integer $e \geq 3$, there exists a primitive element $\omega \in \mathbb{F}_{2^e}^\ast$ such that $\mathrm{Tr}(1/\omega) = 0$.
\end{theorem}

\begin{definition}\label{def: pseudo residues over binary extensions}
    Let $e\geq 3$ and $q = 2^e$; given a primitive element $\omega \in \mathbb{F}_{q}^\ast$ satisfying $\mathrm{Tr}(1/\omega) = 0$, we denote 
    \begin{equation*}
        W_q(\omega) \defeq \left\lbrace \omega^{2i} : i = 0, 1, \ldots, 2^{e-1} - 2 \right\rbrace
    \end{equation*}
    Since such a primitive element $\omega$ is guaranteed to exist by Theorem \ref{thm: prescribed trace}, we will assume that there is a canonical choice of $\omega$ for each field order $q = 2^e$; we may thus drop the argument and simply write $W_q$.
\end{definition}

We may now state our restricted parameter regime.

\begin{definition}[Restricted Parameter Regime]\label{def: restricted parameter regime}
    Fix a QM for $g_{0,1}$ and $k=2$ as in Definition \ref{def: QM}. We consider the following restricted parameter regime. Let $q = p^e$ for prime $p$ satisfying $p>2$; or $p=2$ and $e \geq 3$. Define
    \begin{equation*}
        \Omega_q \defeq \begin{cases}
            \mathrm{QR}_q & p > 2\\
            W_q & p = 2, e \geq 3.
        \end{cases}
    \end{equation*}
\begin{enumerate}
    \item We only consider lines $f \in \mathbb{F}_q[x]$ such that $g(\mathbf{f}) \in \Omega_q$, and
    \item during any round $i \in [t]$, we only contact servers indexed by some $\alpha_i \in \Omega_q$.
\end{enumerate}
\end{definition}

As discussed in the introduction and formalized in Section~\ref{sec:mainpf}, restricting to this parameter regime is without loss of generality.  That is, any impossibility result for the restricted parameter regime in Definition~\ref{def: restricted parameter regime} implies an impossibility result for QM in general. 

\begin{remark}[Why two separate definitions of $\Omega_q$?]\label{rem:separate}
    Observe that for finite fields $\mathbb{F}_q = \mathbb{F}_{p^e}$ of odd prime characteristic $p$, the quadratic residue subgroup $\mathrm{QR}_q$ is precisely the set of even powers of any primitive element $\omega \in \mathbb{F}_{q}^\ast$. Thus, the construction of $\Omega_q$ in Definition \ref{def: pseudo residues over binary extensions} when $q = 2^e$ extends this notion to to the binary characteristic case, considering all even powers of a primitive element of the multiplicative subgroup. A natural question is why we did not simply construct $\Omega_q$ as the even powers of a primitive element for all field characteristics; the answer is that our analysis in the binary characteristic case will make extensive use of the fact that the primitive element $\omega \in \mathbb{F}_{2^e}^\ast$ satisfies $\mathrm{Tr}(\omega^{-1}) = 0$. Such primitive elements also exist in odd characteristic fields, but require extension degree at least 3 (analogous to Theorem \ref{thm: prescribed trace}, which in fact holds for odd characteristic). This would cause our analysis to overlook prime-order fields, as well their extensions of degree 2.
\end{remark}

\subsubsection{Odd Field Character}

When $q = p^e$ for odd $p$, the restriction to quadratic residues in Definition \ref{def: restricted parameter regime} allows us to leverage results concerning quadratic character sums in Section \ref{sec: eval map images}. We first define the notion of a quadratic character\footnote{The quadratic character is also commonly referred to as the Legendre symbol.} and note a pair of elementary properties.

\begin{definition}[Quadratic Character]\label{def: quadratic char}
    For $\mathbb{F}_q = \mathbb{F}_{p^e}$ for prime $p>2$, we define the quadratic character by
    \begin{equation}
        \chi: \mathbb{F}_q \to \left\lbrace0, \pm 1 \right\rbrace, \; x \mapsto \begin{cases}
            1 & x \in \mathrm{QR}_q\\
            0 & x = 0\\
            -1 & x \in \mathbb{F}_q^\ast \setminus \mathrm{QR}_q.
\end{cases}
    \end{equation}
\end{definition}

\begin{observation}[e.g., \cite{babaiWeilNotes}]\label{obs: two quadratic char props}
    Let $\mathbb{F}_q = \mathbb{F}_{p^e}$ for prime $p>2$, and let $\chi: \mathbb{F}_q \to \lbrace 0, \pm 1 \rbrace$ be the quadratic character given by Definition \ref{def: quadratic char}. 
    \begin{enumerate}
        \item For all $x \in \mathbb{F}_q^\ast$, $\chi(x) = \chi(1/x)$.
        \item For all $x,y \in \mathbb{F}_q$, $\chi(xy) = \chi(x) \chi(y)$.
    \end{enumerate}
\end{observation}

Observe that the quadratic characters $\chi(x)$ for all $x \in \mathbb{F}_q$ sum to zero, as $\mathrm{QR}_q$ contains precisely half the elements of $\mathbb{F}_q^\ast$. We may consider a more general sum of quadratic characters, which asks for $\sum_{x \in \mathbb{F}_q}\chi(f(x))$ for some $f \in \mathbb{F}_q[x]$. 

\begin{definition}[Complete Quadratic Character Sum]\label{def: complete quad char sum}
    Let $\mathbb{F}_q = \mathbb{F}_{p^e}$ for prime $p>2$, and let $\chi: \mathbb{F}_q \to \lbrace 0, \pm 1 \rbrace$ be the quadratic character given by Definition \ref{def: quadratic char}. For any $f \in \mathbb{F}_q[x]$, we define the complete quadratic character sum of $f$ by
    \begin{equation}
        \mathfrak{C}(f) \defeq \sum_{x \in \mathbb{F}_q} \chi(f(x)).
    \end{equation}
\end{definition}

Intuitively, the complete quadratic character $\mathfrak{C}(f)$ of any polynomial $f \in \mathbb{F}_q[x]$ is an integer which indicates whether the image of $f$ contains more quadratic residues or non-residues.

\begin{theorem}[Weil's Complete Character Sum Estimate\footnote{This bound was proven in considerably more generality by Andr\'e Weil in 1948 \cite{weil1948courbes}; modern proofs may be found in \cite{HartshorneAG,iwaniec2004analytic}. The version presented by Theorem \ref{thm: weil char sum estimate} is simplified for our setting, and has been adapted from L\'aszl\'o Babai's excellent lecture notes on the topic \cite{babaiWeilNotes}.} \cite{weil1948courbes,babaiWeilNotes}]\label{thm: weil char sum estimate}
    Let $\mathbb{F}_q = \mathbb{F}_{p^e}$ where prime $p>2$, and let $f \in \mathbb{F}_q[x]$ be a square-free polynomial of degree $d \defeq \deg(f)$. Then
    \begin{equation}
        \left\lvert \sum_{x \in \mathbb{F}_q} \chi\left( f(x) \right) \right\rvert \leq (d-1) \sqrt{q}.
    \end{equation}
\end{theorem}

\subsubsection{Even Field Character}

We leverage a different result to later analyze the case where $q = 2^e$. The following theorem is classical (e.g., \cite{lang2002algebra}); we narrowly restate its claim in our parameter regime of interest and include a proof for completeness.

\begin{theorem}\label{thm: binary artin schreier}
    Let $c \in \mathbb{F}_{2^e}$. Then there exists $y \in \mathbb{F}_{2^e}$ such that $ y^2 + y + c = 0$ if and only if $\mathrm{Tr}(c) = 0$.
\end{theorem}

\begin{proof}
    Let\footnote{Theorem~\ref{thm: binary artin schreier} can be seen as an additive form of Hilbert's 90th Theorem \cite{Hilbert1897}. The notation in our proof pays homage to Emil \textbf{A}rtin and Otto \textbf{S}chreier, who also proved a far more general version of this theorem in \cite{artin_schreier1927}.} $S(X) \defeq X^2 + X$ and $A(X) \defeq S(X) + c$. In the forward direction, suppose $A(y) = y^2 + y + c = 0$ and take the trace of both sides to observe $\mathrm{Tr}(y^2 + y) + \mathrm{Tr}(c) = 0$. Since $\mathrm{Tr}(y^2 + y) = 0$, it follows that $\mathrm{Tr}(c) = 0$. In the other direction, suppose that $c \in \ker(\mathrm{Tr})$. It suffices to show that the image of $S(X)$ is precisely the kernel of the trace map.
    
    Recall that $\mathrm{Tr}$ is an $\mathbb{F}_2$-linear map whose kernel has dimension $e-1$. Observe that $S(X) = X(X+1)$ is also an $\mathbb{F}_2$-linear map with kernel precisely $\mathbb{F}_2$; by rank-nullity, its image $\mathrm{img}(S)$ also has dimension $e-1$. Since $\mathrm{Tr}(S(X)) = 0$ for all $X \in \mathbb{F}_{2^e}$, we have $\mathrm{img}(S) \subseteq \ker(\mathrm{Tr})$; since they have the same dimension over $\mathbb{F}_2$, we see that $\mathrm{img}(S) = \ker(\mathrm{Tr})$, as desired.
\end{proof}

\section{Symmetries of restricted parameter regime}\label{sec: restriction implications}
The primary goal of this section is to apply the restricted parameter regime (Definition \ref{def: restricted parameter regime}) towards constructing a simplified version of Algorithm \ref{alg: QM Full} whose round complexity will under-bound that of the general case. The actual construction of this simplified version, along with its analysis, is given in Section \ref{sec: bounding subcase}. Here, we focus instead on demonstrating the simplifications afforded by the restricted parameters and on developing the techniques to re-cast QM into this simplified framework.

\subsection{Evaluation Map Images}\label{sec: eval map images}

We describe the algebraic structure of $B_\gamma(\alpha)$ when both $\gamma$ (the coefficient product; see Definition \ref{def: gamma bucket}) and $\alpha$ (the evaluation parameter; see Definition \ref{def: gamma bucket eval}) are restricted to $\alpha, \gamma \in \Omega_q$. Consider the case where $\gamma = 1$, so that
\begin{equation}
    B_1 = \left\lbrace \frac{1}{m}x + m : m \in \mathbb{F}_q^\ast \right\rbrace \subseteq \mathbb{F}_q[x].
\end{equation}
If we evaluate all lines $f \in B_1$ at $\alpha=1$, then
\begin{equation}
    B_1(1) = \left\lbrace \frac{1}{m} + m : m \in \mathbb{F}_q^\ast \right\rbrace \subseteq \mathbb{F}_q.
\end{equation}
It turns out that, up to scaling, $B_1(1)$ describes all $B_\gamma(\alpha)$ for $\gamma, \alpha \in \Omega_q$. To express the exact scaling factor, we first define a notion of square roots over $\Omega_q$. Note this is possible since $\Omega_q \subseteq \mathrm{QR}_q$ for all prime powers $q$, with exceptions\footnote{These exceptions do not affect the main result of Theorem \ref{thm: full-case bound}, since the bound is trivial over $\mathbb{F}_2, \mathbb{F}_4$. Indeed, $2 \log_2(q-2) - 4 \leq 0$ when $q = 2, 4$.} only when $q=2, 4$.

\begin{definition}[Square Roots in $\Omega_q$]\label{def: square roots in general}
Given $\alpha \in \Omega_q$, we define 
\begin{equation}
    r_\alpha \defeq \lbrace \beta \in \mathbb{F}_q^\ast : \beta^2 = \alpha \rbrace \subseteq \mathbb{F}_q^\ast.
\end{equation}
Since $r_\alpha = r_\gamma$ if and only if $\alpha = \gamma$, we may fix\footnote{The results of Section \ref{sec: restriction implications}, along with the accompanying discussion, hold for any arbitrary choice of square root representative for each $\alpha \in \Omega_q$. In Section \ref{sec: bounding subcase}, we employ a specific choice of such representatives which eases a counting argument in the proof of Theorem \ref{thm: output list size 2}, a key ingredient for the full result of Theorem \ref{thm: full-case bound}. The explicit construction of such a choice of square roots, along with their combinatorial properties, is given in Section \ref{sec: key bound}.} a canonical representative of $r_\alpha$ for each $\alpha \in \Omega_q$; we denote such a choice of representative $\sqrt{\alpha} \in r_\alpha$.
\end{definition}

Our goal will eventually be to understand  $B_\gamma(\alpha)$ in terms of $B_1(1)$ (see Lemma~\ref{lemma: bucket rescaling}).  We first need the following lemma.

\begin{lemma}\label{lemma: distribution of square roots}
Suppose that $q=p^e$. For all $\alpha \in \Omega_q$, 
    \begin{equation}
        \left\lbrace \alpha m^{-1} + m \; : \; m \in \mathbb{F}_q^\ast \right\rbrace = \left\lbrace \sqrt{\alpha} m^{-1} + \sqrt{\alpha} m \; : \; m \in \mathbb{F}_q^\ast \right\rbrace.
    \end{equation}
\end{lemma}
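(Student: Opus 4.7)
The plan is to show both sides equal $\sqrt{\alpha} \cdot \{m^{-1} + m : m \in \mathbb{F}_q^\ast\}$ via a multiplicative reparametrization of the index variable. The key observation is that $\sqrt{\alpha}$ is a well-defined nonzero element of $\mathbb{F}_q$ (by the hypothesis $\alpha \in \mathrm{QR}_q$), so multiplication by $\sqrt{\alpha}$ is a bijection on $\mathbb{F}_q^\ast$.

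First I would rewrite the right-hand side as
\begin{equation*}
\left\{ \sqrt{\alpha} m^{-1} + \sqrt{\alpha} m \;:\; m \in \mathbb{F}_q^\ast \right\} = \left\{ \sqrt{\alpha}\bigl(m^{-1} + m\bigr) \;:\; m \in \mathbb{F}_q^\ast \right\},
\end{equation*}
which is manifestly $\sqrt{\alpha} \cdot B_1(1)$. Next, I would perform the substitution $m = \sqrt{\alpha}\, m'$ in the left-hand side. Because $\sqrt{\alpha} \in \mathbb{F}_q^\ast$, the map $m' \mapsto \sqrt{\alpha}\, m'$ is a bijection $\mathbb{F}_q^\ast \to \mathbb{F}_q^\ast$, so reparametrizing by $m'$ does not change the set. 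Under this substitution,
\begin{equation*}
\alpha m^{-1} + m \;=\; \alpha \cdot \bigl(\sqrt{\alpha}\, m'\bigr)^{-1} + \sqrt{\alpha}\, m' \;=\; \frac{\alpha}{\sqrt{\alpha}} \cdot (m')^{-1} + \sqrt{\alpha}\, m' \;=\; \sqrt{\alpha}\bigl( (m')^{-1} + m' \bigr),
\end{equation*}
using $\alpha / \sqrt{\alpha} = \sqrt{\alpha}$ since $(\sqrt{\alpha})^2 = \alpha$. Thus the left-hand side also equals $\sqrt{\alpha} \cdot B_1(1)$, matching the right-hand side.

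The only subtle point is ensuring that $\sqrt{\alpha}$ exists in $\mathbb{F}_q^\ast$ and that this scaling is legitimately a bijection; this is exactly what the hypothesis $\alpha \in \mathrm{QR}_q$ provides, and the convention from the paper (picking the representative in $\mathrm{QR}_q$, valid under $p \equiv 3 \bmod 4$ and odd $e$) makes $\sqrt{\alpha}$ well-defined as a function. I expect no real obstacle: the identity reduces to a one-line change of variables once the square root is named. The restrictions on $p$ and $e$ do not enter the algebra here; they serve only to make the notation $\sqrt{\alpha}$ unambiguous in the broader framework of the paper.
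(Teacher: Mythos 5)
Your proof is correct, and it takes a genuinely cleaner route than the paper's. The paper's argument computes the cardinality of each side (showing both have size $(q+1)/2$ by analyzing when two parametrizing values $m_1, m_2$ give the same element), then establishes one-sided containment $R \subseteq L$ via a substitution and invokes equal cardinality to upgrade to equality. You bypass the cardinality bookkeeping entirely: the substitution $m = \sqrt{\alpha}\,m'$ is a bijection on $\mathbb{F}_q^\ast$, so it reparametrizes the left-hand set without changing it, and the resulting expression $\sqrt{\alpha}\bigl((m')^{-1} + m'\bigr)$ matches the right-hand side literally. This gives the equality in a single change-of-variables step with no counting. Your observation that the hypotheses $p \equiv 3 \bmod 4$ and $e$ odd are not actually used in the algebra is also correct; the argument only needs some $\beta \in \mathbb{F}_q^\ast$ with $\beta^2 = \alpha$, and the parity/congruence conditions serve only to make the notation $\sqrt{\alpha}$ canonical within the paper's broader framework. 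The paper's proof is not wrong, but the counting step is overhead you have shown to be unnecessary for this particular lemma.
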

\begin{proof}
    Let $L, R$ denote the left, right hand sides of the equality, respectively. Observe that $\rho: m \mapsto m \sqrt{\alpha}$ is a permutation of $\mathbb{F}_q^\ast$. Hence
    \begin{align*}
        L &= \left\lbrace \alpha \rho(m)^{-1} + \rho(m) \; : \; m \in \mathbb{F}_q^\ast \right\rbrace = \left\lbrace \frac{\alpha}{m \sqrt{\alpha}} + m\sqrt{\alpha} \; : \; m \in \mathbb{F}_q^\ast \right\rbrace = R
    \end{align*}
    as desired.
\end{proof}

\begin{lemma}\label{lemma: bucket rescaling}
    For all $\alpha, \beta, \gamma, \delta \in \Omega_q$, 
    \begin{equation}
        B_\gamma(\alpha) = \frac{\sqrt{\gamma \alpha}}{\sqrt{\delta \beta}} \cdot B_\delta(\beta)
    \end{equation}
\end{lemma}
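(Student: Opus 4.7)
The plan is to reduce the general identity to the special case $B_\gamma(\alpha) = \sqrt{\gamma\alpha}\cdot B_1(1)$ (mentioned as the special case immediately preceding the lemma), and then obtain the stated equality by composing two instances of this special case.

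First I will unpack the parametrization of $B_\gamma$. Any line $f(x) = m_0 x + b_0$ with $m_0 b_0 = \gamma$ can be written, for some $m \in \mathbb{F}_q^\ast$, as $f(x) = m^{-1}x + \gamma m$ (set $m_0 = m^{-1}$, so $b_0 = \gamma m$). Evaluating at $\alpha$ and performing the substitution $m' = \gamma m$ (which is a bijection on $\mathbb{F}_q^\ast$ since $\gamma \in \mathrm{QR}_q \subseteq \mathbb{F}_q^\ast$), I get
\begin{equation*}
B_\gamma(\alpha) = \left\{ \alpha m^{-1} + \gamma m \;:\; m \in \mathbb{F}_q^\ast \right\} = \left\{ (\gamma\alpha)(m')^{-1} + m' \;:\; m' \in \mathbb{F}_q^\ast \right\}.
\end{equation*}
Since $\mathrm{QR}_q$ is a subgroup of $\mathbb{F}_q^\ast$ we have $\gamma\alpha \in \mathrm{QR}_q$, so Lemma \ref{lemma: distribution of square roots} applies and rewrites this set as $\{\sqrt{\gamma\alpha}\, m^{-1} + \sqrt{\gamma\alpha}\, m : m \in \mathbb{F}_q^\ast\} = \sqrt{\gamma\alpha}\cdot B_1(1)$. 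This establishes the special case.

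For the general statement, I apply the special case twice: $B_\gamma(\alpha) = \sqrt{\gamma\alpha}\cdot B_1(1)$ and $B_\delta(\beta) = \sqrt{\delta\beta}\cdot B_1(1)$. The scalar $\sqrt{\delta\beta}$ lies in $\mathrm{QR}_q \subseteq \mathbb{F}_q^\ast$ and is therefore invertible, so I can solve the second equation for $B_1(1) = (\sqrt{\delta\beta})^{-1}\cdot B_\delta(\beta)$ and substitute into the first to obtain
\begin{equation*}
B_\gamma(\alpha) \;=\; \sqrt{\gamma\alpha}\cdot B_1(1) \;=\; \frac{\sqrt{\gamma\alpha}}{\sqrt{\delta\beta}}\cdot B_\delta(\beta),
\end{equation*}
as desired.

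The only subtlety worth double-checking is the substitution step $m' = \gamma m$ and the hypothesis of Lemma \ref{lemma: distribution of square roots}, namely that $\gamma\alpha$ is itself a quadratic residue; both follow automatically from the subgroup structure of $\mathrm{QR}_q$ under the restricted parameter regime (Definition \ref{def: restricted parameter regime}). No genuine obstacle arises; the lemma is essentially a bookkeeping consequence of the special case and the distributional identity of Lemma \ref{lemma: distribution of square roots}.
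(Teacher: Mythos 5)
Your proof is correct and follows essentially the same route as the paper's: both reduce to the parametrization $B_\gamma(\alpha) = \{\gamma\alpha\, m^{-1} + m : m \in \mathbb{F}_q^\ast\}$ and apply Lemma~\ref{lemma: distribution of square roots} (once or twice) to introduce and cancel the square-root scalars. The only cosmetic difference is that you isolate the special case $B_\gamma(\alpha) = \sqrt{\gamma\alpha}\cdot B_1(1)$ and compose two instances, whereas the paper verifies the full identity directly in a single chain of set equalities.
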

\begin{proof}
    Since $\beta,\delta \in \Omega_q$, we have $\delta \beta \in \Omega_q$. From Definition \ref{def: gamma bucket eval}, we have
    \begin{align}
        \frac{\sqrt{\gamma \alpha}}{\sqrt{\delta \beta}} \cdot B_\delta(\beta) &= \frac{\sqrt{\gamma \alpha}}{\sqrt{\delta \beta}} \cdot \left\lbrace \delta \beta m^{-1} + m \; : \; m \in \mathbb{F}_q^\ast \right\rbrace\\
        &= \frac{\sqrt{\gamma \alpha}}{\sqrt{\delta \beta}} \cdot \left\lbrace \sqrt{\delta \beta} m^{-1} + \sqrt{\delta \beta} m \; : \; m \in \mathbb{F}_q^\ast \right\rbrace\\
        &= \left\lbrace \sqrt{\gamma \alpha} m^{-1} + \sqrt{\gamma \alpha} m \; : \; m \in \mathbb{F}_q^\ast \right\rbrace\\
        &= \left\lbrace \gamma \alpha m^{-1} +  m \; : \; m \in \mathbb{F}_q^\ast \right\rbrace\\
        &= B_{\gamma}(\alpha)
    \end{align}
    where the second and fourth equalities follow from applying Lemma \ref{lemma: distribution of square roots}.
\end{proof}

We have the following corollary, which is an immediate special case of Lemma \ref{lemma: bucket rescaling}.

\begin{corollary}\label{corollary: scalar evolution}
    For all $\alpha, \gamma \in \Omega_q$, $B_\gamma(\alpha) = \sqrt{\gamma} \sqrt{\alpha} \cdot  B_1(1)$.
\end{corollary}

Such a multiplicative relationship between two images $B_\gamma(\alpha)$ and $B_1(1)$ allows us to work entirely over scalar multiples of $B_1(1) \subseteq \mathbb{F}_q$. The following observation gives the size of $B_1(1)$.

\begin{observation}\label{obs: size of B11}
    Let $\mathbb{F}_q = \mathbb{F}_{p^e}$ where $p$ is odd; or $p = 2$ and $e \geq 3$. Then
    \begin{equation*}
        \left\lvert B_1(1) \right\rvert = \begin{cases}
            (q+1)/2 & p \text{ odd}\\
            q/2 & p = 2, e \geq 3.
        \end{cases}
    \end{equation*}
\end{observation}
\begin{proof}
    Observe that, given $m, n \in \mathbb{F}_q^\ast$, we have $m + m^{-1} = n + n^{-1}$ if and only if $mn = 1$.
    \begin{itemize}
        \item If field characteristic $p$ is odd, there are two cases where $m = n$ and $mn = 1$, namely $m = n = \pm 1$. If $m \neq n$ and $mn = 1$, then $m,n \in \mathbb{F}_q^\ast \setminus \lbrace \pm 1 \rbrace$ form pairs of distinct multiplicative inverses. Hence we have
        \begin{equation*}
            \left\lvert B_1(1) \right\rvert = 2 + \frac{(q-1)-2}{2} = \frac{q+1}{2}
        \end{equation*}

        \item If the field characteristic $p = 2$ with extension degree $e \geq 3$, then there is only one case where $m = n$ and $mn = 1$, namely $m = n = 1$. By a similar reasoning as the preceding case, we have
        \begin{equation*}
            \left\lvert B_1(1) \right\rvert = 1 + \frac{(q-1)-1}{2} = \frac{q}{2}
        \end{equation*}
    \end{itemize}
\end{proof}

\subsection{Scalar Representatives of Lines}

We've seen that any two sets of evaluations $B_\gamma(\alpha), B_\delta(\beta)$ are related by a multiplicative factor for any $\alpha,\beta,\gamma, \delta \in \Omega_q$. Intuitively: why bother with different sets of points if they're just multiples of each other? This intuition leads us to ``zoom in'' further; we establish in this section that the multiplicative relation between sets stems directly from the same multiplicative relation among lines. 

First, we introduce the notation
\begin{equation}
    h_m^\gamma \defeq \frac{\sqrt{\gamma}}{m} x + m\sqrt{\gamma}
\end{equation}
so that, for every coefficient product $\gamma \in \Omega_q$, we may rewrite
\begin{equation}
    B_\gamma = \left\lbrace h_m^\gamma : m \in \mathbb{F}_q^\ast \right\rbrace\subseteq \mathbb{F}_q[x].
\end{equation}
Note that every element of $B_\gamma$ is then fully specified\footnote{We note that this part of the argument extends easily to higher-degree monomials.  For example, suppose we want to compute the product of all $k$ message symbols.  Suppose $\gamma \in \mathbb{F}_q^\ast$ is a degree $k$ residue, and that $\mathbf{f}\in (\mathbb{F}_q^\ast)^{k}$ satisfies $\prod_{i=0}^{k-1} f_i = 1$. Then any polynomial $h = \sum_{i=0}^{k-1} h_i x^i$ with $\prod_{i=0}^{k-1} h_i = \gamma$ may be written as $\gamma^{1/k} f(x)$ for some $f$. However, extending later arguments require a characterization of arbitrary degree $k$ residue subgroups, which is left to future work.} by some $m \in \mathbb{F}_q^\ast$ and $\gamma \in \Omega_q$.

\begin{observation}\label{obs: everything is a multiple of 1-lines}
    For all $m \in \mathbb{F}_q^\ast$, $\gamma \in \Omega_q$ specifying $h_m^\gamma \in B_\gamma$, there exists a unique $g_m \in B_1$ such that $h_m^\gamma = \sqrt{\gamma}g_m$. Furthermore, $g_m$ is given by 
    \begin{equation}
        g_m \defeq h_m^1 = \frac{1}{m} x + m.
    \end{equation}
\end{observation}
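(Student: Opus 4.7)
The plan is to verify this statement by direct computation from the defining formula of $h_m^\gamma$, then check membership in $B_1$, then argue uniqueness from the invertibility of $\sqrt{\gamma}$.

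First I would factor $\sqrt{\gamma}$ out of the definition $h_m^\gamma = \tfrac{\sqrt{\gamma}}{m}x + m\sqrt{\gamma}$ to write $h_m^\gamma = \sqrt{\gamma}\bigl(\tfrac{1}{m}x + m\bigr)$, which exactly matches $\sqrt{\gamma}\, g_m$ with $g_m = \tfrac{1}{m}x + m$. This gives existence, and also confirms the claimed formula $g_m = h_m^1$ upon specializing $\gamma = 1$ (noting that $\sqrt{1} = 1$ in $\mathrm{QR}_q$).

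Next I would check $g_m \in B_1$ by computing $\prod \vec{g_m} = \tfrac{1}{m} \cdot m = 1$, so $g_m$ is a degree-$\leq 1$ polynomial with coefficient product $1$, placing it in $B_1$ per Definition \ref{def: gamma bucket}. Since $m \in \mathbb{F}_q^\ast$, this is well-defined.

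For uniqueness, I would suppose some $g \in B_1$ satisfies $h_m^\gamma = \sqrt{\gamma}\,g$. Because $\gamma \in \mathrm{QR}_q \subseteq \mathbb{F}_q^\ast$ we have $\sqrt{\gamma} \in \mathbb{F}_q^\ast$, so it is invertible in $\mathbb{F}_q[x]$ as a nonzero scalar; dividing gives $g = \sqrt{\gamma}^{-1} h_m^\gamma = g_m$. There is no real obstacle in this proof; it is purely a matter of unpacking the definitions and invoking invertibility of the square root in $\mathrm{QR}_q$ (guaranteed by the parameter restriction $p \equiv 3 \bmod 4$, $e$ odd). The usefulness of the observation lies in its interpretation rather than its proof: it lets us parametrize every element of every $\gamma$-bucket by a single pair $(m, \gamma) \in \mathbb{F}_q^\ast \times \mathrm{QR}_q$ with $B_1$ serving as a canonical ``generator'' of lines, which is what subsequent arguments will exploit.
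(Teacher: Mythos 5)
Your proof is correct and takes the same route as the paper, which simply notes that the observation follows trivially from factoring $\sqrt{\gamma}$ out of the defining expression $h_m^\gamma = \frac{\sqrt{\gamma}}{m}x + m\sqrt{\gamma}$. Your added details (checking $\prod\vec{g_m}=1$ so $g_m\in B_1$, and uniqueness via invertibility of the nonzero scalar $\sqrt{\gamma}$) are the right ones to fill in, and the only small imprecision is that invertibility of $\sqrt{\gamma}$ needs only $\sqrt{\gamma}\in\mathbb{F}_q^\ast$, while the parameter restriction $p\equiv 3 \bmod 4$, $e$ odd is what makes the map $\sqrt{\cdot}:\mathrm{QR}_q\to\mathrm{QR}_q$ well-defined in the first place.
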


The observation follows trivially from $h_m^\gamma = \sqrt{\gamma}\cdot h_m^1$, but its significance is that we need only consider $B_1 \subseteq \mathbb{F}_q[x]$, since any line $h_m^\gamma \in B_\gamma$ is simply a multiple of $g_m \in B_1$. 

\begin{observation}\label{obs: scale the eliminators}
    For all $m \in \mathbb{F}_q^\ast$, $\alpha,\gamma  \in \Omega_q$, and $T \subseteq \mathbb{F}_q$,
    \begin{equation}
        h_m^\gamma(\alpha) \in T \; \iff \; g_m(\alpha) \in \frac{1}{\sqrt{\gamma}}T.
    \end{equation}
\end{observation}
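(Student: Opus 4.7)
The plan is to deduce this directly from the preceding Observation~\ref{obs: everything is a multiple of 1-lines}, which gives $h_m^\gamma = \sqrt{\gamma}\, g_m$ as an equality in $\mathbb{F}_q[x]$. Since evaluation at a fixed $\alpha \in \mathbb{F}_q$ is an $\mathbb{F}_q$-algebra homomorphism $\mathrm{ev}_\alpha : \mathbb{F}_q[x] \to \mathbb{F}_q$, it commutes with scalar multiplication, and so applying $\mathrm{ev}_\alpha$ to both sides yields the scalar identity
\begin{equation*}
    h_m^\gamma(\alpha) \;=\; \sqrt{\gamma}\, g_m(\alpha) \quad \in \mathbb{F}_q.
\end{equation*}

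Next, I would observe that because $\gamma \in \mathrm{QR}_q \subseteq \mathbb{F}_q^\ast$, the element $\sqrt{\gamma}$ lies in $\mathbb{F}_q^\ast$ and is therefore invertible. Consequently, multiplication by $\sqrt{\gamma}$ is a bijection $\mathbb{F}_q \to \mathbb{F}_q$ that sends a subset $U \subseteq \mathbb{F}_q$ to $\sqrt{\gamma}\,U$, with inverse map given by multiplication by $1/\sqrt{\gamma}$. In particular, for any $y \in \mathbb{F}_q$ and any $T \subseteq \mathbb{F}_q$, we have $\sqrt{\gamma}\, y \in T$ if and only if $y \in \tfrac{1}{\sqrt{\gamma}} T$.

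Combining these two observations with $y = g_m(\alpha)$ gives the claimed equivalence:
\begin{equation*}
    h_m^\gamma(\alpha) \in T \;\iff\; \sqrt{\gamma}\, g_m(\alpha) \in T \;\iff\; g_m(\alpha) \in \tfrac{1}{\sqrt{\gamma}} T.
\end{equation*}
There is no real obstacle here; the statement is essentially a restatement of the scaling relation $h_m^\gamma = \sqrt{\gamma}\,g_m$ packaged as a membership test, and the only subtlety to flag in the proof is that the restricted parameter regime (Definition~\ref{def: restricted parameter regime}), which forces $\gamma \in \mathrm{QR}_q$, is exactly what makes $\sqrt{\gamma}$ well-defined and invertible so that $\tfrac{1}{\sqrt{\gamma}} T$ makes sense as a subset of $\mathbb{F}_q$.
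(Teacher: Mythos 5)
Your proof is correct and matches the (implicit) reasoning the paper has in mind: the paper states this Observation without a written proof, treating it as an immediate consequence of $h_m^\gamma = \sqrt{\gamma}\, g_m$ (Observation~\ref{obs: everything is a multiple of 1-lines}) together with the invertibility of $\sqrt{\gamma}$, which is exactly what you spell out. One small remark worth flagging: your argument never uses $\alpha \in \mathrm{QR}_q$ (you even write $\alpha \in \mathbb{F}_q$), so you have in fact shown the slightly more general statement for arbitrary $\alpha$; the hypothesis $\alpha \in \mathrm{QR}_q$ in the Observation is there only for consistency with the restricted parameter regime in which it is applied.
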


Hence, we can check whether any line $h_m^\gamma$, evaluated at $\alpha \in \mathbb{F}_q$, is in a set $T$ just by considering whether $g_m(\alpha) \in B_1(\alpha)$ is in a multiple of $T$. 

Corollary \ref{corollary: scalar evolution} shows that we may map the entire evaluation image of $B_\gamma$ at some point $x = \alpha$ to its image at another evaluation point $x=\beta$, simply by multiplying $B_\gamma(\alpha)$ by $\sqrt{\beta}/\sqrt{\alpha}$. But this macroscopic view does not elucidate what is happening to individual lines. We show that multiplying an evaluation map image $B_\gamma(\alpha)\subseteq \mathbb{F}_q$ by some scalar is equivalent to a permutation of $B_\gamma\subseteq \mathbb{F}_q[x]$. 

\begin{definition}[$\alpha$-Relabeling function]\label{def: relabeling fn}
    For all $\alpha,\gamma \in \Omega_q$, define the mapping $\phi_\alpha^\gamma : B_\gamma \to B_\gamma$ by $h_m^\gamma \mapsto h_{m\sqrt{\alpha}}^\gamma$. Explicitly,
    \begin{equation}
        \frac{\sqrt{\gamma}}{m} x + m\sqrt{\gamma} \mapsto \frac{\sqrt{\gamma}}{(m \sqrt{\alpha})} x + (m\sqrt{\alpha})\sqrt{\gamma}.
    \end{equation}
    Given $h_m^\gamma \in B_\gamma$, we call its image $\phi^\gamma_\alpha[h_m^\gamma] \in B_\gamma$ the $\alpha$-relabel of $h_m^\gamma$.
\end{definition}

Note that $\phi_\alpha^\gamma$ is mapping one line in $B_\gamma$ to another line also in $B_\gamma$. The following observation shows why we have chosen to call it a relabeling; the proof follows from the fact that for any $\alpha \in \Omega_q$, the map $m \mapsto m\sqrt{\alpha}$ is a permutation of $\mathbb{F}_q^\ast$, and that each element  $m \in \mathbb{F}_q^\ast$ corresponds to a distinct element $h_m^\gamma \in B_\gamma$.
\begin{observation}\label{obs: i relabel is permutation}
    For all $\alpha,\gamma \in \Omega_q$, $\phi_\alpha^\gamma : B_\gamma \to B_\gamma$ is a permutation of $B_\gamma$.
\end{observation}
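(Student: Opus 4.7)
The plan is to reduce the claim to the elementary fact that multiplication by a nonzero field element is a bijection on $\mathbb{F}_q^*$. First, I would establish that the parameterization $m \mapsto h_m^\gamma$ is a bijection from $\mathbb{F}_q^*$ to $B_\gamma$. Injectivity is immediate from comparing the constant terms: if $h_m^\gamma = h_{m'}^\gamma$ in $\mathbb{F}_q[x]$, then $m\sqrt{\gamma} = m'\sqrt{\gamma}$, and since $\sqrt{\gamma} \in \mathbb{F}_q^*$ we conclude $m = m'$. Surjectivity is the definition of $B_\gamma$ given in the paragraph preceding Observation~\ref{obs: everything is a multiple of 1-lines}.

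Next, I would observe that $\phi_\alpha^\gamma$ factors through $\mathbb{F}_q^*$ in the following sense: letting $\Psi_\gamma: \mathbb{F}_q^* \to B_\gamma$ denote the bijection $m \mapsto h_m^\gamma$, Definition~\ref{def: relabeling fn} gives exactly $\phi_\alpha^\gamma = \Psi_\gamma \circ \mu_{\sqrt{\alpha}} \circ \Psi_\gamma^{-1}$, where $\mu_{\sqrt{\alpha}}: \mathbb{F}_q^* \to \mathbb{F}_q^*$ denotes multiplication by $\sqrt{\alpha}$. Since $\alpha \in \mathrm{QR}_q$ implies $\sqrt{\alpha} \in \mathrm{QR}_q \subseteq \mathbb{F}_q^*$, the map $\mu_{\sqrt{\alpha}}$ is a permutation of $\mathbb{F}_q^*$ (with inverse $\mu_{1/\sqrt{\alpha}}$). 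Conjugating a permutation by a bijection yields a permutation, so $\phi_\alpha^\gamma$ is a permutation of $B_\gamma$.

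There is essentially no obstacle here: the content of the observation is just that the relabeling is built from multiplication by a nonzero scalar on the index set $\mathbb{F}_q^*$. The only mild subtlety worth spelling out is the injectivity of $m \mapsto h_m^\gamma$, to justify that the observation is really about a map on $B_\gamma$ and not merely on the indexing set.
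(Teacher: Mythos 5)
Your proof is correct and follows essentially the same route as the paper, which justifies the observation by noting that $m \mapsto m\sqrt{\alpha}$ is a permutation of $\mathbb{F}_q^\ast$ and that $m \mapsto h_m^\gamma$ is a bijection onto $B_\gamma$. Your version simply makes the conjugation structure $\phi_\alpha^\gamma = \Psi_\gamma \circ \mu_{\sqrt{\alpha}} \circ \Psi_\gamma^{-1}$ explicit and spells out the injectivity of the parameterization, which the paper leaves implicit.
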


Given some $h_m^\gamma \in B_\gamma$ and $\alpha \in \Omega_q$, we denote by $\phi_\alpha^\gamma[h_m^\gamma](\alpha)$ the evaluation of the $\alpha$-relabel of $h_m^\gamma$ at an evaluation point $\alpha$. We show that the multiplicative relationship between lines in $B_\gamma$ and lines in $B_1$ (Observation \ref{obs: everything is a multiple of 1-lines}) is preserved by this relabeling.

\begin{lemma}
    Let $\alpha,\gamma \in \Omega_q$ and $h_m^\gamma \in B_\gamma$. Then
    \begin{equation}
        \phi_\alpha^\gamma\left[ h_m^\gamma \right](\alpha) = \sqrt{\gamma} \cdot \phi_\alpha^1 \left[ g_m \right] (\alpha).
    \end{equation}
\end{lemma}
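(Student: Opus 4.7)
The plan is to unwind the definitions and perform a direct computation; the content of the lemma is that multiplication by $\sqrt{\gamma}$ commutes with the relabeling-and-evaluation operation, and this follows purely from the multiplicative structure of the definitions. No clever idea is required — the only mild subtlety is a single simplification $\alpha/\sqrt{\alpha} = \sqrt{\alpha}$, which holds because $\alpha \in \mathrm{QR}_q$ and $\sqrt{\alpha} \in \mathrm{QR}_q$ was defined to satisfy $(\sqrt{\alpha})^2 = \alpha$.

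First I would expand the left-hand side by substituting $\phi_\alpha^\gamma[h_m^\gamma] = h_{m\sqrt{\alpha}}^\gamma$ from Definition~\ref{def: relabeling fn}, and then evaluating the resulting line at $x=\alpha$:
\begin{equation*}
    \phi_\alpha^\gamma[h_m^\gamma](\alpha) = h_{m\sqrt{\alpha}}^\gamma(\alpha) = \frac{\sqrt{\gamma}}{m\sqrt{\alpha}}\cdot \alpha + (m\sqrt{\alpha})\sqrt{\gamma}.
\end{equation*}
Using $\alpha/\sqrt{\alpha} = \sqrt{\alpha}$, this simplifies to $\sqrt{\gamma}\sqrt{\alpha}\left(\frac{1}{m} + m\right)$.

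Next I would compute the right-hand side analogously, using Observation~\ref{obs: everything is a multiple of 1-lines} (which identifies $g_m$ with $h_m^1$) and again applying Definition~\ref{def: relabeling fn}:
\begin{equation*}
    \phi_\alpha^1[g_m](\alpha) = g_{m\sqrt{\alpha}}(\alpha) = \frac{1}{m\sqrt{\alpha}}\cdot \alpha + m\sqrt{\alpha} = \sqrt{\alpha}\left(\frac{1}{m} + m\right).
\end{equation*}
Multiplying through by $\sqrt{\gamma}$ yields exactly the simplified form of the left-hand side, completing the proof.

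The only potential obstacle is bookkeeping with the $\sqrt{\cdot}$ notation, which is only well-defined for quadratic residues; this is handled automatically since $\alpha, \gamma \in \mathrm{QR}_q$ by hypothesis, and one never needs to take square roots of anything outside $\mathrm{QR}_q$. I would emphasize in the write-up that the lemma is essentially saying that the diagram relating $B_\gamma$ and $B_1$ via the scaling $h_m^\gamma = \sqrt{\gamma}\, g_m$ (Observation~\ref{obs: everything is a multiple of 1-lines}) commutes with the relabeling $\phi_\alpha$, so that the multiplicative correspondence between $B_\gamma$ and $B_1$ is preserved under both relabeling and evaluation. This compatibility is what will allow subsequent arguments to reduce the analysis of all buckets $B_\gamma$ to the single ``reference'' bucket $B_1$.
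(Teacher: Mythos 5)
Your proof is correct and follows essentially the same approach as the paper's: unwind Definition~\ref{def: relabeling fn} and Observation~\ref{obs: everything is a multiple of 1-lines} and compare. The only cosmetic difference is that you evaluate both sides all the way down to the explicit scalar $\sqrt{\gamma}\sqrt{\alpha}\left(\tfrac{1}{m}+m\right)$, whereas the paper stops one step earlier at the common intermediate expression $\sqrt{\gamma}\cdot g_{m\sqrt{\alpha}}(\alpha)$ without simplifying further (the simplification $\alpha/\sqrt{\alpha}=\sqrt{\alpha}$ is the content of the next lemma in the paper); both routes are fine and reach the same conclusion.
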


\begin{proof}
    On the left-hand side, observe that, applying Definition \ref{def: relabeling fn} and Observation \ref{obs: everything is a multiple of 1-lines},
    \begin{align}
        \phi_\alpha^\gamma\left[ h_m^\gamma \right](\alpha) &= h_{m\sqrt{\alpha}}^\gamma(\alpha) = \sqrt{\gamma}\cdot h_{m\sqrt{\alpha}}^1(\alpha) = \sqrt{\gamma}\cdot g_{m\sqrt{\alpha}}(\alpha).
    \end{align}
    On the other hand, from the right we have
    \begin{align}
        \sqrt{\gamma}\cdot \phi_\alpha^1 \left[ g_m\right] (\alpha) = \sqrt{\gamma} \cdot g_{m \sqrt{\alpha}} (\alpha)
    \end{align}
    as desired.
\end{proof}

\begin{lemma}
    Let $\alpha \in \Omega_q$ and $g_m \in B_1$. Then $\phi_\alpha^1\left[ g_m \right](\alpha) = \sqrt{\alpha} g_m(1)$.
\end{lemma}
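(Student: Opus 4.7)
The plan is a direct unfolding of the definitions; no cleverness should be required. First, apply Definition \ref{def: relabeling fn} with $\gamma = 1$: the relabeling sends $g_m = h_m^1 \mapsto h_{m\sqrt{\alpha}}^1 = g_{m\sqrt{\alpha}}$. Hence
\begin{equation*}
\phi_\alpha^1[g_m](\alpha) = g_{m\sqrt{\alpha}}(\alpha).
\end{equation*}

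Next, I would substitute the explicit formula $g_{m'}(x) = \frac{1}{m'}x + m'$ from Observation \ref{obs: everything is a multiple of 1-lines} with $m' = m\sqrt{\alpha}$, and evaluate at $x = \alpha$:
\begin{equation*}
g_{m\sqrt{\alpha}}(\alpha) = \frac{\alpha}{m\sqrt{\alpha}} + m\sqrt{\alpha} = \frac{\sqrt{\alpha}}{m} + m\sqrt{\alpha} = \sqrt{\alpha}\left(\frac{1}{m} + m\right).
\end{equation*}
Here I use that $\alpha/\sqrt{\alpha} = \sqrt{\alpha}$, which is valid since $\alpha \in \mathrm{QR}_q$ and $\sqrt{\alpha} \in \mathrm{QR}_q \subseteq \mathbb{F}_q^\ast$ has a well-defined inverse.

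Finally, recognize the parenthesized factor as $g_m(1) = \frac{1}{m} + m$, giving $\phi_\alpha^1[g_m](\alpha) = \sqrt{\alpha}\,g_m(1)$, as claimed. There is no real obstacle here; the statement is essentially a reindexing identity that expresses, at the level of individual lines, the multiplicative relationship $B_1(\alpha) = \sqrt{\alpha}\cdot B_1(1)$ already extracted in Corollary \ref{corollary: scalar evolution}. The lemma simply records that this scaling is realized pointwise by the $\alpha$-relabeling map on $B_1$.
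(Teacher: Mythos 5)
Your proof is correct and follows exactly the same direct unfolding as the paper's own argument: apply the relabeling to get $g_{m\sqrt{\alpha}}(\alpha)$, substitute the explicit formula, simplify $\alpha/\sqrt{\alpha} = \sqrt{\alpha}$, and factor out $\sqrt{\alpha}$ to recognize $g_m(1)$. Nothing to add.
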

\begin{proof}
    Recalling the definition of $g_m \in B_1$ within Observation \ref{obs: everything is a multiple of 1-lines}, observe
    \begin{align}
        \phi_\alpha^1\left[g_m\right](\alpha) &= g_{m\sqrt{\alpha}}(\alpha)\\
        &= \left.\frac{1}{m\sqrt{\alpha}}x + m \sqrt{\alpha}\right\rvert_{x=\alpha}\\
        &= \sqrt{\alpha}\left(\frac{1}{m} + m\right)\\
        &= \sqrt{\alpha} \cdot g_m(1).
    \end{align}
\end{proof}

Combining these two results gives the following theorem.

\begin{theorem}\label{thm: relabel and flatten}
    Let $\alpha, \gamma \in \Omega_q$ and $h_m^\gamma \in B_\gamma$. Then
    \begin{equation}
        \phi_\alpha^\gamma\left[ h_m^\gamma \right](\alpha) = \sqrt{\gamma \alpha} \cdot g_m(1).
    \end{equation}
\end{theorem}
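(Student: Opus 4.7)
The plan is to combine the two preceding lemmas directly; Theorem \ref{thm: relabel and flatten} is essentially an immediate corollary, but I would want to verify the one non-trivial step, which is the multiplicativity of $\sqrt{\cdot}$ on $\mathrm{QR}_q$.

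First, I would apply the first preceding lemma to rewrite
\[
\phi_\alpha^\gamma\left[h_m^\gamma\right](\alpha) \;=\; \sqrt{\gamma}\cdot \phi_\alpha^1\left[g_m\right](\alpha).
\]
This reduces the statement about an arbitrary $\gamma$-bucket line $h_m^\gamma$ to a statement about its ``canonical'' representative $g_m \in B_1$, using Observation \ref{obs: everything is a multiple of 1-lines}. Next, I would apply the second preceding lemma, which evaluates $\phi_\alpha^1[g_m](\alpha) = \sqrt{\alpha}\cdot g_m(1)$, to obtain
\[
\phi_\alpha^\gamma\left[h_m^\gamma\right](\alpha) \;=\; \sqrt{\gamma}\cdot \sqrt{\alpha} \cdot g_m(1).
\]

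The only remaining step is to justify that $\sqrt{\gamma}\cdot \sqrt{\alpha} = \sqrt{\gamma\alpha}$ for $\alpha, \gamma \in \mathrm{QR}_q$. Since $\mathrm{QR}_q$ is a multiplicative subgroup of $\mathbb{F}_q^\ast$, both $\sqrt{\gamma}, \sqrt{\alpha} \in \mathrm{QR}_q$, so their product $\sqrt{\gamma}\sqrt{\alpha}$ also lies in $\mathrm{QR}_q$. Moreover $(\sqrt{\gamma}\sqrt{\alpha})^2 = \gamma\alpha$, so $\sqrt{\gamma}\sqrt{\alpha}$ is an element of $\mathrm{QR}_q$ whose square is $\gamma\alpha$. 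By the uniqueness of such an element (established in Claim \ref{claim: QRq is squares} and used to well-define $\sqrt{\cdot}: \mathrm{QR}_q \to \mathrm{QR}_q$), we conclude $\sqrt{\gamma}\sqrt{\alpha} = \sqrt{\gamma\alpha}$, yielding the desired equality.

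I do not anticipate any real obstacle here; the theorem is a two-line chain of the preceding lemmas plus a one-line appeal to the definition of $\sqrt{\cdot}$ on $\mathrm{QR}_q$. The only place I would pause is to make sure the parameter-regime hypothesis $p \equiv 3 \bmod 4$ with $e$ odd (which ensures $-1 \notin \mathrm{QR}_q$ and that $\sqrt{\cdot}$ is a well-defined function on $\mathrm{QR}_q$) is silently in force here, inherited from Definition \ref{def: restricted parameter regime}; otherwise the uniqueness needed in the last step would fail.
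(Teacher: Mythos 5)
Your proof is correct and takes exactly the same route as the paper: the paper simply states ``Combining these two results gives the following theorem,'' chaining the two preceding lemmas just as you do. You are in fact slightly more careful than the paper in explicitly verifying $\sqrt{\gamma}\sqrt{\alpha} = \sqrt{\gamma\alpha}$ via the uniqueness of square roots in $\mathrm{QR}_q$ under the standing hypotheses.
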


We extend Theorem \ref{thm: relabel and flatten} with the following corollary, whose proof is immediate. This is the $\alpha$-relabel analogue of Observation \ref{obs: scale the eliminators}, and it is the tool we will need to reduce a subset of QM to a simpler algorithm in the following section.

\begin{corollary}\label{cor: label land eliminators}
    For all $m \in \mathbb{F}_q^\ast$, $\alpha, \gamma \in \Omega_q$, and $T \subseteq \mathbb{F}_q$,
    \begin{equation}
        \phi_\alpha^\gamma \left[ h_m^\gamma \right] (\alpha) \in T \quad\iff\quad g_m(1) = m + m^{-1} \in \frac{1}{\sqrt{\gamma \alpha}} T.
    \end{equation}
\end{corollary}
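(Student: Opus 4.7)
The plan is to observe that this corollary is essentially just Theorem \ref{thm: relabel and flatten} combined with the invertibility of scalar multiplication by $\sqrt{\gamma\alpha}$. The first step is to apply Theorem \ref{thm: relabel and flatten} directly to rewrite the left-hand expression as
\[
    \phi_\alpha^\gamma\left[h_m^\gamma\right](\alpha) \;=\; \sqrt{\gamma\alpha}\cdot g_m(1).
\]
Since $\alpha,\gamma \in \mathrm{QR}_q$, the product $\gamma\alpha$ lies in $\mathrm{QR}_q$ and in particular has a well-defined, nonzero square root $\sqrt{\gamma\alpha} \in \mathrm{QR}_q \subseteq \mathbb{F}_q^\ast$. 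Thus scalar multiplication by $\sqrt{\gamma\alpha}$ is an invertible map $\mathbb{F}_q \to \mathbb{F}_q$.

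Next, I would use this invertibility to convert the set membership: for any $T \subseteq \mathbb{F}_q$ and any $y \in \mathbb{F}_q$,
\[
    \sqrt{\gamma\alpha}\cdot y \in T \quad\iff\quad y \in \tfrac{1}{\sqrt{\gamma\alpha}}T,
\]
by definition of the set $\tfrac{1}{\sqrt{\gamma\alpha}}T \defeq \{\tfrac{1}{\sqrt{\gamma\alpha}}t : t \in T\}$. Specializing $y = g_m(1)$ gives the claimed equivalence. The final cosmetic step is to recall from Observation \ref{obs: everything is a multiple of 1-lines} that $g_m(x) = m^{-1}x + m$, so that $g_m(1) = m + m^{-1}$, matching the form stated in the corollary.

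There is essentially no obstacle here: the only nontrivial content has already been packaged into Theorem \ref{thm: relabel and flatten}. The one thing to be slightly careful about is ensuring $\sqrt{\gamma\alpha}$ is defined and nonzero, which is immediate since $\mathrm{QR}_q$ is a multiplicative subgroup of $\mathbb{F}_q^\ast$ and the square root map on $\mathrm{QR}_q$ is well-defined in the chosen parameter regime (odd $e$ and $p \equiv 3 \bmod 4$). Hence the corollary reduces to one application of Theorem \ref{thm: relabel and flatten} followed by clearing a nonzero scalar.
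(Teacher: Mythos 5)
Your proof is correct and is exactly the intended (and indeed only) argument: the paper itself marks this corollary's proof as ``immediate,'' and the content is precisely the application of Theorem \ref{thm: relabel and flatten} followed by clearing the nonzero scalar $\sqrt{\gamma\alpha}$. Your added care in noting that $\gamma\alpha \in \mathrm{QR}_q$ so that $\sqrt{\gamma\alpha}$ is well-defined and nonzero is a reasonable sanity check but adds nothing beyond what the paper leaves implicit.
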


Together, Theorem \ref{thm: relabel and flatten} and Corollary \ref{cor: label land eliminators} capture the heart of our simplification; we need not worry ourselves with such objects such as $B_\gamma \subseteq \mathbb{F}_q[x]$ or $h_m^\gamma \in B_\gamma$. By using $\alpha$-relabels, during a round which queries a server indexed by $\alpha \in \Omega_q$:

\begin{enumerate}
    \item We can send a line $h_m^\gamma$ to its $\alpha$-relabel $h_{m\sqrt{\alpha}}^\gamma$; then

    \item We evaluate $h_{m\sqrt{\alpha}}^\gamma(\alpha)$ as a multiple of $g_m(1)$.
\end{enumerate}

We need only track $\alpha, \gamma \in \Omega_q$ and $m \in \mathbb{F}_q^\ast$. \textit{Everything} is some explicit scalar multiple of the known value $g_m(1) \in B_1(1) \subseteq \mathbb{F}_q$, so we can work simply over $B_1(1) \subseteq \mathbb{F}_q$.

\begin{remark}
    Note that the $\alpha$-relabeling of two distinct lines $h_m^\gamma \in B_\gamma$ and $h_m^\delta \in B_\delta$ preserves their scalar multiplicative relationship at the evaluation point $\alpha \in \Omega_q$. Explicitly, observe that the lines $h_m^\gamma, h_m^\delta$ are scalar multiples of each other:
    \begin{align}
        \frac{\sqrt{\delta}}{\sqrt{\gamma}} \cdot h_m^\gamma(\alpha) &= \frac{\sqrt{\delta }}{\sqrt{\gamma }}\left(\frac{\sqrt{\gamma}}{m} \alpha + m \sqrt{\gamma}\right)= \frac{\sqrt{\delta}}{m} \alpha+ m\sqrt{\delta} = h_m^\delta (\alpha).
    \end{align}
    The exact relation holds for their $\alpha$-relabels, evaluated at $\alpha$; explicitly, applying Theorem \ref{thm: relabel and flatten}, we see
    \begin{align*}
        \frac{\sqrt{\delta}}{\sqrt{\gamma}} \cdot \phi_\alpha^{\gamma} \left[ h_m^\gamma \right] (\alpha)= \frac{\sqrt{\delta}}{\sqrt{\gamma}}\left( \sqrt{\gamma \alpha}\cdot g_m(1)\right) = \sqrt{\delta \alpha} \cdot g_m(1) = \phi_\alpha^{\delta} \left[ h_m^\delta \right] (\alpha).
    \end{align*}
\end{remark}

\subsection{Reinterpreting Leakage Functions}

We now map the sets $T_i$ (defining bit-valued leakage functions; see Algorithm \ref{alg: QM Full}) to our new, relabeled setting. Suppose a round $i \in [t]$ is associated with a server indexed by $\alpha_i \in \Omega_q$ (i.e., the server holding the evaluation $f(\alpha_i)$). A candidate $h_m^\gamma \in B_\gamma$ is deemed inconsistent with the $i$th leakage bit $b$ (hence, eliminated as a candidate for $f$) if $b=0$ and $h_m^\gamma(\alpha_i) \in T_i$, or if $b=1$ and $h_m^\gamma(\alpha_i) \in \mathbb{F}_q \setminus T_i$.  

\begin{definition}\label{def: converted eliminators}
Given some round $i \in [t]$ associated with $\alpha_i \in \Omega_q$ and $T_i\subseteq \mathbb{F}_q$, define
\begin{equation}
    U_i \defeq \bigcup_{\gamma \in \Omega_q} \left\lbrace \phi_{\alpha_i}^\gamma \left[ h_m^\gamma \right] \; : \; h_m^\gamma \in  B_\gamma, \; h_m^\gamma(\alpha_i)\in T_i \right\rbrace \subseteq \mathbb{F}_q[x]. 
\end{equation}
(That is, $U_i$ is the set of $\alpha_i$-relabels $\phi_{\alpha_i}^\gamma\left[ h_m^\gamma\right]$ of lines $h_m^\gamma$ satisfying $h_m^\gamma(\alpha_i) \in T_i$.) Furthermore, define
\begin{equation}\label{eqn: transformed eliminators}
    V_i \defeq \frac{1}{\sqrt{\alpha_i}}\left\lbrace f(\alpha_i) : f \in U_i \right\rbrace.
\end{equation}
\end{definition}
We now show that these $V_i$, as constructed in \eqref{eqn: transformed eliminators}, implement a similar action as $T_i$, but from the viewpoint of $\alpha_i$-relabels.

\begin{theorem}\label{thm: eliminator conversion}
    As per Definition~\ref{def: restricted parameter regime}, consider an instance of QM restricted to coefficient products and Reed-Solomon evaluation points that both lie in $\Omega_q$. During a QM round indexed by $i\in [t]$, let $m \in \mathbb{F}_q^\ast$, $\gamma, \alpha_i \in \Omega_q$; then
    \begin{equation}
        h_m^\gamma(\alpha_i) \in T_i \quad \implies \quad g_m(1) \in \frac{1}{\sqrt{\gamma}}V_i.
    \end{equation}
\end{theorem}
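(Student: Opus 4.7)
The plan is to simply unwind the definitions and invoke Theorem~\ref{thm: relabel and flatten}. Suppose $h_m^\gamma(\alpha_i) \in T_i$. By Definition~\ref{def: converted eliminators}, the $\alpha_i$-relabel $\phi_{\alpha_i}^\gamma\left[ h_m^\gamma \right]$ belongs to $U_i$. Evaluating this relabeled line at $\alpha_i$ therefore places $\phi_{\alpha_i}^\gamma\left[ h_m^\gamma \right](\alpha_i)$ into the set $\{ f(\alpha_i) : f \in U_i \}$, which by the definition of $V_i$ in~\eqref{eqn: transformed eliminators} means $\phi_{\alpha_i}^\gamma\left[ h_m^\gamma \right](\alpha_i) \in \sqrt{\alpha_i}\, V_i$.

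The next step is to rewrite the left-hand side using Theorem~\ref{thm: relabel and flatten}, which gives $\phi_{\alpha_i}^\gamma\left[ h_m^\gamma \right](\alpha_i) = \sqrt{\gamma \alpha_i}\cdot g_m(1)$. Substituting this into the containment yields $\sqrt{\gamma\alpha_i}\cdot g_m(1) \in \sqrt{\alpha_i}\, V_i$. At this point I would use the multiplicativity of $\sqrt{\cdot}$ on $\mathrm{QR}_q$: since $\gamma, \alpha_i \in \mathrm{QR}_q$ and $\sqrt{\gamma}\sqrt{\alpha_i}$ is an element of $\mathrm{QR}_q$ whose square is $\gamma\alpha_i$, uniqueness of the square root in $\mathrm{QR}_q$ (Claim~\ref{claim: QRq is squares}) forces $\sqrt{\gamma\alpha_i} = \sqrt{\gamma}\sqrt{\alpha_i}$. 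Dividing both sides of the containment by the nonzero scalar $\sqrt{\alpha_i}$ yields $\sqrt{\gamma}\cdot g_m(1) \in V_i$, and dividing once more by $\sqrt{\gamma}$ gives $g_m(1) \in \frac{1}{\sqrt{\gamma}} V_i$, as claimed.

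There is no real obstacle in this proof; the work has already been done upstream. The substance of the theorem is really concentrated in (i)~the definition of $U_i$ (which packages the $T_i$-consistency condition in terms of relabeled lines) and (ii)~Theorem~\ref{thm: relabel and flatten} (which collapses a relabeled line evaluated at $\alpha_i$ to a scalar multiple of $g_m(1)$). The only subtle point worth flagging explicitly in the writeup is the multiplicativity $\sqrt{\gamma\alpha_i} = \sqrt{\gamma}\sqrt{\alpha_i}$, which in turn relies on the standing assumption that $p \equiv 3 \bmod 4$ and $e$ is odd (so that $-1 \notin \mathrm{QR}_q$ and the square root in $\mathrm{QR}_q$ is single-valued). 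Everything else is a bookkeeping exercise in chasing the definitions.
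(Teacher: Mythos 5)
Your proof is correct and takes essentially the same route as the paper's: unwind Definition~\ref{def: converted eliminators} to get $\phi_{\alpha_i}^\gamma[h_m^\gamma](\alpha_i) \in \sqrt{\alpha_i}\,V_i$, apply Theorem~\ref{thm: relabel and flatten} to collapse the left-hand side to $\sqrt{\gamma\alpha_i}\cdot g_m(1)$, then cancel. The one small difference is that you explicitly justify $\sqrt{\gamma\alpha_i} = \sqrt{\gamma}\sqrt{\alpha_i}$ via uniqueness of square roots in $\mathrm{QR}_q$, a step the paper's proof uses silently; making it explicit is a reasonable choice.
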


In words, if a line $h_m^\gamma$ is removed from consideration in round $i$ of a QM scheme, its corresponding point $g_m(1)$ must be contained in a rescaling of $V_i$. Note the similarity of this statement with Corollary \ref{cor: label land eliminators}. The significance is clear when combined with the previous section: 
\begin{enumerate}
    \item By Observation \ref{obs: i relabel is permutation}, $\phi_{\alpha_i}^\gamma$ bijects each $h_m^\gamma \in B_\gamma$ to an $\alpha_i$-relabel $\phi_{\alpha_i}^\gamma\left[ h_m^\gamma \right]\in B_\gamma$.
    \item By Theorem \ref{thm: relabel and flatten}, we can easily compute the value of $\alpha_i$-relabels at evaluation parameter $\alpha_i$ as a rescaling of $g_m(1)\in B_1(1)$.
\end{enumerate}
Thus, for each $\gamma \in \Omega_q$, there is an explicit mapping of each $h_m^\gamma \in B_\gamma$ to $g_m \in B_1$ (see Observation \ref{obs: everything is a multiple of 1-lines}), and we use this correspondence to map $B_\gamma \subseteq \mathbb{F}_q[x]$ onto (a rescaling of) $B_1(1)\subseteq \mathbb{F}_q$. This suggests that we make a copy of $B_1(1)$ for each $\gamma \in \Omega_q$, each representing some $B_\gamma$, and mirror the removal of lines from $B_\gamma$ by the removal of points from $B_1(1)$. Theorem \ref{thm: eliminator conversion} precisely tells us that this is possible. This high-level description captures the main idea of an algorithm we make explicit in the next section.

\begin{proof}[Proof of Theorem \ref{thm: eliminator conversion}]
    Suppose that we query a server indexed by $\alpha_i \in \Omega_q$ holding $f(\alpha_i)$, whose $i$th leakage bit $b_i$ is inconsistent with some candidate $h_m^\gamma \in B_\gamma$ for the underlying Reed-Solomon message polynomial $f(x)$. Assume without loss of generality that $b_i = 0$. Then removing $h_m^\gamma$ from consideration occurs precisely when $h_m^\gamma(\alpha_i) \in T_i$. By the construction of $U_i, V_i$,
    \begin{align}\label{eqn: eliminator conversion iffs}
        h_m^\gamma(\alpha_i) \in T_i &\implies \phi_{\alpha_i}^\gamma \left[ h_m^\gamma \right] \in U_i \implies \frac{1}{\sqrt{\alpha_i}} \cdot \phi_{\alpha_i}^\gamma\left[ h_m^\gamma \right](\alpha_i) \in V_i.
    \end{align}
    By Theorem \ref{thm: relabel and flatten}, we rewrite
    \begin{equation}
        \frac{1}{\sqrt{\alpha_i}} \cdot \phi_{\alpha_i}^\gamma\left[ h_m^\gamma \right](\alpha_i) = \frac{1}{\sqrt{\alpha_i}}\left(\sqrt{\gamma \alpha_i} \cdot g_m(1) \right) = \sqrt{\gamma}\cdot g_m(1)
    \end{equation}
    which, substituted into \eqref{eqn: eliminator conversion iffs}, yields the desired result.
\end{proof}

\section{A Key Bound}\label{sec: key bound}
The primary goal of this section is to prove the following theorem, which underpins our later analysis.

\begin{theorem}\label{thm: size of scaled pairs set}
    Let $\mathbb{F}_q = \mathbb{F}_{p^e}$ where $p$ is an odd prime; or $p = 2$ and $e \geq 3$. Then there exists $a \in \Omega_q$ and $b \in \mathbb{F}_q^\ast \setminus \Omega_q$ such that $\lbrace a, b \rbrace \subseteq B_1(1)$ (Definition \ref{def: gamma bucket eval}) along with a choice of square roots (Definition \ref{def: square roots in general}) so that, for all $\delta \in \Omega_q$,
    \begin{equation*}
         \left\lvert\bigcup_{\gamma \in \Omega_q \setminus \lbrace \delta \rbrace} \sqrt{\gamma}\cdot \lbrace a, b \rbrace\right\rvert = \begin{cases}
             q-3 & p > 2\\
             q-4 & p = 2, e \geq 3
         \end{cases}
    \end{equation*}
\end{theorem}

Theorem \ref{thm: size of scaled pairs set} is proven below in three parts. Observation \ref{obs: scale pairs when -1 not in QRq} addresses the case where $p = 3 \text{ mod } 4$ and $e$ odd, and Observation \ref{obs: scale pairs when -1 in QRq} addresses the case where $p = 1 \text{ mod } 4$ or $e$ even. Finally, Observation \ref{obs: scale pairs over binary extension} addresses the case where $p=2$ and $e \geq 3$.

\subsection{\texorpdfstring{Odd prime characteristic $p$}{Odd prime characteristic p}}

In this regime, $\Omega_q = \mathrm{QR}_q$. A critical property of $B_1(1)$ is that the set always contains at least one element of $\mathrm{QR}_q$ and another from $\mathbb{F}_q^\ast \setminus \mathrm{QR}_q$.

\begin{lemma}\label{lem: B11 has one QR and one non QR}
    Let $\mathbb{F}_q = \mathbb{F}_{p^e}$ with odd prime characteristic $p$ and field order\footnote{The condition $q \neq 5$ is necessary as Lemma \ref{lem: B11 has one QR and one non QR} is simply untrue in that setting. Indeed, over $\mathbb{F}_5$, $B_1(1) = \lbrace 0, 2, 3 \rbrace$ and $QR_5 = \lbrace 1,4 \rbrace$ have no elements in common.} $q > 5$. Define
    \begin{equation}
        B_1(1) \defeq \left\lbrace m^{-1} + m : m \in \mathbb{F}_q^\ast \right\rbrace \subseteq \mathbb{F}_q.
    \end{equation}
    Then there exist distinct, non-zero $a, b \in B_1(1) \setminus \lbrace 0 \rbrace$ such that $a \in \mathrm{QR}_q$ and $b \not\in \mathrm{QR}_q$.
\end{lemma}

\begin{proof}
    Suppose towards a contradiction that $B_1(1)$ does not contain a pair of distinct elements $a,b \in B_1(1) \setminus \lbrace 0 \rbrace$ such that $a \in \mathrm{QR}_q$ and $b \not\in \mathrm{QR}_q$. This implies that all elements of $B_1(1)\setminus \lbrace 0 \rbrace$ are either quadratic residues or quadratic non-residues. Recall the proof of Lemma \ref{lemma: distribution of square roots} established that $|B_1(1)| = (q+1)/2$; with the preceding observations, this implies
    \begin{align}
        \left\lvert \sum_{b \in B_1(1)} \chi(b) \right\rvert &= \left\lvert B_1(1) \setminus \lbrace 0 \rbrace \right\rvert \\
        &\geq \frac{q+1}{2}-1 \\
        & = \frac{q-1}{2}. \label{eqn: lower bound on partial char sum}
    \end{align}

    On the other hand, let $g(x) \defeq x^{-1} + x$; then for all $m,m' \in \mathbb{F}_q^\ast$, $g(m) = b$ if and only if $g(m^{-1}) = b$, and if $m \neq m'$ then  $g(m) = g(m')$ if and only if $m' = m^{-1}$. Hence, by the definition of $B_1(1)$ and the assumption that $B_1(1) \setminus \lbrace 0 \rbrace$ contains only either all quadratic residues or quadratic non-residues,
    \begin{align}
        \left\lvert \sum_{b \in B_1(1)} \chi(b) \right\rvert &= \frac{1}{2}\left\lvert \sum_{m \in \mathbb{F}_q^\ast} \chi(m^{-1} + m) \right\rvert \\
        &= \frac{1}{2}\left\lvert \sum_{m \in \mathbb{F}_q^\ast} \chi(m^{-1}) \chi(1+m^2) \right\rvert\\
        &= \frac{1}{2}\left\lvert \sum_{m \in \mathbb{F}_q} \chi(m(m^2 + 1)) \right\rvert \label{eqn: B11 complete char sum}
    \end{align}
    where the second, third equalities follow from Observation \ref{obs: two quadratic char props}. The summation in \eqref{eqn: B11 complete char sum} is over the entire field $\mathbb{F}_q$ rather than only $\mathbb{F}_q^\ast$, making it a complete character sum (Definition \ref{def: complete quad char sum}). Furthermore, $f(m) \defeq m(m^2 + 1)$ is square-free\footnote{Indeed, it suffices to observe that its factor $g(m) \defeq m^2 + 1$ is square free, as its formal derivative $g'(m) = 2m$ has only $m=0$ as a root, while $g(0) = 1$.} when $p > 2$. Hence, we may apply Theorem \ref{thm: weil char sum estimate} to \eqref{eqn: B11 complete char sum} and observe
    \begin{align}
        \left\lvert \sum_{b \in B_1(1)} \chi(b) \right\rvert = \frac{1}{2}\left\lvert \sum_{m \in \mathbb{F}_q} \chi(m(m^2 + 1)) \right\rvert \leq \frac{(\deg(f)-1)\sqrt{q}}{2} = \sqrt{q}.
    \end{align}
    Combined with \eqref{eqn: lower bound on partial char sum}, this implies
    \begin{equation}
        \frac{q-1}{2} \leq  \sqrt{q} \quad \iff \quad q-1 \leq 2 \sqrt{q}
    \end{equation}
    which is false for any prime power $q \geq 7$. It remains only to consider $q=3$, in which case $B_1(1) = \lbrace 1, 2 \rbrace$ and $\mathrm{QR}_3 = \lbrace 1 \rbrace$ clearly satisfy Lemma \ref{lem: B11 has one QR and one non QR}, since $B_1(1)$ contains precisely one quadratic residue and one non-residue.
\end{proof}

\subsubsection{\texorpdfstring{$p = 3 \text{ mod } 4$, $e$ odd}{p = 3 mod 4, e odd}}

In the case $p = 3 \text{ mod } 4$ and $e$ is odd, we choose square root representatives over $\Omega_q = \mathrm{QR}_q$ as follows.

\begin{claim}[Square roots over $\mathrm{QR}_q$]\label{claim: nice square roots over QRq}
    Fix prime $p>2$ and odd extension degree $e$ with $p-3 \mod 4 = 0$; set $q=p^e$.
For all $\alpha \in \mathrm{QR}_q$, there exists a unique $\beta \in \mathrm{QR}_q$ such that $\beta^2 = \alpha$; we define $\sqrt{\alpha} \defeq \beta$.
\end{claim}

To show Claim \ref{claim: nice square roots over QRq}, we will need the following lemma.

\begin{lemma}\label{lem: when is -1 a residue}
    Let $q = p^e$ for prime $p$. Then $p = 3 \text{ mod } 4$ and $e\in \mathbb{Z}^+$ odd if and only if $-1 \not\in \mathrm{QR}_q$.
\end{lemma}

\begin{proof}
    The quadratic character (Definition \ref{def: quadratic char}) satisfies $\chi(x) = x^{(q-1)/2}$. Furthermore, $q = p^e = 3 \text{ mod } 4$ if and only if $(q-1)/2$ is odd; in turn, $(q-1)/2$ is odd if and only if $\chi(-1) = (-1)^{(q-1)/2} = -1$. The latter is equivalent to $-1 \not\in \mathrm{QR}_q$.
\end{proof}

\begin{proof}[Proof of Claim \ref{claim: nice square roots over QRq}]
    Suppose $\alpha \in \mathrm{QR}_q$, and that $\beta, -\beta \in \mathbb{F}_q^\ast$ both square to $\alpha$. Since $-1 \not\in \mathrm{QR}_q$, we have $\chi(-\beta) = \chi(-1)\chi(\beta) = - \chi(\beta)$, which implies that precisely one of $\beta, -\beta$ lies in $\mathrm{QR}_q$.
\end{proof}

We then have the following observation.

\begin{observation}\label{obs: scale pairs when -1 not in QRq}
    Let $q= p^e$ for odd extension degree $e$ and prime characteristic $p$ satisfying $p = 3 \mod 4$. For all $\gamma \in \mathrm{QR}_q$, define $\sqrt{\gamma}$ as in Claim \ref{claim: nice square roots over QRq}. Let $a \in \mathrm{QR}_q$ and $b \in \mathbb{F}_q^\ast \setminus \mathrm{QR}_q$ satisfying $\lbrace a, b \rbrace \subseteq B_1(1)$; such a choice exists by Lemma \ref{lem: B11 has one QR and one non QR}. Then for all $\delta \in \mathrm{QR}_q$,
    \begin{equation}
    \left\lvert \bigcup_{\gamma \in \mathrm{QR}_q \setminus \lbrace \delta \rbrace} \sqrt{\gamma}\cdot \lbrace a, b \rbrace \right\rvert = q-3.
    \end{equation}
\end{observation}

\begin{proof}
    For each $\gamma \in \mathrm{QR}_q$, we have $\sqrt{\gamma} \in \mathrm{QR}_q$ so that $\sqrt{\gamma} \cdot a \in \mathrm{QR}_q$ and $\sqrt{\gamma} \cdot b \in \mathbb{F}_q^\ast \setminus \mathrm{QR}_q$. Hence,
    \begin{equation}
        \bigcup_{\gamma \in \mathrm{QR}_q \setminus \lbrace \delta \rbrace} \sqrt{\gamma}\cdot \lbrace a, b \rbrace = \left\lbrace \sqrt{\gamma} \cdot a : \gamma \in \mathrm{QR}_q \setminus \lbrace \delta \rbrace \right\rbrace \cup \left\lbrace \sqrt{\gamma} \cdot b : \gamma \in \mathrm{QR}_q\setminus \lbrace \delta \rbrace \right\rbrace
    \end{equation}
    is the union of two disjoint sets. It follows that
    \begin{align}
        \left\lvert  \bigcup_{\gamma \in \mathrm{QR}_q \setminus \lbrace \delta \rbrace} \sqrt{\gamma}\cdot \lbrace a, b \rbrace \right\rvert = \left(\frac{q-1}{2}-1 \right) \cdot 2 = q-3.
    \end{align}
\end{proof}

\subsubsection{\texorpdfstring{$p=1 \text{ mod } 4$, or $e$ even}{p = 1 mod 4, or e even}}

In this regime, we have $\Omega_q = \mathrm{QR}_q$, and Lemma \ref{lem: when is -1 a residue} shows $-1 \in \mathrm{QR}_q$. The closure of $\mathrm{QR}_q$ under multiplication implies the elements of $\mathrm{QR}_q$ may be decomposed into pairs of additive inverses; by symmetry, the same holds for $\mathbb{F}_q^\ast \setminus \mathrm{QR}_q$. Hence, for all $\gamma \in \mathrm{QR}_q$, precisely one of the following holds: either
\begin{enumerate}
    \item for all $g \in \mathbb{F}_q^\ast$ satisfying $g^2 = \gamma$, both $\pm g \in \mathrm{QR}_q$; or

    \item for all $g \in \mathbb{F}_q^\ast$ satisfying $g^2 = \gamma$, both $\pm g \in \mathbb{F}_q^\ast \setminus \mathrm{QR}_q$.
\end{enumerate}

Suppose $\gamma \in \mathrm{QR}_q$ satisfies Case \#1 above; that is, both possible representatives of $\sqrt{\gamma}$ are themselves quadratic residues. In this case, $\gamma$ must be a quartic residue. 

\begin{definition}\label{def: quartic residues}
    We define the (multiplicative group of) quartic residues over $\mathbb{F}_q$ as the set $R_q(4) \defeq\lbrace x^4 : x \in \mathbb{F}_q^\ast \rbrace$. Note that
    \begin{equation*}
        R_q(4) = \left\lbrace x^2 : x \in \mathrm{QR}_q \right\rbrace.
    \end{equation*}
\end{definition}

\begin{observation}\label{obs: flip flop set}
    Let $a \in \mathrm{QR}_q$ and $b \in \mathbb{F}_q^\ast \setminus \mathrm{QR}_q$. For each $\gamma \in R_q(4)$, fix some choice of $g \in \mathrm{QR}_q$ such that $g^2 = \gamma$; denote $\sqrt{\gamma} \defeq g$. Then
    \begin{equation*}
        \Upsilon \defeq \left\lbrace \frac{a}{b} \sqrt{\gamma} : \gamma \in R_q(4) \right\rbrace = \left\lbrace \frac{b}{a} \sqrt{\gamma} : \gamma \in R_q(4) \right\rbrace \subseteq \mathbb{F}_q^\ast \setminus \mathrm{QR}_q;
    \end{equation*}
    furthermore, $\Upsilon$ does not contain any pairs of additive inverses. 
\end{observation}

\begin{proof}
    Note that $(b/a)^4 \in R_q(4)$. Since $R_q(4)$ is a subgroup of $\mathrm{QR}_q$, which in turn is a subgroup of the cyclic group $\mathbb{F}_q^\ast$, it follows that the map $x \mapsto (b/a)^4 x$ is a permutation of $R_q(4)$. Hence,
    \begin{align}
        \left\lbrace \frac{a}{b} \sqrt{\gamma} : \gamma \in R_q(4) \right\rbrace &= \left\lbrace \frac{a}{b} \sqrt{\frac{b^4}{a^4}\gamma} : \gamma \in R_q(4) \right\rbrace\\
        &= \left\lbrace \frac{b}{a} \sqrt{\gamma} : \gamma \in R_q(4) \right\rbrace.
    \end{align}
    Next, since $a/b \in \mathbb{F}_q^\ast \setminus \mathrm{QR}_q$ and $\sqrt{\gamma} \in \mathrm{QR}_q$, their product is not in $\mathrm{QR}_q$:
    \begin{equation}
        \Upsilon = \left\lbrace \frac{a}{b} \sqrt{\gamma} : \gamma \in R_q(4) \right\rbrace  \subseteq \mathbb{F}_q^\ast \setminus \mathrm{QR}_q.
    \end{equation}
    Finally, $\Upsilon$ cannot contain any pair of additive inverses. Indeed, for $\gamma, \delta \in R_q(4)$,
    \begin{align}
         \frac{a}{b} \sqrt{\gamma} = -\frac{a}{b} \sqrt{\delta} \quad \iff \quad \sqrt{\gamma} = -\sqrt{\delta}.
    \end{align}
    Squaring both sides yields $\gamma = \delta$. It remains only to observe that each element of $(a/b)\sqrt{\gamma} \in \Upsilon$ corresponds to exactly one element $\gamma \in R_q(4)$.
\end{proof}

For all $\gamma \in R_q(4) \subseteq \mathrm{QR}_q$, we may then select $\sqrt{\gamma}$ to be in $\mathrm{QR}_q$; we now show how to select $\sqrt{\omega}$ for $\omega \in \mathrm{QR}_q \setminus R_q(4)$.

\begin{observation}\label{obs: nice non-quartic square root reps}
    For all $\omega \in \mathrm{QR}_q \setminus R_q(4)$, there exists some element
    \begin{equation*}
        g \in \mathbb{F}_q^\ast \setminus \left( \mathrm{QR}_q \cup \Upsilon\right)
    \end{equation*}
    such that $g^2 = \omega$; hence, we may denote $\sqrt{\omega}\defeq g$ for such a choice. 
\end{observation}

\begin{proof}
    For all $\omega \in \mathrm{QR}_q \setminus R_q(4)$, all $g \in \mathbb{F}_q^\ast$ satisfying $g^2 = \omega$ must satisfy $\pm g \in \mathbb{F}_q^\ast \setminus \mathrm{QR}_q$. It suffices to show that at least one of $\pm g$ is not contained in $\Upsilon \subseteq \mathbb{F}_q^\ast \setminus \mathrm{QR}_q$. However, this follows from Observation \ref{obs: flip flop set}, as $\Upsilon$ does not contain any pair of additive inverses.
\end{proof}

We now summarize the choice of square root representatives when $p = 1 \mod 4$ or $e$ is even.

\begin{definition}\label{def: extended square root representatives}
    Let $q = p^e$ for even extension degree $e$; or odd extension degree $e$ when prime characteristic $p$ satisfies $p = 1 \mod 4$. Then we define the square root operation $\sqrt{\cdot} : \mathrm{QR}_q \to \mathbb{F}_q$ by the following:
    \begin{enumerate}
        \item for all $\gamma \in R_q(4)$, fix arbitrarily a choice $\sqrt{\gamma} \defeq g \in \mathrm{QR}_q$ satisfying $g^2 = \gamma$; and

        \item for all $\omega \in \mathrm{QR}_q \setminus R_q(4)$, fix arbitrarily a choice $\sqrt{\omega}\defeq g \in \mathbb{F}_q^\ast \setminus \left(\mathrm{QR}_q \cup \Upsilon \right)$ satisfying $g^2 = \omega$; such a choice exists by Observation \ref{obs: nice non-quartic square root reps}.
    \end{enumerate}
\end{definition}

\begin{observation}\label{obs: scale pairs when -1 in QRq}
    Let $q = p^e$ for even extension degree $e$; or odd extension degree $e$ when prime characteristic $p$ satisfies $p = 1 \mod 4$. For $\gamma \in \mathrm{QR}_q$, define $\sqrt{\gamma}$ as in Definition \ref{def: extended square root representatives}. Let $a \in \mathrm{QR}_q$ and $b \in \mathbb{F}_q^\ast \setminus \mathrm{QR}_q$ satisfying $\lbrace a,b \rbrace \subseteq B_1(1)$; such a choice exists by Lemma \ref{lem: B11 has one QR and one non QR}. Then, for all $\delta \in \mathrm{QR}_q$,
    \begin{equation}
    \left\lvert \bigcup_{\gamma \in \mathrm{QR}_q \setminus \lbrace \delta \rbrace} \sqrt{\gamma}\cdot \lbrace a, b \rbrace \right\rvert = q-3.
\end{equation}
\end{observation}

\begin{proof}
Observe that
\begin{align}
    \bigcup_{\gamma \in \mathrm{QR}_q \setminus \lbrace \delta \rbrace} \sqrt{\gamma}\cdot \lbrace a, b \rbrace &= \left( \bigcup_{\gamma \in R_q(4)\setminus \lbrace \delta \rbrace} \sqrt{\gamma} \cdot \lbrace a, b \rbrace \right) \cup \left( \bigcup_{\omega \in \mathrm{QR}_q \setminus \left(R_q(4) \cup \lbrace \delta \rbrace \right)} \sqrt{\omega}\cdot \lbrace a, b \rbrace \right) 
\end{align}
may be decomposed as the union of the two following sets:
\begin{equation}\label{eqn: group up the residues}
    \left\lbrace \sqrt{\gamma} \cdot a : \gamma \in R_q(4) \setminus \lbrace \delta \rbrace \right\rbrace \cup \left\lbrace \sqrt{\omega} \cdot b : \omega \in \mathrm{QR}_q \setminus \left(R_q(4) \cup \lbrace \delta \rbrace \right) \right\rbrace \subseteq \mathrm{QR}_q
\end{equation}
and
\begin{equation}\label{eqn: group up the non-residues}
    \left\lbrace \sqrt{\gamma} \cdot b : \gamma \in R_q(4) \setminus \lbrace \delta \rbrace \right\rbrace \cup \left\lbrace \sqrt{\omega} \cdot a : \omega \in \mathrm{QR}_q \setminus \left(R_q(4) \cup \lbrace \delta \rbrace \right) \right\rbrace \subseteq \mathbb{F}_q^\ast \setminus \mathrm{QR}_q.
\end{equation}
Let us first examine \eqref{eqn: group up the residues}; we claim that the two sides of the union are disjoint. Indeed, 
\begin{equation}
    \sqrt{\gamma} \cdot a = \sqrt{\omega} \cdot b \quad \iff \quad \sqrt{\omega} = \frac{a}{b} \sqrt{\gamma}
\end{equation}
which by Definition \ref{def: extended square root representatives} is not possible. Similarly, the two sides of the union in \eqref{eqn: group up the non-residues} are disjoint, since\begin{equation}
    \sqrt{\gamma} \cdot b = \sqrt{\omega} \cdot a \quad \iff \quad \sqrt{\omega} = \frac{b}{a} \sqrt{\gamma}
\end{equation}
and, again, by Definition \ref{def: extended square root representatives}, does not occur. Hence, since either $\delta \in R_q(4)$ or $\delta \in \mathrm{QR}_q \setminus R_q(4)$,
\begin{equation}
    \left\lvert \bigcup_{\gamma \in \mathrm{QR}_q \setminus \lbrace \delta \rbrace} \sqrt{\gamma}\cdot \lbrace a, b \rbrace \right\rvert = 2 \cdot \left(\frac{q-1}{2} - 1 \right) = q-3.
\end{equation}
\end{proof}

\subsection{\texorpdfstring{Binary characteristic $p=2$ with extension degree $e \geq 3$}{Binary characteristic p=2 with extension degree e >= 3}}

By Theorem \ref{thm: prescribed trace}, there exists some primitive element $\omega \in \mathbb{F}_{2^e}^\ast$ such that $\mathrm{Tr}(1/\omega) = 0$. Recall from Definitions \ref{def: pseudo residues over binary extensions}, \ref{def: restricted parameter regime} that in this parameter regime, we have $\Omega_q = W_q(\omega)$, where
\begin{equation*}
    W_q \defeq W_q(\omega) = \left\lbrace \omega^{2i} : i = 0, 1, \ldots, 2^{e-1} - 2 \right\rbrace
\end{equation*}
consists of all even powers of $\omega$ between 0 and $2^{e}-4$, inclusive. We define the square root function over $W_q$ as follows.

\begin{definition}\label{def: square roots over binary extension}
    Let $q = 2^e$ where $e \geq 3$. Define the square root operation $\sqrt{\cdot}: W_q \to \mathbb{F}_{2^e}^\ast$ as follows: given some integer $0 \leq i \leq 2^{e-1}-2$ and $\omega^{2i} \in W_q$, set $\sqrt{\omega^{2i}} \defeq \omega^i$.
\end{definition}

Furthermore, we set
\begin{equation*}
    a = \omega^{2^{e-1}} \in W_q \quad \text{and} \quad b = \omega \in \mathbb{F}_{2^e}^\ast \setminus W_q.
\end{equation*}
We now show that $\lbrace a, b \rbrace \subseteq B_1(1)$. To do so, we first prove the following lemma.

\begin{lemma}\label{lem: B11 element inverse is trace kernel}
    Fix $\mathbb{F}_{q} = \mathbb{F}_{2^e}$ where $e \geq 3$, and let $y \in \mathbb{F}_q^*$. Then $y \in B_1(1) \setminus \lbrace 0 \rbrace$ if and only if $\mathrm{Tr}(1/y) = 0$.
\end{lemma}

\begin{proof}
    In the forward direction, suppose there exists $m \in \mathbb{F}_q^\ast$ such that $y = m^{-1} + m \neq 0$. Then, moving all terms to one side and multiplying through by $my^{-2}$ yields
    \begin{align*}
        \left( \frac{m}{y} \right)^2 + \left(\frac{m}{y}\right) + \frac{1}{y^2} = 0
    \end{align*}
    which, by Theorem \ref{thm: binary artin schreier}, implies the trace of the constant term $\mathrm{Tr}(y^{-2}) = 0$. In turn, this implies $\mathrm{Tr}(1/y) = 0$, as desired.

    In the reverse direction, suppose $\mathrm{Tr}(1/y)=0$, so that $\mathrm{Tr}(y^{-2}) = 0$. Then by Theorem \ref{thm: binary artin schreier}, there exists some $x \in \mathbb{F}_q^\ast$ such that
    \begin{equation*}
        x^2 + x + \frac{1}{y^2} = 0.
    \end{equation*}
    Multiplying through by $y^2$ yields
    \begin{equation*}
        (xy)^2 + (xy)y + 1 = 0 \quad\implies\quad y = \frac{1}{xy} + xy = m^{-1} + m \in B_1(1)
    \end{equation*}
    where $m \defeq xy$, as desired.
\end{proof}

Note that, given some primitive element $\omega \in \mathbb{F}_{2^e}^\ast$, every element of $y \in B_1(1)\setminus \lbrace 0 \rbrace$ may be written as $y = \omega^{i }$ for some integer $i$. We then rewrite Lemma \ref{lem: B11 element inverse is trace kernel} into the following corollary, which is immediate.

\begin{corollary}\label{cor: B11 element inverse is trace kernel - generator version}
    Fix $\mathbb{F}_{q} = \mathbb{F}_{2^e}$ where $e \geq 3$, and let $\omega\in \mathbb{F}_q^\ast$ be a primitive element;  Then $\omega^{i} \in B_1(1) \setminus \lbrace 0 \rbrace$ if and only if $\mathrm{Tr}(\omega^{-i}) = 0$.
\end{corollary}

The following lemma gives a binary extension field analog of Lemma \ref{lem: B11 has one QR and one non QR}.

\begin{lemma}
    Fix $\mathbb{F}_{q} = \mathbb{F}_{2^e}$ where $e \geq 3$, and let $\omega\in \mathbb{F}_q^\ast$ be a primitive element satisfying $\mathrm{Tr}(\omega^{-1}) = 0$; such $\omega$ exists by Theorem \ref{thm: prescribed trace}. Set
    \begin{equation*}
        a = \omega^{2^{e-1}} \in W_q \quad \text{and} \quad b = \omega \in \mathbb{F}_q^\ast \setminus W_q.
    \end{equation*}
    Then $\lbrace a, b \rbrace \subseteq B_1(1)$.
\end{lemma}

\begin{proof}
    By assumption, $\mathrm{Tr}(\omega^{-1}) = 0$; by Corollary \ref{cor: B11 element inverse is trace kernel - generator version}, it follows that $b = \omega \in B_1(1)$. Next, observe that
    \begin{equation*}
        \mathrm{Tr}\left(\omega^{-2^{e-1}}\right) = \mathrm{Tr}\left(\left(\omega^{-1}\right)^{2^{e-1}} \right) = 0.
    \end{equation*}
    It follows from Corollary \ref{cor: B11 element inverse is trace kernel - generator version} that $a = \omega^{2^{e-1}} \in B_1(1)$, completing the proof.
\end{proof}

\begin{observation}\label{obs: scale pairs over binary extension}
    Let $q = 2^e$ for extension degree $e\geq 3$ and let $\omega\in \mathbb{F}_q^\ast$ be a primitive element satisfying $\mathrm{Tr}(\omega^{-1}) = 0$; such $\omega$ exists by Theorem \ref{thm: prescribed trace}. For $\gamma \in W_q$, define $\sqrt{\gamma}$ as in Definition \ref{def: square roots over binary extension}. Fix
    \begin{equation*}
    a = \omega^{2^{e-1}} \in W_q \quad \text{and} \quad b = \omega \in \mathbb{F}_{2^e}^\ast \setminus W_q.
    \end{equation*}
    Then for all $\delta \in W_q$,
    \begin{equation}
    \left\lvert \bigcup_{\gamma \in W_q \setminus \lbrace \delta \rbrace} \sqrt{\gamma}\cdot \lbrace a, b \rbrace \right\rvert = q-4.
\end{equation}
\end{observation}

\begin{proof}
    First, we show that for distinct $\gamma, \gamma' \in W_q$, we have
    \begin{equation*}
        \sqrt{\gamma}\cdot \lbrace a, b \rbrace \cap \sqrt{\gamma'}\cdot \lbrace a,b\rbrace = \varnothing.
    \end{equation*}
    Indeed, without loss of generality, let $\gamma = \omega^{2i}$ and $\gamma' = \omega^{2j}$, where $0 \leq i < j \leq 2^{e-1}-2$. Then
    \begin{equation*}
        \sqrt{\gamma} \cdot \lbrace a, b \rbrace= \lbrace \omega^{1 + i}, \omega^{2^{e-1} + i} \rbrace \quad\text{and} \quad \sqrt{\gamma'} \cdot \lbrace a, b \rbrace = \lbrace \omega^{1 + j}, \omega^{2^{e-1} + j} \rbrace
    \end{equation*}
    are disjoint, since $1+i \neq 1+j$ and $2^{e-1} + i \neq 2^{e-1} + j$; additionally, the intervals $$1 \leq 1+i < 1 + j \leq 2^{e-1} - 1$$ and $$2^{e-1} \leq 2^{e-1} + i < 2^{e-1} + j \leq 2^{e} - 2$$ have no overlap. Next, let $\delta = \omega^{2d} \in W_q$ for some integer $0 \leq d \leq 2^{e-1}-2$. Then we have
    \begin{align*}
    \left\lvert \bigcup_{\gamma \in W_q \setminus \lbrace \delta \rbrace} \sqrt{\gamma}\cdot \lbrace a, b \rbrace \right\rvert &=\left\lvert \bigsqcup_{\substack{0 \leq i \leq 2^{e-1}-2\\ i \neq d \;\;\;\;\;}} \omega^i\cdot \left\lbrace \omega, \omega^{2^{e-1}} \right\rbrace \right\rvert\\
    &= \sum_{\substack{0 \leq i \leq 2^{e-1}-2\\ i \neq d \;\;\;\;\;}} \left\lvert \left\lbrace \omega^{1+i}, \omega^{2^{e-1}+i} \right\rbrace \right\rvert\\
    &= (2^{e-1} -2)(2)\\
    &= q - 4
    \end{align*}
    as desired.
\end{proof}

\section{Proof of Main Theorem}\label{sec: bounding subcase}

    \subsection{mQM: ``mini QM''}
    As alluded to previously, we now consider a subcase of QM, subject to the restrictions set forth in Definition \ref{def: restricted parameter regime}. The goal of this subsection is to establish the notation we will use for this subcase, and adjust relevant notions to reflect its restricted parameter regime.

\begin{definition}[mQM]\label{def: mQM}
    We denote by $\mathrm{mQM}(\pmb{\alpha},\mathbf{T},\mathbf{b})$ an instance of $\mathrm{QM}(\pmb{\alpha},\mathbf{T},\mathbf{b})$ (Algorithm \ref{alg: QM Full}), subject to the additional constraints of Definition \ref{def: restricted parameter regime}.
\end{definition}

In particular, Definition \ref{def: restricted parameter regime} restricts us to only considering lines $f \in \mathbb{F}_q[x]$ satisfying $g(\mathbf{f}) \in \Omega_q$; accordingly, we need only distinguish among leakage transcripts corresponding to lines with quadratic residue coefficient products.

\begin{definition}[Leakage transcript validity]\label{def: valid leakage transcript}
    Given some $\pmb{\alpha} \in \Omega_q^t$ and fixed sequence of leakage sets $\mathbf{T}= (T_1, \ldots, T_t)$, we say that a $t$-bit leakage transcript $\mathbf{b} \in \lbrace 0, 1\rbrace^t$ is valid if there exists some $f(x) \in \mathbb{F}_q [x]$, $\deg(f) \leq 1$ 
    satisfying $g_{0,1}(\mathbf{f}) \in \Omega_q$ such that 
    \begin{equation}
        \mathbf{b}_i = \begin{cases}
            0 & f(\alpha_i) \in T_{j}\\
            1 & \text{else}
        \end{cases}
    \end{equation}
    for all $i \in [t]$.
\end{definition}

Since $\mathrm{mQM}$ need only distinguish between coefficient products lying in $\Omega_q$ rather than all of $\mathbb{F}_q$, we adjust the notion of success accordingly.

\begin{definition}[$\mathrm{mQM}$ validity]\label{def: mQM validity}
    We say that an instance of $\mathrm{mQM}$ is $(s,t)$-valid if for any choice of size $s \in \mathbb{Z}^+$ sized subset $S \subseteq \Omega_q$, there exists some fixed $\pmb{\alpha} \in S^t$ and fixed sequence of leakage sets $\mathbf{T}$ such that $\mathrm{mQM}(\pmb{\alpha}, \mathbf{T}, \mathbf{b})$ returns ``\texttt{Success!}'' for every $t$-bit valid leakage transcript $\mathbf{b}$. 
\end{definition}

Finally, we formally define the round complexity of $\mathrm{mQM}$.

\begin{definition}\label{def: mQM round complexity}
    We say that $s$-server $\mathrm{mQM}$ has round complexity $t$ if, for all $(s,t')$-valid $\mathrm{mQM}$ schemes, $t' \geq t$.
\end{definition}

    \subsection{pQM: ``projection QM''}
    We now define an algorithm\footnote{So-called as it ``projects'' subsets $B_\gamma \subseteq \mathbb{F}_q[x]$ onto $B_1(1) \subseteq \mathbb{F}_q$.} $\mathrm{pQM}$, analogous to Algorithm~\ref{alg: QM Full}, presented as Algorithm \ref{alg:pQM}. Our main objective in the section is to show that a valid $\mathrm{mQM}$ scheme implies a successful $\mathrm{pQM}$ scheme, which is established in Theorem \ref{thm: mQM implies pQM}. To that end we define a notion of $\mathrm{pQM}$ validity similar to that of Definition \ref{def: mQM validity}.

\begin{algorithm}

    \caption{$\mathrm{pQM}(\mathbf{V}, \mathbf{b})$}\label{alg:pQM}
    \DontPrintSemicolon
    \SetKwInOut{Input}{Input}\SetKwInOut{Output}{Output}
    \Input{leakage sets $\mathbf{V} = \left(V_1, V_2, \ldots, V_t \right)$}
    \Input{leakage transcript $\mathbf{b} \in \lbrace 0, 1 \rbrace^t$}
    \For{$\gamma \in \Omega_q$}
    {
    $S_\gamma^0 \gets B_1(1)$ \\
    \For{$i \in [t]$}
    {
        $S_\gamma^i \gets \begin{cases} S_\gamma^{i-1} \setminus (1/\sqrt{\gamma}) V_i & \mathbf{b}_i = 0 \\ S_\gamma^{i-1} \setminus (1/\sqrt{\gamma}) (\mathbb{F}_q^* \setminus V_i) & \mathbf{b}_i = 1 \end{cases}$
    } 
    }
    \If{$\left\lvert\left\lbrace \gamma : \gamma \in \Omega_q, \; S_\gamma^t \neq \varnothing \right\rbrace\right\rvert \leq 1$}{
    \Return{\texttt{Success!}}
    }
    \Return{\texttt{Fail!}}\\
    
\end{algorithm}

\begin{definition}
    We say that an instance of $\mathrm{pQM}$ is $t$-valid if there exists some  sequence $\mathbf{V}$ of leakage sets such that $\mathrm{pQM}(\mathbf{V}, \mathbf{b})$ succeeds for all $\mathbf{b} \in \lbrace 0, 1 \rbrace^t$. 
\end{definition}

The round complexity is defined analogously to that Definition \ref{def: mQM round complexity}; we restate it for reference.

\begin{definition}\label{def: pQM round complexity}
    We say that $s$-server $\mathrm{pQM}$ has round complexity $t$ if, for all $t'$-valid $\mathrm{pQM}$ schemes, $t' \geq t$.
\end{definition}

We may now formally state the key theorem that will allow us to analyze the round complexity of $\mathrm{pQM}$ to derive a lower bound on that of $\mathrm{mQM}$ - and by extension, $\mathrm{QM}$ itself.

\begin{theorem}\label{thm: mQM implies pQM}
    Let $s,t\in \mathbb{Z}^+$ and suppose there exists an $(s,t)$-valid instance of $\mathrm{mQM}$. Then there exists an $r$-valid instance of $\mathrm{pQM}$, for some positive integer $r \leq t$.
\end{theorem}

\begin{proof}
    Fix some server schedule $\pmb{\alpha} \in \Omega_q^t$ supported on $s$ distinct server labels in $\Omega_q$. Suppose that $\mathbf{T}$ is a sequence of leakage sets defining an $(s,t)$-valid mQM scheme; then for any valid $t$-bit leakage transcript $\mathbf{b}$, $\mathrm{mQM}(\pmb{\alpha}, \mathbf{T}, \mathbf{b})$ returns ``\texttt{Success!}'', implying that at the end of its execution, $\mathbf{B}(\gamma) = \varnothing$ for all but one value $\gamma \in \Omega_q$; denote this unique value $\delta \in \Omega_q$.

    Let $\gamma \in \Omega_q\setminus\lbrace \delta\rbrace$. Then for each $h_m^\gamma \in B_\gamma$ there exists some round $i_m \in [t]$ such that precisely one of
    \begin{equation}\label{eqn: fwd equiv b = 0}
        \mathbf{b}_{i_m} = 0 \quad \text{and} \quad h_m^\gamma \left( \alpha_{i_m} \right) \in T_{i_m}
    \end{equation}
    or
    \begin{equation}\label{eqn: fwd equiv b = 1}
        \mathbf{b}_{i_m} = 1 \quad \text{and} \quad h_m^\gamma \left( \alpha_{i_m} \right) \in \mathbb{F}_q \setminus T_{i_m}
    \end{equation}
    holds, otherwise $h_m^\gamma$ would never be removed from $B_\gamma$ and $\mathbf{B}(\gamma) \neq \varnothing$; take $i_m$ to be the least such value. Furthermore, assume that $\mathbf{b}_{i_m} = 0$ so that \eqref{eqn: fwd equiv b = 0} holds; the case $\mathbf{b}_{i_m} = 1$ of \eqref{eqn: fwd equiv b = 1} is verbatim. 
    
    From $\mathbf{T}$, construct the length $t$ sequence $\mathbf{V}$ of leakage sets by applying Definition \ref{def: converted eliminators} to each entry of $T_{i} \mapsto V_{i}$, $i \in [t]$. Then by Theorem \ref{thm: eliminator conversion}, \eqref{eqn: fwd equiv b = 0} holding in some round $i_m \in [t]$ implies
    \begin{equation}\label{eqn: fwd sync removal}
        h_m^\gamma \left( \alpha_{i_m} \right) \in T_{i_m}\quad \implies \quad g_m(1) \in \frac{1}{\sqrt{\gamma}} V_{i_m}.
    \end{equation}
    Consider an instance $\mathrm{pQM}(\mathbf{V}, \mathbf{b})$. By \eqref{eqn: fwd sync removal}, $h_m^\gamma \not\in \mathbf{B}(\gamma)$ implies $g_m(1) \not\in S_\gamma^{j}$ at the conclusion of every round $j\in [t]$, $j \geq i_m$. Since every $h_m^\gamma$, $m \in \mathbb{F}_q^\ast$ is removed in some round $i_m \in [t]$ of $\mathrm{mQM}(\pmb{\alpha}, \mathbf{T}, \mathbf{b})$, it follows that every $g_m(1)$, $m \in \mathbb{F}_q^\ast$ will be removed no later than the same round $i_m \in [t]$ of $\mathrm{pQM}(\mathbf{V}, \mathbf{b})$. Hence,
    \begin{equation}\label{eqn: fwd equiv same 0 stats}
        \mathbf{B}(\gamma) = \varnothing \quad \implies \quad S_\gamma^t = \varnothing.
    \end{equation}
    Since Equation \ref{eqn: fwd equiv same 0 stats} holds for all $\gamma \in \Omega_q \setminus \lbrace \delta \rbrace$, it follows that there exists \textit{at most} one value, namely $\delta \in \Omega_q$, satisfying $S_\delta^t \neq \varnothing$; accordingly, $\mathrm{pQM}$ will return ``\texttt{Success!}'', as desired, in at most $t$ rounds.
\end{proof}

\subsection{Round Complexity of pQM, and Proof of Theorem~\ref{thm: full-case bound}}\label{sec:mainpf}

Recall Definitions \ref{def: mQM round complexity}, \ref{def: pQM round complexity}. Theorem \ref{thm: mQM implies pQM} shows that any under bound on the round complexity of pQM (Algorithm \ref{alg:pQM}) is an under bound on the round complexity of mQM, which in turn is a sub-case of QM (Algorithm \ref{alg: QM Full}). With this relation established, we now focus on the task of under bounding the round complexity of pQM. We will show the following theorem.

\begin{theorem}\label{thm: sub-case bound}
    Fix $\mathbb{F}_q = \mathbb{F}_{p^e}$ for prime $p > 2$ and $q \neq 5$; or $p = 2$ and $e \geq 3$. Then any $t$-valid pQM scheme over $\mathbb{F}_q$ must satisfy
    \begin{equation*}
        t \geq \begin{cases}
            2\log_2(q-1) - 3 & p>2, \; p^e \neq 5\\
            2\log_2(q-2) - 4 & p = 2, e \geq 3.
        \end{cases}
    \end{equation*}
\end{theorem}

We will prove Theorem \ref{thm: sub-case bound} later, as it relies on Theorem \ref{thm: output list size 2} in the next section. 
For now, we observe that Theorem~\ref{thm: sub-case bound}, along with the above reasoning, implies \cref{thm: full-case bound}. We will do so in two steps: 

\begin{enumerate}
    \item Theorem \ref{thm: full-case bound restricted to k=2} shows that Theorem \ref{thm: mQM implies pQM} and Theorem \ref{thm: sub-case bound} imply Theorem \ref{thm: full-case bound} in the restricted parameter regime where $k=2$, $i=0$, and $j=1$ - i.e., under Assumption~\ref{assm:k2}.

    \item  shows that Theorem \ref{thm: full-case bound restricted to k=2} implies Theorem \ref{thm: full-case bound} for all $k \geq 2$, via a reduction argument.
\end{enumerate}

\begin{theorem}[Theorem \ref{thm: full-case bound}, restricted to $k=2$]\label{thm: full-case bound restricted to k=2}
    Fix $\mathbb{F}_q = \mathbb{F}_{p^e}$; let $s \geq 3$, $k = 2$, and suppose there exists a $t$-bit, $s$-server QM scheme for $g_{0,1}$ and RS codes of dimension $k$ (Definition \ref{def: QM}) over $\mathbb{F}_q$. Then its download bandwidth must satisfy 
    \begin{equation*}
        t \geq \begin{cases}
            2\log_2(q-1) - 3 & p > 2\\
            2\log_2(q-2) - 4 & p = 2.
        \end{cases}
    \end{equation*}
\end{theorem}

\begin{proof}
First, we argue that any lower bound on the bandwidth of mQMs implies the same lower bound for QMs.  To do this, we must argue that the restrictions in 
    Definition~\ref{def: restricted parameter regime} are without loss of generality.  Recall that, once we have assumed that $k=2$, Definition~\ref{def: restricted parameter regime} restricts the mQM coefficient product $g(\mathbf{f})\in \Omega_q$ and requires all queries to be made to servers indexed by some $\alpha \in \Omega_q$. The former restriction is strictly a sub-problem of QM. The latter restriction is without loss of generality, since the number $s$ of queried servers is at most $|\Omega_q|$. Indeed, the client may query any $s$ servers but must download \emph{at least} one bit per query, so the bound of Theorem \ref{thm: full-case bound} is relevant only when $s \leq 2 \log_2(q-2) - 4$ in the characteristic 2 case, or $s \leq 2 \log_2(q-1) - 3$ in the odd characteristic case. On the other hand, the order of $\Omega_q$ is at least $\left\lvert \Omega_q \right\rvert \geq (q-2)/2$, which is always greater than $2 \log_2(q-1) - 3$ for all prime powers $q$.
    
    Now Theorems \ref{thm: mQM implies pQM} and \ref{thm: sub-case bound}, along with the fact noted above that lower bounds for mQMs imply the same lower bounds for QMs, implies Theorem~\ref{thm: full-case bound} in all cases except when $q=2, 4, 5$.  However, for such values of $q$, Theorem \ref{thm: full-case bound} is vacuously true: it says that download bandwidth $t$ satisfies $t \geq 2 \log_2(5-1) - 3 = 1$, which is less than the $\log_2(5) >2$ bits needed to represent an element of $\mathbb{F}_5$. A similar phenomenon holds for $q = 2, 4$.
\end{proof}

\fullcasebound*

\begin{proof}
    Let $\mathcal{B}(p)$ denote the right-side term of the inequality in \eqref{eqn: p case bound}. For $k > 2$, distinct $i, j \in [0,k-1]$ with $i< j$, $\tau \in \mathbb{Z}^+$, and $s\geq 3$, suppose exists a $\tau$-bit, $s$-server QM $\Phi$ over $\mathbb{F}_q = \mathbb{F}_{p^e}$ computing $g_{i,j}(\mathbf{f}) = f_i f_j$ for all $\mathbf{f} \in \mathbb{F}_q^k \simeq \mathrm{RS}_q[n,k]$. Consider the dimension $2$ subspace of $\mathbb{F}_q^k$ given by
    \begin{equation*}
        \mathcal{C} = \left\lbrace \mathbf{f} = f_i e_i + f_j e_j : f_i, f_j \in \mathbb{F}_q \right\rbrace \simeq \mathrm{RS}_q[n,2]
    \end{equation*}
    where $e_i, e_j$ denote the $i$th, $j$th standard basis vectors, respectively. In words, we may view $\mathcal{C}$ as the set of all Reed-Solomon message polynomial coefficient vectors which are 0 everywhere except at coordinates $i,j$, where they may take arbitrary values $f_i, f_j \in \mathbb{F}_q$. 
    
    Observe that since $\mathcal{C}\subseteq \mathbb{F}_q^k$, we know $\Phi$ is trivially a $\tau$-bit, $s$-server QM for $\mathcal{C}$. For any $\mathbf{f} \in \mathcal{C}$, each server indexed by $\alpha \in \mathbb{F}_q^\ast$ locally holds the codeword symbol $f(\alpha) = f_i \alpha^i + f_j \alpha^j$. Each server may also locally scale their codeword symbol by $\alpha^{-i}$ to recover $\hat{f}(\alpha) \defeq \alpha^{-i} f(\alpha) = f_i + f_j \alpha^{j-i}$. 
    
    Thus, up to local rescaling, each server indexed by some $\alpha \in \mathbb{F}_q^\ast$ holds an evaluation point of the form $\hat{f}(\alpha) = b + m \alpha^r$ for some $r \defeq j-i \in [1, k-1]$ and $m,b \in \mathbb{F}_q$. Each server may then locally re-index by $\alpha^r \in \mathbb{F}_q^\ast$; after such a re-indexing, each server labeled by some $\beta = \alpha^r \in \mathbb{F}_q^\ast$ holds a \emph{linear} evaluation $h(\beta) = b + m \beta = \hat{f}(\alpha)$. Let $S'$ be the set of server labels $\beta \in \mathbb{F}_q^\ast$ and $s' = |S'| \leq n$; then $(h(\beta))_{\beta \in S}$ is a Reed-Solomon codeword in $\mathcal{C}' \defeq \mathrm{RS}_q[s' \leq n, 2]$. Importantly, $\mathcal{C}'$ need not be full-length\footnote{If $r$ divides $q-1$, not all evaluation points may be available; that is, only certain $\beta \in \mathbb{F}_q^\ast$ may appear, corresponding to the image of the map $x \mapsto x^r$ over $\mathbb{F}_q^*$.} over $\mathbb{F}_q$. Nonetheless, any bandwidth lower bound for QM over a full-length RS code must hold for QM over any subset of codeword positions; otherwise, a client could lower download bandwidth in the former case by querying a subset of available servers. 

    Thus, recovering $g_{0,1}(\mathbf{h}) = mb$ is equivalent to recovering $g_{i,j}(\mathbf{f})$ for $\mathbf{f} \in \mathcal{C} \subseteq \mathbb{F}_q^k$. Since Theorem \ref{thm: full-case bound restricted to k=2} implies that any QM scheme for full-length RS codes recovering $g_{0,1}(\mathbf{h}) = mb$ must download at least $\mathcal{B}(p)$ bits, the same requirement holds for our RS code after the relabeling $\beta = \alpha^r$ of evaluation points. We see that $\Phi$ must incur a download at least $\tau \geq \mathcal{B}(p)$, as desired.
\end{proof}

\subsubsection{Size of pQM final states}

We now focus exclusively on pQM as presented in Algorithm \ref{alg:pQM}, placing aside the problem of coefficient product recovery altogether. We show that, if pQM succeeds, then it must have removed nearly all points from consideration. We state this more precisely.

\begin{theorem}\label{thm: output list size 2}
    Fix $\mathbb{F}_q = \mathbb{F}_{p^e}$ for prime $p > 2$ and $q \neq 5$. Let $\mathbf{V}$ be some sequence of leakage sets $V_i$, $i \in [t]$ defining an $t$-valid $\mathrm{pQM}$ (Algorithm \ref{alg:pQM}). 
    
    Suppose for some $\mathbf{b}\in \lbrace 0, 1 \rbrace^t$ there exists a unique $\delta \in \Omega_q$ such that $S_\delta^t \neq \varnothing$ at the conclusion of $\mathrm{pQM}(\mathbf{V}, \mathbf{b})$. Then $$|S_\delta^t| \leq \begin{cases}
             2 & p \text{ odd }\\
             3 & p = 2, e \geq 3.
         \end{cases}$$
\end{theorem}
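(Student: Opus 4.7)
The plan is to reduce Theorem~\ref{thm: output list size 2} to a clean combinatorial claim about the family $\mathcal{B} := \{B_\gamma(1) : \gamma \in \mathrm{QR}_q\}$ and to prove that claim via a symmetry argument.

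First, I will reformulate $S_\delta^t$ in terms of a leakage map. Unwinding Algorithm~\ref{alg:pLCP} inductively, $x \in S_\gamma^t$ iff $x \in B_1(1)$ and $\mathbf{1}[\sqrt{\gamma}x \in V_i] = \mathbf{b}_i$ for every $i \in [t]$. Defining $L : \mathbb{F}_q^\ast \to \{0,1\}^t$ by $L(y)_i := \mathbf{1}[y \in V_i]$ and using the bijection $B_1(1) \to B_\gamma(1)$, $x \mapsto \sqrt{\gamma}x$, this identifies $|S_\gamma^t|$ with $|L^{-1}(\mathbf{b}) \cap B_\gamma(1)|$. The hypothesis that $S_\gamma^t = \varnothing$ for all $\gamma \in \mathrm{QR}_q \setminus \{\delta\}$ then forces $L^{-1}(\mathbf{b}) \cap B_\delta(1) \subseteq Z$, where
\[
Z := B_\delta(1) \setminus \bigcup_{\gamma \in \mathrm{QR}_q,\, \gamma \neq \delta} B_\gamma(1).
\]
So it suffices to prove $|Z| \leq 2$.

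Next, I will show that the incidence count $N(y) := |\{\gamma \in \mathrm{QR}_q : y \in B_\gamma(1)\}|$ is constant on $\mathbb{F}_q^\ast$ and equals $(q+1)/4$. The key observation is that for each $\lambda \in \mathbb{F}_q^\ast$, the reparametrization $m' = \lambda m$ applied to $B_\gamma(1) = \{m + \gamma/m : m \in \mathbb{F}_q^\ast\}$ yields $\lambda \cdot B_\gamma(1) = B_{\lambda^2 \gamma}(1)$; since $\lambda^2 \gamma \in \mathrm{QR}_q$, multiplication by $\lambda$ permutes $\mathcal{B}$. Hence $N(\lambda y) = N(y)$, and transitivity of the $\mathbb{F}_q^\ast$-action on itself forces $N$ to be constant. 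Double-counting incidences,
\[
N(y) \cdot (q-1) \;=\; \sum_{\gamma \in \mathrm{QR}_q} |B_\gamma(1)| \;=\; \frac{q-1}{2} \cdot \frac{q+1}{2},
\]
giving $N(y) = (q+1)/4$. Here $|B_\gamma(1)| = (q+1)/2$ follows from the standard fact that $m \mapsto m + \gamma/m$ is two-to-one away from the fixed points $m = \pm\sqrt{\gamma}$.

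Finally, any $y \in Z$ has $N(y) = 1$, which combined with $N(y) = (q+1)/4$ forces $q = 3$. In that case $\mathrm{QR}_3 = \{1\}$ is a singleton and $Z = B_1(1)$ has exactly $(q+1)/2 = 2$ elements; for $q > 3$, $Z = \varnothing$. Either way $|Z| \leq 2$, and hence $|S_\delta^t| \leq 2$. The main obstacle I anticipate is the careful verification of the symmetry step---specifically, that multiplication by $\lambda$ genuinely permutes $\mathcal{B}$---which hinges on the explicit parametrization $B_\gamma(1) = \{m + \gamma/m\}$ and its behavior under scaling. The parameter restrictions $p \equiv 3 \pmod{4}$ and $e$ odd enter only to ensure $-1 \notin \mathrm{QR}_q$ (so that $0 \notin B_\gamma(1)$ and every set lies in $\mathbb{F}_q^\ast$) and the integrality of $(q+1)/4$.
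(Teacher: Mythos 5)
Your proof is correct, and it takes a genuinely different route than the paper's. The paper normalizes $\mathbf{b}=0^t$ and directly lower-bounds $|\bigcup_i V_i|$: since $S_\gamma^t=\varnothing$ forces $\sqrt{\gamma}\,B_1(1)\subseteq\bigcup_i V_i$, it fixes a single pair $\{\pm\alpha\}\subseteq B_1(1)$ (using closure of $B_1(1)$ under negation) and, via $\QR_q\cap(-\QR_q)=\varnothing$, shows the $(q-3)/2$ values $\sqrt{\gamma}\alpha$ and the $(q-3)/2$ values $-\sqrt{\gamma}\alpha$ are pairwise distinct, giving $|\bigcup_i V_i|\geq q-3$ and hence $|S_\delta^t|\leq|\mathbb{F}_q^\ast\setminus\bigcup_i V_i|\leq 2$. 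You instead package the leakage into a map $L:y\mapsto(\mathbf{1}[y\in V_i])_i$, identify $S_\gamma^t$ with $L^{-1}(\mathbf{b})\cap B_\gamma(1)$, and reduce to bounding the purely combinatorial set $Z=B_\delta(1)\setminus\bigcup_{\gamma\neq\delta}B_\gamma(1)$, which depends only on the code and not on $\mathbf{V},\mathbf{b}$. The incidence count---that multiplication by $\lambda$ permutes the family via $\lambda\cdot B_\gamma(1)=B_{\lambda^2\gamma}(1)$, so $N(y)\equiv(q+1)/4$---is clean and actually yields a strictly stronger conclusion: for $q>3$ one has $Z=\varnothing$, so the theorem's hypothesis $S_\delta^t\neq\varnothing$ is unsatisfiable and the claimed bound is vacuous (consistent with Corollary~\ref{cor: if one survives, only has 2 things in it}, since the successful executions then have all $S_\gamma^t=\varnothing$). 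The quadratic-residue restrictions enter both proofs through $-1\notin\QR_q$: the paper uses it to disjointify the $\pm$-orbits, while you use it to ensure $0\notin B_\gamma(1)$ (so the incidence count takes place entirely in $\mathbb{F}_q^\ast$) and to make $(q+1)/4$ an integer. One minor terminological nit: the two points $m=\pm\sqrt{\gamma}$ in your two-to-one count are not ``fixed points'' of $m\mapsto m+\gamma/m$ but rather the two singleton fibers (where $m$ coincides with its partner $\gamma/m$); the count itself is correct.
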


\begin{proof}
    Assume $\mathrm{pQM}$ is $t$-valid and suppose that at the conclusion of an instance thereof we have
    \begin{equation}\label{eqn: only one survivor}
        S_\delta^t \neq \varnothing \quad \text{and} \quad S_\gamma^t = \varnothing \;\forall \; \gamma \in \Omega_q\setminus \lbrace \delta \rbrace.
   \end{equation}
    Recall that $\mathrm{pQM}$ initializes $S_\gamma^0 = B_1(1)$, and that any final state $S_\sigma^t$, $\sigma \in \Omega_q$ is simply a pruning of $S_\sigma^0$ via iterated set-minus operations. We may assume without loss of generality that $\mathbf{b}=0^t$ (if $\mathbf{b}_i =1$ for some $i \in [t]$, replace $V_i$ with its set complement). Accordingly, note that
    \begin{equation}
        S_\sigma^j = B_1(1) \setminus \frac{1}{\sqrt{\sigma}}\left( \bigcup_{i=1}^j V_i \right)
    \end{equation}
    for all $\sigma \in \Omega_q$, $j \in [t]$. Then $S_\gamma^t = \varnothing$ implies
    \begin{equation}
        \sqrt{\gamma} \cdot B_1(1) \subseteq \bigcup_{i=1}^t V_i
    \end{equation}
    for all $\gamma \in \Omega_q\setminus \lbrace \delta \rbrace$. By Theorem \ref{thm: size of scaled pairs set}, there exists a pair of non-zero field elements $a, b \in B_1(1)\setminus \lbrace 0 \rbrace$, with $a \neq b$, such that $a \in \Omega_q$ and $b \in \mathbb{F}_q^\ast \setminus \Omega_q$ such that
    \begin{equation}
        \sqrt{\gamma} \cdot \lbrace a,b \rbrace \subseteq \sqrt{\gamma}\cdot B_1(1) \subseteq \bigcup_{i=1}^t V_i
    \end{equation}
    for all $\gamma \in \Omega_q\setminus \lbrace \delta \rbrace$. Additionally, Theorem \ref{thm: size of scaled pairs set} gives
    \begin{equation}
        \left\lvert \bigcup_{i=1}^t V_i \right\rvert \geq \left\lvert \bigcup_{\gamma \in \Omega_q \setminus \{ \delta\} }\sqrt{\gamma} \cdot \{a,b \} \right\rvert \geq \begin{cases}
             q-3 & p \text{ odd }\\
             q-4 & p = 2, e \geq 3
         \end{cases}
    \end{equation}
    Finally, since $S_\delta^t \subseteq \mathbb{F}_q^\ast$, it must be that
    \begin{equation}
        \left\lvert S_\delta^t \right\rvert = \left\lvert B_1(1) \setminus \frac{1}{\sqrt{\delta}}\left( \bigcup_{i=1}^t V_i \right) \right\rvert \leq \left\lvert \mathbb{F}_q^\ast \setminus \left( \bigcup_{i=1}^t V_i \right) \right\rvert \leq \begin{cases}
             2 & p \text{ odd }\\
             3 & p = 2, e \geq 3
         \end{cases}
    \end{equation}
    as desired.
\end{proof}

The following observation is an immediate corollary of Theorem \ref{thm: output list size 2}; in particular, it restates Theorem \ref{thm: output list size 2} in terms of \eqref{eqn: only one survivor}.

\begin{corollary}\label{cor: if one survives, only has 2 things in it}
    Assume the hypotheses and notation of Theorem \ref{thm: output list size 2}; then $\mathrm{pQM}(\mathbf{V}, \mathbf{b})$ succeeds if and only if
    \begin{equation}
        \left\lvert\bigcup_{\sigma \in \Omega_q} S_\sigma^t \right\rvert \leq \begin{cases}
             2 & p \text{ odd }\\
             3 & p = 2, e \geq 3
         \end{cases}.
    \end{equation}
\end{corollary}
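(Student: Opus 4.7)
The plan is to prove the biconditional by establishing each direction separately, with the forward implication being an immediate consequence of Theorem~\ref{thm: output list size 2} and the reverse implication following directly from the $t$-validity of $\mathbf{V}$ (a prerequisite inherited from Theorem~\ref{thm: output list size 2}).

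For the $(\Rightarrow)$ direction, I would suppose that $\mathrm{pLCP}(\mathbf{V}, \mathbf{b})$ returns ``\texttt{Success!}''. Algorithm~\ref{alg:pLCP}'s success condition then yields $|\lbrace \sigma \in \mathrm{QR}_q : S_\sigma^t \neq \varnothing \rbrace| \leq 1$. If all $S_\sigma^t$ are empty, then $\bigcup_{\sigma \in \mathrm{QR}_q} S_\sigma^t = \varnothing$ has size $0 \leq 2$ and we are done. Otherwise there is a unique $\delta \in \mathrm{QR}_q$ with $S_\delta^t \neq \varnothing$, and Theorem~\ref{thm: output list size 2} directly gives $|\bigcup_{\sigma \in \mathrm{QR}_q} S_\sigma^t| = |S_\delta^t| \leq 2$.

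For the $(\Leftarrow)$ direction, the key point is that the prerequisites of Theorem~\ref{thm: output list size 2}, which we are importing, require $\mathbf{V}$ to define a $t$-valid pLCP scheme. By the definition of $t$-validity, $\mathrm{pLCP}(\mathbf{V}, \mathbf{b}')$ returns ``\texttt{Success!}'' for \emph{every} $\mathbf{b}' \in \lbrace 0,1 \rbrace^t$; in particular it does so for the specific $\mathbf{b}$ of interest, independent of the value of $|\bigcup_\sigma S_\sigma^t|$. Hence the implication ``$|\bigcup_\sigma S_\sigma^t| \leq 2 \Rightarrow \mathrm{pLCP}(\mathbf{V}, \mathbf{b})$ succeeds'' holds vacuously.

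I do not anticipate any substantive obstacle; the $t$-validity hypothesis is what makes the reverse direction effectively trivial, and it is genuinely essential to that direction. For instance, over $\mathbb{F}_7$ one can cook up a $\mathbf{V}$ and $\mathbf{b}$ for which two distinct $S_\sigma^t$ collapse to disjoint singletons with $|\bigcup_\sigma S_\sigma^t|=2$ yet the success condition fails, so any attempted proof of the reverse direction that bypasses $t$-validity must break. The corollary's real role is to restate Theorem~\ref{thm: output list size 2} in a form convenient for the round-complexity argument of Theorem~\ref{thm: sub-case bound}, replacing the ``unique surviving bucket of size at most two'' criterion with the algorithmically more natural ``total surviving mass is at most two'' criterion.
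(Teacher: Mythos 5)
Your proof is correct and matches the approach the paper leaves implicit (the paper states only that the corollary is ``immediate'' from Theorem~\ref{thm: output list size 2} without writing it out). The forward direction is handled exactly as intended: split on whether all $S_\sigma^t$ are empty or there is a unique non-empty $S_\delta^t$, and in the latter case apply Theorem~\ref{thm: output list size 2}. Your observation that the reverse direction is trivial under $t$-validity is also right, and your remark that $t$-validity is genuinely needed there (since two disjoint surviving singletons would give $\left\lvert\bigcup_\sigma S_\sigma^t\right\rvert = 2$ with two non-empty buckets) correctly pinpoints why the hypothesis cannot be dropped. One small stylistic point worth noting: under $t$-validity, pLCP succeeds for \emph{every} transcript, so the forward direction's antecedent is also always satisfied --- the biconditional as a whole therefore collapses to the unconditional assertion $\left\lvert\bigcup_\sigma S_\sigma^t\right\rvert \leq 2$ for all $\mathbf{b}$, which is exactly the form in which the corollary gets used in the adversarial-game argument of Theorem~\ref{thm: sub-case bound}. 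Your characterization of the corollary's role is accurate.
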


\subsubsection{Lower-bounding Round Complexity}\label{subsubsec: adversarial pQM}

We conclude this section with the proof of Theorem \ref{thm: sub-case bound}. We construct a game that is equivalent to pQM, but in an ``adversarial'' context. The intuition is as follows: a player - say, \textit{Alice} - specifies some query schedule of servers $\mathbf{u} \in \Omega_q^t$ whose support is over $s \geq 3$ distinct server labels. Then during any round $i \in [t]$, Alice specifies some leakage set $V_i$, and an adversary - say, \textit{Eve} -chooses the leakage bit $\mathbf{b}_i \in \lbrace 0, 1 \rbrace$ that removes the least (i.e., at most half) of the points from consideration. Since pQM does not conclude if there are more than 2 points still in consideration (Theorem \ref{thm: output list size 2}, Corollary \ref{cor: if one survives, only has 2 things in it}), we can see that the round complexity must be $O(2 \log_2(q))$. The following is nothing more than a formalization of this intuition.

\begin{proof}[Proof of Theorem \ref{thm: sub-case bound}]
    Consider a round-based game played between Alice and an (adversarial but honest) Eve.

    \paragraph{Initialization:} For all $\gamma \in \Omega_q$, Alice initializes\footnote{This is analogous to the initialization in Algorithm \ref{alg:pQM}.} the sets $S_\gamma^0 = B_1(1)$.

    \paragraph{Rounds:} For each round $i \in [t]$,
    \begin{enumerate}
        \item Alice chooses an arbitrary $V_i \subseteq \mathbb{F}_q$, and denotes $Y_0 \defeq V_i$, $Y_1 \defeq \mathbb{F}_q^\ast \setminus V_i$.

        \item Eve computes
        \begin{equation}
            \mathbf{b}_i \defeq \argmax_{b \in \lbrace 0, 1 \rbrace}\left( \sum_{\gamma \in \Omega_q} \left\lvert S_\gamma^{i-1} \setminus \left(\frac{1}{\sqrt{\gamma}} Y_b\right) \right\rvert \right)
        \end{equation}

        \item Alice updates the sets $S^{i-1}_\gamma$, $\gamma \in \Omega_q$ according to Eve's choice of $\mathbf{b}_i$; concretely, for all $\gamma \in \Omega_q$, Alice sets
        \begin{equation}
            S_\gamma^i \defeq S_\gamma^{i-1} \setminus \left(\frac{1}{\sqrt{\gamma}} Y_{\mathbf{b}_i} \right).
        \end{equation}
    \end{enumerate}

    \paragraph{Conclusion:} Following the final round $i=t$, Alice wins if there exists at most one $\delta \in \Omega_q$ such that $S_\delta^t \neq \varnothing$; otherwise, Eve wins. By observation, the above game is equivalent to pQM where the leakage set $V_i$ and leakage bit $\mathbf{b}_i$ are determined round-to-round by Alice and Eve, respectively. Eve's choice of $\mathbf{b}_i$ ensures that at most half of all points still under consideration are removed by Alice's updates in a given round $i \in [t]$, representing a worst-case leakage transcript for $\mathrm{pQM}$. Furthermore, Eve is not concerned by which sets $S_\gamma^i$ may be empty or not at any given round; by Corollary \ref{cor: if one survives, only has 2 things in it} Eve knows that if there are more than two or three points (among all $S_\gamma^i$, $\gamma \in \Omega_q$) still under consideration at any round, then Alice cannot have won. More precisely, if Alice wishes to win in $t$ rounds, then $t$ must satisfy:
    
    \begin{itemize}
        \item \textbf{If $q = p^e$ for an odd prime $p$:} then $\Omega_q = \mathrm{QR}_q$; by Corollary \ref{cor: if one survives, only has 2 things in it} and Observation \ref{obs: size of B11} we have
        \begin{align}
        t &\geq \left\lceil \log_2 \left( \sum_{\gamma \in \mathrm{QR}_q} \left\lvert S_\gamma^0 \right\rvert \right) - \log_2(2)\right\rceil \\ \nonumber \\
        & = \left\lceil\log_2\left( \frac{(q-1)(q+1)}{4} \right) \right\rceil- 1 \\
        & \geq \log_2((q-1)^2) - 3.
        \end{align}

        \item \textbf{If $q = 2^e$, for $e \geq 3$:} then $\Omega_q = W_q$; by Corollary \ref{cor: if one survives, only has 2 things in it} and Observation \ref{obs: size of B11} we have
        \begin{align}
        t &\geq \left\lceil \log_2 \left( \sum_{\gamma \in W_q} \left\lvert S_\gamma^0 \right\rvert \right) - \log_2(3)\right\rceil \\ \nonumber \\
        & \geq \left\lceil\log_2\left(  \frac{\left( 2^{e} - 2 \right) \left(2^e\right)}{4} \right) \right\rceil- 2 \\
        & \geq \log_2((q-2)^2) - 4.
        \end{align}
    \end{itemize}
\end{proof}

\section{Necessity of Non-linear Local Computation}\label{sec: non-linear necessity}
In this section, we consider the setting where $k=2$ and $i=0, j=1$. We show that in this case, an even stronger bound must hold if the leakage functions are $\mathbb{F}_p$-linear. Though this result is not invoked in the proof of our main result (Theorem \ref{thm: full-case bound}), it motivates the consideration of arbitrary (non-linear) leakage functions as in Definition \ref{def: leakage function}. This is an interesting contrast to the bandwidth savings which can be achieved by linear schemes for computing \emph{linear} functions (e.g., \cite{GW16}, \cite{TYB17}, \cite{SW21}, \cite{Kiah24}). The following result shows that linear schemes cannot perform \textit{even a single bit} better than the naive download bandwidth for quadratic functions.

\begin{theorem}\label{thm: linear eval impossible}
    Let $\mathcal{C}$ be a Reed-Solomon code over alphabet $\mathbb{F}_q = \mathbb{F}_{p^e}$ of length $n = q-1$ and dimension $k\geq 2$, with evaluation points given by $\mathbb{F}_q^\ast$. Let $t = 2\log_p(q) - 1 = 2e - 1 $ and $i,j \in [0,k-1]$, $i\neq j$. Then for any fixed choice of (not necessarily distinct) $\alpha_1, \alpha_2, \ldots, \alpha_t \in \mathbb{F}_q^\ast$ and $\gamma_1, \gamma_2, \ldots, \gamma_t \in \mathbb{F}_q$, there does not exist any $R \in \mathbb{F}_q[X_1, \ldots, X_t]$ such that
    \begin{equation}
        R\left( \mathrm{Tr}\left( \gamma_i f(\alpha_i)\right) \; : \; i \in [t] \right) = g_{i,j}(\mathbf{f})
    \end{equation}
    for all $f \in \mathbb{F}[x]$, $\deg(f) < k$.
\end{theorem}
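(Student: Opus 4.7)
The plan is to proceed by a pure dimension count, turning the hypothesis that each server's contribution is an $\mathbb{F}_p$-linear functional into a rank-nullity argument on the coefficient space.

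First I would rewrite each leaked symbol. Since $f(x)=mx+b$, we have
\[
\mathrm{Tr}(\gamma_i f(\alpha_i)) \;=\; \mathrm{Tr}(\gamma_i \alpha_i\,m) + \mathrm{Tr}(\gamma_i\,b),
\]
and this is an $\mathbb{F}_p$-linear functional in the coordinate vector $(b,m)\in\mathbb{F}_q^2$ viewed as an $\mathbb{F}_p$-vector space of dimension $2e$. Collecting all $t$ leakages gives an $\mathbb{F}_p$-linear map
\[
\Lambda:\mathbb{F}_q^2\to\mathbb{F}_p^t,\qquad (b,m)\mapsto\bigl(\mathrm{Tr}(\gamma_i f(\alpha_i))\bigr)_{i\in[t]}.
\]
With $t=2e-1$, the target has $\mathbb{F}_p$-dimension strictly less than the source, so by rank-nullity there is a nonzero $(b_0,m_0)\in\ker\Lambda$.

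Next I would use this kernel element to produce a collision that any reconstructor must fail on. For every $(b,m)\in\mathbb{F}_q^2$, the pair $(b+b_0,m+m_0)$ has the same leakage vector as $(b,m)$. If a reconstruction $R$ as in the theorem existed, it would output a single value on this common leakage vector, forcing
\[
m\cdot b \;=\; (m+m_0)(b+b_0) \quad\text{for all }(b,m)\in\mathbb{F}_q^2,
\]
equivalently the identity $m\,b_0+m_0\,b+m_0 b_0=0$ as a function on $\mathbb{F}_q^2$.

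The final step is to observe this identity cannot hold since $(b_0,m_0)\neq 0$. If $b_0\neq 0$, pick $m$ with $m\,b_0\neq m_0 b_0$ and $b=0$; if $b_0=0$ but $m_0\neq 0$, pick any $b\neq 0$ and $m=0$. Either choice falsifies the identity, giving the desired contradiction. I do not anticipate a real obstacle here: the only thing to keep in mind is that $R$ is a function (so the argument works for arbitrary, not just polynomial, reconstructors), and that one must separately check the degenerate subcase where exactly one of $b_0, m_0$ vanishes, which is immediate as above.
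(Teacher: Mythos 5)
Your argument is correct and follows essentially the same route as the paper's. The paper's Lemma~\ref{lem: linear eval impossibility - zero trace} is exactly your rank-nullity step (finding a nonzero $h(x)=ux+v$ whose leakage vanishes), and Lemmas~\ref{lem: linear eval impossibility - decomp} and~\ref{lem: linear eval impossibility - transcript collision} together express the collision argument, just packaged as an explicit decomposition $u=m+m'$, $v=b+b'$ with $mb\neq m'b'$ rather than as the translation $(b,m)\mapsto(b+b_0,m+m_0)$ you use. Your phrasing is slightly more direct and, as you note, makes clear that the impossibility holds for any function $R$, not merely polynomials (a fact the paper's proof also implicitly supports, since it never actually uses the polynomial structure of $R$). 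One small slip in the final case analysis: with $b=0$, the identity to falsify reads $m b_0 + m_0 b_0 = 0$, so you should pick $m$ with $m b_0 \neq -m_0 b_0$ (equivalently $m\neq -m_0$), not $m b_0 \neq m_0 b_0$; in odd characteristic with $m_0\neq 0$ the latter condition admits $m=-m_0$, which does not falsify the identity. The fix is immediate and does not affect the soundness of the argument.
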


We defer the proof of Theorem \ref{thm: linear eval impossible} to the end of this section, after establishing some helpful intermediate observations. 

Intuitively, Theorem \ref{thm: linear eval impossible} says that if servers are restricted to linear local computation, then any strategy must download two field symbols' ``worth'' of bits; in particular, when $k=2$, the user cannot hope to save any download bandwidth over the naive strategy. Note that Theorem \ref{thm: linear eval impossible} has the same setup as QM, except that servers are restricted to evaluating linear functions on their locally held Reed-Solomon codeword symbol; any such linear function may be represented as $x \mapsto \mathrm{Tr}(\gamma x)$ for some choice of $\gamma \in \mathbb{F}_q$. Furthermore, a user queries up to $t$ servers and downloads an element\footnote{We must consider $\mathbb{F}_p$-valued leakage functions if restricting leakage functions to be linear. This overlaps precisely with Definition \ref{def: leakage function} when $p=2$ and $\mathbb{F}_q$ is a binary extension field of degree $e$.} of $\mathbb{F}_p$ per query. Then Theorem \ref{thm: linear eval impossible} says that, for any non-trivial\footnote{Restricting $t \leq 2e-1$ is necessary; if $k=2$ and $t=2e$, then the user can query two servers holding $f(\alpha_1)$ and $ f(\alpha_2)$, respectively, for $e$ symbols each (say, $\mathrm{Tr}(\gamma_i x)$ for some basis $\gamma_i , i \in [e]$ of $\mathbb{F}_q$ over $\mathbb{F}_p$) to download their entire contents. Finding $\prod \mathbf{f}$ is then possible by recovering $f(x)$ entirely.} value of $t \leq 2e-1$, the user cannot deterministically recover the coefficient product of a line $f(x)$.  

\begin{lemma}\label{lem: linear eval impossibility - decomp}
    Let $u, v \in \mathbb{F}_q$ such that at least one of $u,v \neq 0$. Then there exists $m, m', b, b' \in \mathbb{F}_q$ such that $u = m+m'$, $v = b+b'$, and $mb \neq m'b'$.
\end{lemma}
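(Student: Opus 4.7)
The plan is to prove this directly by explicit construction, exploiting the hypothesis that at least one of $u, v$ is nonzero. By the symmetry in the roles of $(m, m', u)$ and $(b, b', v)$, I would first assume without loss of generality that $u \neq 0$, and handle $u = 0, v \neq 0$ by swapping the roles.

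Given $u \neq 0$, the strategy is to force $mb = 0$ while making $m'b' \neq 0$. I would set $m = 0$, which automatically fixes $m' = u$ and makes $mb = 0$ regardless of $b$. The remaining constraint is $m'b' = u b' \neq 0$, which (since $u \neq 0$) is equivalent to requiring $b' \neq 0$, i.e., $b \neq v$. Since $\mathbb{F}_q$ has at least two elements, I can pick $b = v + 1$, yielding $b' = v - b = -1 \neq 0$ and therefore $m'b' = -u \neq 0 = mb$, as required.

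The only obstacle is the symmetric case $u = 0, v \neq 0$: here I would instead set $b = 0$, $b' = v$, and choose $m = 1$ so that $m' = -1$, giving $mb = 0$ and $m'b' = -v \neq 0$. This completes the construction in all cases covered by the hypothesis. There is no substantive difficulty here; the lemma is a book-keeping step presumably used in the proof of Theorem \ref{thm: linear eval impossible} to produce two degree-$\leq 1$ polynomials $f(x) = mx + b$ and $f'(x) = m'x + b'$ agreeing on certain trace evaluations yet having distinct coefficient products.
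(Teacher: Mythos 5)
Your proof is correct, and it takes a genuinely different route than the paper. The paper argues by contradiction: it assumes $mb = m'b'$ holds for \emph{every} decomposition $u = m + m'$, $v = b + b'$, then (splitting into $v = 0$ and $v \neq 0$) expands $m'b' = (u-m)(v-b)$ into a polynomial identity in $m$ and $b$ that forces $u = 0$ or $uv = 0$, contradicting the hypothesis. You instead give a direct, one-shot construction: with $u \neq 0$, take $m = 0$, $m' = u$, $b = v+1$, $b' = -1$, so $mb = 0$ while $m'b' = -u \neq 0$; the case $u = 0$, $v \neq 0$ is handled symmetrically (and as you note, even that explicit case is dispensable given the symmetry of the statement under swapping $(m,m',u)$ with $(b,b',v)$). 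Both arguments are elementary, but yours is more economical: it avoids the two-case split and the need to extract a polynomial identity that must hold for all $m, b$. The only thing worth double-checking — and you implicitly handled it — is that the choices remain valid over small fields and in characteristic $2$: $b' = -1$ is nonzero in every field, and $-u \neq 0$ whenever $u \neq 0$, so the construction is uniform over all $\mathbb{F}_q$.
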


\begin{proof}
    Assume without loss of generality that $u \neq 0$. Assume towards a contradiction that, for all $m,m',b,b' \in \mathbb{F}_q$ satisfying $u = m + m'$ and $v = b+b'$, it is the case that $mb = m'b'$. Consider the following cases:

    \paragraph{$v=0$:} Then $m' = u - m$ and $b' = -b$. It follows that $mb = (u-m)(-b) = -ub + mb$ for all $b \in \mathbb{F}_q$, which implies $-ub = 0$, requiring $u = 0$ - a contradiction.

    \paragraph{$v \neq 0$:} Then $m' = u - m$ and $b' = v-b$. It follows that
    \begin{equation}
        m'b' = (u-m)(v-b) = uv - ub - vm + mb = mb
    \end{equation}
    which requires $uv - ub - vm = 0$ for all $m,b\in \mathbb{F}_q$. Setting $m,b = 0$, this implies $uv = 0$; this is a contradiction, since we assumed that both $u, v \neq 0$.
\end{proof}

\begin{lemma}\label{lem: linear eval impossibility - zero trace}
    Let $t = 2\left[\mathbb{F}_q : \mathbb{F}_p \right] - 1= 2e-1$. For any fixed choice of (not necessarily distinct) $\alpha_1, \ldots, \alpha_t \in \mathbb{F}_q^\ast$ and $\gamma_1, \ldots, \gamma_t \in \mathbb{F}_q$, there exists a choice of $u,v \in \mathbb{F}_q$ not both zero, defining $h(x) \defeq ux + v$, such that $\mathrm{Tr}\left( \gamma_i h(\alpha_i) \right) = 0 \; \forall \; i \in [t]$.
\end{lemma}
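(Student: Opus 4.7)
My plan is to view the problem through the lens of $\mathbb{F}_p$-linear algebra. The space of candidate pairs $(u,v) \in \mathbb{F}_q^2$ is naturally an $\mathbb{F}_p$-vector space of dimension $2e$, since $[\mathbb{F}_q : \mathbb{F}_p] = e$. I will show each of the $t$ trace conditions cuts out a codimension-$\leq 1$ subspace, so that the common solution space has dimension at least $2e - t = 1$, giving a nonzero $(u,v)$.

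First I would rewrite the $i$th constraint using $\mathbb{F}_p$-linearity of the trace:
\begin{equation*}
    \mathrm{Tr}\bigl(\gamma_i h(\alpha_i)\bigr) = \mathrm{Tr}\bigl(\gamma_i \alpha_i \cdot u\bigr) + \mathrm{Tr}\bigl(\gamma_i \cdot v\bigr).
\end{equation*}
The map $\varphi_i : \mathbb{F}_q^2 \to \mathbb{F}_p$ defined by $(u,v) \mapsto \mathrm{Tr}(\gamma_i \alpha_i u) + \mathrm{Tr}(\gamma_i v)$ is $\mathbb{F}_p$-linear on $\mathbb{F}_q^2$, which is $2e$-dimensional over $\mathbb{F}_p$. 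Thus $\ker(\varphi_i)$ has $\mathbb{F}_p$-dimension at least $2e - 1$.

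Next I would intersect these kernels. Set $K = \bigcap_{i=1}^{t} \ker(\varphi_i)$. Since each $\ker(\varphi_i)$ has codimension at most $1$, the subadditivity of codimension gives
\begin{equation*}
    \mathrm{codim}_{\mathbb{F}_p}(K) \leq \sum_{i=1}^t \mathrm{codim}_{\mathbb{F}_p}\bigl(\ker(\varphi_i)\bigr) \leq t = 2e-1.
\end{equation*}
Hence $\dim_{\mathbb{F}_p}(K) \geq 2e - (2e-1) = 1$, so $K$ contains a nonzero element $(u,v)$. This $(u,v)$ is exactly what the lemma requires, and since $(u,v) \neq (0,0)$ at least one of $u,v$ is nonzero.

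There is no substantive obstacle here; the only thing to be careful of is tracking that the base field for linear algebra is $\mathbb{F}_p$ rather than $\mathbb{F}_q$ (indeed, over $\mathbb{F}_q$ the map $(u,v) \mapsto \gamma_i(u\alpha_i + v)$ is $\mathbb{F}_q$-linear, but the trace is only $\mathbb{F}_p$-linear, which is why the correct codomain for each $\varphi_i$ is $\mathbb{F}_p$ and why the budget of $t = 2e-1$ constraints is the right one).
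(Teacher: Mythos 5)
Your proof is correct and is essentially the same argument as the paper's: each trace condition is an $\mathbb{F}_p$-linear constraint on the $2e$-dimensional $\mathbb{F}_p$-space $\mathbb{F}_q^2$, and $t = 2e-1$ such constraints leave a kernel of dimension at least one. The paper spells this out concretely by fixing an $\mathbb{F}_p$-basis of $\mathbb{F}_q$ and writing the Frobenius powers and multiplications as explicit $e\times e$ matrices before doing the dimension count; you bypass the coordinates and argue directly via $\mathbb{F}_p$-linearity of the trace and subadditivity of codimension, which is cleaner but not a different idea.
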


\begin{proof}
    For any basis $\lbrace \beta_1, \ldots, \beta_e \rbrace$ of a degree-$e$ extension $\mathbb{F}_q$ over $\mathbb{F}_p$, let $\psi:\mathbb{F}_q \to \mathbb{F}_p^e$ denote the map interpreting $\mathbb{F}_q$ as a vector space over $\mathbb{F}_p$ according to the chosen basis. Furthermore, let $\phi: \mathbb{F}_q \hookrightarrow \mathbb{F}_p^{e\times e}$ be the corresponding embedding of $\mathbb{F}_q$ into $\mathrm{GL}(p,e)$ which maps an element $a \in \mathbb{F}_q$ to the matrix representing multiplication by $a$.
    
    The map $x \mapsto x^{p^w}$ is a $\mathbb{F}_p$-linear map in the coordinates of $\psi(x)$, so there exists some $P \in \mathrm{GL}(p,e)$ such that $P^w \psi(x) = \psi(x^{p^w})$ for all $x \in \mathbb{F}_q$. Then
    \begin{equation}
        \mathrm{Tr}\left( \gamma_i f(\alpha_i) \right) = \sum_{w=0}^{e-1} \left(\gamma_i \alpha_i m \right)^{p^w} + \left(  \gamma_i b\right)^{p^w} = 0, \; i \in [t]
    \end{equation}
    may be equivalently represented as
    \begin{equation}
        \sum_{w=0}^{e-1} \phi\left(\gamma_i^{p^w} \alpha_i^{p^w} \right) P^{w} \psi(m) + \phi\left(\gamma_i^{p^w} \right) P^w \psi(b) = \mathbf{0}, \; i\in [t]
    \end{equation}
    which is a system of $t=2e-1$ linear constraints on the coordinates of $\psi(m), \psi(b)$, of which there are $2e$; thus, the map has a non-trivial kernel.
\end{proof}

\begin{lemma}\label{lem: linear eval impossibility - transcript collision}
    Let $t = 2e-1$. For any fixed choice of $\alpha_1, \ldots, \alpha_t \in \mathbb{F}_q^\ast$ and $\gamma_1, \ldots, \gamma_t \in \mathbb{F}_q$, there exist $f, \ell \in \mathbb{F}_q[x]$ with $\deg(f), \deg(\ell) \leq 1$ and $g_{0,1}(\mathbf{f}) \neq g_{0,1}(\boldsymbol\ell)$ satisfying
    \begin{equation}
        \mathrm{Tr}\left( \gamma_i f\left( \alpha_i \right) \right) = \mathrm{Tr}\left( \gamma_i \ell\left( \alpha_i \right) \right)
    \end{equation}
    for all $i \in [t]$.
\end{lemma}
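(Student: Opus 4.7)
The two preceding lemmas are clearly tailored to fit together, so my plan is to combine them directly. By Lemma \ref{lem: linear eval impossibility - zero trace}, there exist $u, v \in \mathbb{F}_q$, not both zero, such that $h(x) \defeq ux + v$ satisfies $\mathrm{Tr}(\gamma_i h(\alpha_i)) = 0$ for every $i \in [t]$. By Lemma \ref{lem: linear eval impossibility - decomp}, applied to this pair $(u,v)$, we then obtain field elements $m, m', b, b' \in \mathbb{F}_q$ with $m + m' = u$, $b + b' = v$, and $mb \neq m'b'$.

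With these in hand, I would set $f(x) \defeq mx + b$ and $g(x) \defeq -m'x - b'$. Both are polynomials of degree at most $1$, and by construction
\begin{equation*}
    f(x) - g(x) = (m + m')x + (b + b') = ux + v = h(x).
\end{equation*}
The $\mathbb{F}_p$-linearity of $\mathrm{Tr}$, together with the $\mathbb{F}_q$-linearity of multiplication by $\gamma_i$ and evaluation at $\alpha_i$, then gives
\begin{equation*}
    \mathrm{Tr}(\gamma_i f(\alpha_i)) - \mathrm{Tr}(\gamma_i g(\alpha_i)) = \mathrm{Tr}(\gamma_i h(\alpha_i)) = 0
\end{equation*}
for every $i \in [t]$, which is precisely the claimed agreement of transcripts.

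Finally, I would verify the coefficient-product disagreement: $\prod \vec{f} = mb$ while $\prod \vec{g} = (-m')(-b') = m'b'$, and these differ by the guarantee of Lemma \ref{lem: linear eval impossibility - decomp}. No step here seems to present a real obstacle; the real content of the argument is hidden inside Lemmas \ref{lem: linear eval impossibility - decomp} and \ref{lem: linear eval impossibility - zero trace}. The only mild ``trick'' is recognizing that in order to turn a single zero-trace polynomial $h$ into a pair of linear polynomials with equal leakage transcripts but unequal coefficient products, one should write $h$ as a difference $f - g$ and then choose the decomposition so that $mb \neq m'b'$, which is exactly what the two lemmas hand us.
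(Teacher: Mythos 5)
Your proposal is correct and follows essentially the same route as the paper's own proof: obtain a nonzero zero-trace line $h = ux+v$ from Lemma \ref{lem: linear eval impossibility - zero trace}, split $(u,v)$ via Lemma \ref{lem: linear eval impossibility - decomp}, set $f = mx+b$ and $g = -m'x-b'$, and use trace linearity. The only (minor, welcome) difference is that you explicitly spell out the coefficient-product check $\prod\vec f = mb \neq m'b' = \prod\vec g$, which the paper leaves implicit.
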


\begin{proof}
    By Lemma \ref{lem: linear eval impossibility - zero trace}, there exists some non-zero $h(x) = ux + v \in \mathbb{F}_q[x]$ such that 
    \begin{equation}
         \mathrm{Tr}\left( \gamma_i h\left( \alpha_i \right) \right) = 0 \; \forall \;i \in [t].
    \end{equation}
   By Lemma \ref{lem: linear eval impossibility - decomp}, there exists $m, m', b, b' \in \mathbb{F}_q$ such that $u = m+ m'$, $v = b+ b'$, and $mb \neq m'b'$. Define
    \begin{equation}
        f(x) \defeq m x + b \quad \text{and} \quad \ell(x) \defeq -m'x - b'
    \end{equation}
    so that $h(x) = f(x) - \ell(x)$. Then for all $i \in [t]$, we have
    \begin{equation}
        \mathrm{Tr}\left( \gamma_i h\left( \alpha_i \right) \right) = \mathrm{Tr}\left( \gamma_i f\left( \alpha_i \right) \right) - \mathrm{Tr}\left( \gamma_i \ell\left( \alpha_i \right) \right) =0
    \end{equation}
    as desired.
\end{proof}

\begin{lemma}\label{lemma: linear eval impossible k=2}
    Let $\mathcal{C}$ be a Reed-Solomon code over alphabet $\mathbb{F}_q = \mathbb{F}_{p^e}$ of length $n = q-1$ and dimension $k=2$, with evaluation points given by $\mathbb{F}_q^\ast$. Let $t = 2\log_p(q) - 1 = 2e - 1 $ and $i=0, j=1$. Then for any fixed choice of (not necessarily distinct) $\alpha_1, \alpha_2, \ldots, \alpha_t \in \mathbb{F}_q^\ast$ and $\gamma_1, \gamma_2, \ldots, \gamma_t \in \mathbb{F}_q$, there does not exist any $R \in \mathbb{F}_q[X_1, \ldots, X_t]$ such that
    \begin{equation}
        R\left( \mathrm{Tr}\left( \gamma_i f(\alpha_i)\right) \; : \; i \in [t] \right) = g_{0,1}(\mathbf{f})
    \end{equation}
    for all $f \in \mathbb{F}[x]$, $\deg(f) < 2$.
\end{lemma}

\begin{proof}
    Since we require that $R$ deterministically output the coefficient product $g_{0,1}(\mathbf{f})$, we can represent $R$ as a $t$-variate polynomial over $\mathbb{F}_q$ . By Lemma \ref{lem: linear eval impossibility - transcript collision}, there exist a pair of linear functions $f, \ell \in \mathbb{F}_q[x]$ such that
    \begin{equation}
        \mathbf{x} \defeq \left( \mathrm{Tr}\left( \gamma_i f\left( \alpha_i \right) \right) \; : \; i \in [t]\right) = \left( \mathrm{Tr}\left( \gamma_i \ell\left( \alpha_i \right) \right) \; : \; i \in [t]\right) \in \mathbb{F}_p^t
    \end{equation}
    and $g_{0,1}(\mathbf{f}) \neq g_{0,1}(\boldsymbol\ell)$. At least one of $R(\mathbf{x}) \neq g_{0,1}(\mathbf{f})$ or $R(\mathbf{x}) \neq g_{0,1}(\boldsymbol\ell)$ must hold, implying that $R$ has failed to distinguish between their distinct coefficient products.
\end{proof}

Finally, the proof of Theorem \ref{thm: linear eval impossible} follows arguing that the case of $k > 2$ with $i,j \in [0,k-1]$, $i\neq j$ is at least as difficult as the case $k=2$, $i=0$, $j=1$.

\begin{proof}[Proof of Theorem \ref{thm: linear eval impossible}]
    The argument is extremely similar to that of \cref{thm: full-case bound}; we summarize its application to our current setting. As in the proof of \cref{thm: full-case bound}, we may assume that (up to local rescaling), each server indexed by some $\alpha \in \mathbb{F}_q^\ast$ holds an evaluation point of the form $\hat{f} = b + m \alpha^r$, which may be interpreted as the evaluation of a linear polynomial in $\beta \defeq \alpha^r$. If $x \mapsto x^r$ is an permutation on $\mathbb{F}_q^\ast$, then we are in the parameter regime of Lemma \ref{lemma: linear eval impossible k=2}; if not (i.e., $r$ divides $q-1$), then we have strictly fewer evaluation points in the set of linear evaluations, and any lower bound on schemes over full-length Reed Solomon codes must hold for any truncation thereof. Thus the lower bound holds in both cases, as desired.
\end{proof}

\bibliographystyle{alpha}
\bibliography{refs.bib}

\newpage
\appendix

\section{5-bit reconstruction for ``multiplicative'' Shamir secret sharing over GF(7)}\label{apx: LBR example}
Recall that Shamir secret sharing shards a secret $s \in \mathbb{F}_q$ by choosing $c_1, \ldots, c_{k-1} \in \mathbb{F}_q$ uniformly at random to construct the polynomial $h(x) \defeq s + \sum_{i=1}^{k-1} c_i x^i$. To each shareholder indexed by $\alpha \in \mathbb{F}_q$ is dealt a secret share $h(\alpha) \in \mathbb{F}_q$. Any $k$ shareholders can pool their shares and interpolate $h(x)$, recovering $s$ in the process.

We informally recall the notion of ``multiplicative'' Shamir secret sharing first mentioned in Section \ref{sec: related works}. The dealer wishes to shard a secret $s \in \mathbb{F}_q^\ast$, and chooses $c_0, c_1, \ldots, c_{k-1} \in \mathbb{F}_q^\ast$ uniformly at random subject to the constraint that their product is the secret - namely,
\begin{equation}
    \prod_{i=0}^{k-1} c_i = s.
\end{equation}
The dealer constructs the polynomial $h(x) \defeq \sum_{i=0}^{k-1} c_i x^i$, then distributes to each shareholder indexed by $\alpha \in \mathbb{F}_q^\ast$ a secret share $h(\alpha)$. 

Note that in both cases, the shareholders' shares are polynomial evaluations, but only in the former can the secret shares be viewed as a linear function of the secret; in the latter, the secret shares are non-linear in the secret.

We now consider a example of such a ``multiplicative'' secret sharing scheme over $\mathbb{F}_7^\ast$, where $k=2$. As the shares are now non-linear in the secret, we may wonder whether this implies one can do no better than what we might expect as the bandwidth incurred by naive interpolation, $2 \lceil \log_2(|\mathbb{F}_7^\ast|) \rceil = 6$ bits. We briefly show that this is not the case, and in fact we can deterministically recover the secret in 5 bits.

Explicitly, we fix the following leakage functions. Set
\begin{equation}
    T_0 = \lbrace 0, \pm 2 \rbrace, \quad T_1 = \lbrace 0, \pm 1 \rbrace, \quad T_2 = \lbrace 0, \pm 3 \rbrace, \quad T_3 = \lbrace 0, \pm 2 \rbrace, \quad T_4 = \lbrace 0, \pm 1 \rbrace
\end{equation}
and associate to each shareholder indexed by $i = 0, 1, 2, 3, 4$ the leakage function
\begin{equation}
    \lambda_i: \mathbb{F}_7 \to \mathbb{F}_2, \; x \mapsto \begin{cases}
        0 & x \in T_i\\
        1 & \text{else}.
    \end{cases}
\end{equation}
To each line $f(x) \in \mathbb{F}_7^\ast[x]$ with non-zero coefficients, we may then associate the leakage transcript
\begin{equation}
    \mathbf{t}_f \defeq \left( \lambda_0(f(0)), \lambda_1(f(1)), \ldots, \lambda_4(f(4)) \right) \in \mathbb{F}_2^5.
\end{equation}
It is easily verified (with a computer!) that given lines $f(x), \ell(x) \in \mathbb{F}_7^\ast[x]$, $\mathbf{t}_f = \mathbf{t}_\ell$ implies $g_{0,1}(\mathbf{f}) = g_{0,1}(\boldsymbol\ell)$, which implies that this collection of leakage functions yields a 5-bit secret reconstruction scheme for a $k=2$ instance of multiplicative Shamir over $\mathbb{F}_7^\ast$. 

\begin{figure}
    \centering
    \begin{equation}
\begin{array}{c|ccccccc}
 & B_0 & B_1 & B_2 & B_3 & B_4 & B_5 & B_6 \\
\hline
0 & \mathbb{F}_7 & \mathbb{F}_7^\ast & \mathbb{F}_7^\ast & \mathbb{F}_7^\ast & \mathbb{F}_7^\ast & \mathbb{F}_7^\ast & \mathbb{F}_7^\ast \\
1 & \mathbb{F}_7 & \pm 1, \pm 2 & \pm 1, \pm 3 & 0, \pm 3 & \pm 2, \pm 3 & 0, \pm 1 & 0, \pm 2 \\
2 & \mathbb{F}_7 & \pm 1, \pm 3 & \pm 2, \pm 3 & 0, \pm 2 & \pm 1, \pm 2 & 0, \pm 3 & 0, \pm 1 \\
3 & \mathbb{F}_7 & 0, \pm 3 & 0, \pm 2 & \pm 1, \pm 3 & 0, \pm 1 & \pm 1, \pm 2 & \pm 2, \pm 3 \\
4 & \mathbb{F}_7 & \pm 2, \pm 3 & \pm 1, \pm 2 & 0, \pm 1 & \pm 1, \pm 3 & 0, \pm 2 & 0, \pm 3 \\
5 & \mathbb{F}_7 & 0, \pm 1 & 0, \pm 3 & \pm 1, \pm 2 & 0, \pm 2 & \pm 2, \pm 3 & \pm 1, \pm 3 \\
6 & \mathbb{F}_7 & 0, \pm 2 & 0, \pm 1 & \pm 2, \pm 3 & 0, \pm 3 & \pm 1, \pm 3 & \pm 1, \pm 2 \\
\end{array}
\end{equation}
    \caption{Images of $B_\gamma \subseteq \mathbb{F}_7[x]$ at an evaluation point $\alpha \in \mathbb{F}_7$: given a row indexed by $\alpha \in \mathbb{F}_7$ and a column indexed by $B_\gamma$, the table entry at $(\alpha, B_\gamma)$ is $B_\gamma(\alpha)\subseteq \mathbb{F}_q$.}
    \label{fig: F7 leak table}
\end{figure}

Furthermore, it may be interesting to observe that even a \textit{single bit of leakage} is sufficient to learn some information about the secret. In Figure \ref{fig: F7 leak table}, we can see that it's easy to construct 1-bit leakage queries that eliminate at least one coefficient product from consideration. As a concrete example, suppose we query a server indexed by $1 \in \mathbb{F}_7$, who holds the codeword symbol $f(1)$ for some line $f(x) \in \mathbb{F}_7[x]$. We request of that server to evaluate the following leakage function on their codeword symbol:
\begin{equation}
    \lambda_1: \mathbb{F}_7 \to \mathbb{F}_2, \quad x \mapsto \begin{cases}
        0 & x \in \lbrace 0, \pm 1 \rbrace\\
        1 & \text{else}.
    \end{cases}
\end{equation}
If we receive a ``0'' in response, then we may immediately observe that $f(x) \not \in B_4$, since $B_4(1) = \lbrace \pm 2, \pm 3\rbrace$. On the other hand, if we receive a ``1'' in response, then we may immediately observe that $f(x) \not \in B_5$. Thus, in a \textit{strict} sense of information-theoretic security, we observe that ``multiplicative Shamir'' is not 1-bit leakage resilient over $\mathbb{F}_7$ when $k=2$. It may be interesting to study if or how this behavior scales outside the toy parameters of this example.

\end{document}